\newtheorem{theorem}{Theorem}
\newtheorem{definition}[theorem]{Definition}
\newtheorem{lemma}[theorem]{Lemma}
\newtheorem{corollary}[theorem]{Corollary}
\newtheorem{claim}[theorem]{Claim}
\newcommand{\ket}[1]{|#1\rangle}
\newcommand{\bra}[1]{\langle#1|}
\DeclareMathAlphabet{\matheu}{U}{eus}{m}{n}
\newcommand{\braket}[2]{\langle{#1}|{#2}\rangle}
\newcommand{\ketbra}[2]{|{#1}\rangle\!\langle{#2}|}
\newcommand{\eps}{\varepsilon}
\newcommand{\depth}{d}
\newcommand{\subf}{l}
\newcommand{\edgeL}{\lambda}
\newcommand{\Ohm}{c}
\newcommand{\norm}[1]{\left\| #1 \right\|}
\def\tO{\widetilde{\mathrm{O}}}
\title{Quantum Algorithms for Graph Connectivity\\ and Formula Evaluation}
\date{\today}
\author{Stacey Jeffery}
\email{jeffery@cwi.nl}
\thanks{SJ completed parts of this work while at the Institute for Quantum Information and Matter (IQIM), Caltech}
\affiliation{QuSoft and CWI, Amsterdam, the Netherlands}
\author{Shelby Kimmel}
\email{shelby.kimmel@gmail.com}
\thanks{SK completed parts of this work while at the Joint Center for Quantum Information and Computer Science (QuICS),
University of Maryland}
\affiliation{Middlebury College, Middlebury, VT, USA}
\begin{document}

\maketitle


\begin{abstract}
We give a new upper bound on the quantum query complexity of deciding $st$-connectivity on certain classes of planar graphs, and show the bound is sometimes  exponentially better than previous results. We then show Boolean formula evaluation reduces to deciding connectivity on just such a class of graphs. Applying the algorithm for $st$-connectivity to Boolean formula evaluation problems, we match the $O(\sqrt{N})$ bound on the quantum query complexity of evaluating formulas on $N$ variables, give a quadratic speed-up over the classical query complexity of a certain class of promise Boolean formulas, and show this approach can yield superpolynomial quantum/classical separations. These results indicate that this $st$-connectivity-based approach may be the ``right'' way of looking at quantum algorithms for formula evaluation.

\end{abstract}


\section{Introduction}

Deciding whether two points are connected in a network is a problem of
significant practical importance. In this work, we argue that this problem, $st$-connectivity, is also important as a quantum algorithmic
primitive. 

D\"urr, Heiligman, H{\o}yer, and Mhalla designed a quantum algorithm for deciding
$st$-connectivity that requires $O(|V|^{3/2})$ queries to the adjacency matrix of a
graph on vertex set $V$ \cite{durr2006quantum}. Belovs and Reichardt later discovered
an especially elegant span-program-based quantum algorithm for this problem, which is time-efficient and requires only logarithmic space \cite{BR12}. Belovs and
Reichardt's algorithm improves on the query complexity of D\"urr et al.'s
algorithm when the connecting path is promised to be short (if it exists).

Belovs and Reichardt's $st$-connectivity algorithm has already been adapted or been used as a subroutine for deciding other graph problems, such as detecting certain subgraphs \cite{BR12}, 
deciding whether a graph is a forest \cite{cade2016time}, and deciding whether a graph is bipartite \cite{cade2016time}. 

In this work, we modify the span program algorithm used in \cite{BR12},
inheriting its space and time efficiency, and we restrict to deciding $st$-connectivity on a class of planar graphs. If the effective resistances of the set
of graphs in question (and their planar duals) are small,
then we find the quantum algorithm requires far fewer queries than suggested by the analysis
in \cite{BR12}. In fact, we obtain a  polynomial to constant improvement in query
complexity for some classes of graphs.

In addition to improving our understanding of the quantum query complexity of
$st$-connectivity problems, we show that Boolean formula evaluation reduces (extremely naturally) to
$st$-connectivity problems of the kind for which our improved analysis holds. Therefore, finding good algorithms for $st$-connectivity can lead to
 good algorithms for Boolean formula evaluation. 
While one might not expect that such a reduction would produce good algorithms, we find the reduction gives optimal performance for certain classes of Boolean formulas.

Boolean formula evaluation is a fundamental class of problems with wide-reaching implications in algorithms and complexity theory. Quantum speed-ups
for evaluating formulas like \textsc{or} \cite{grover1996fast} and the
\textsc{nand}-tree \cite{FGG07} spurred interest in better understanding the
performance of quantum algorithms for Boolean formulas. This research
culminated in the development of span program algorithms
\cite{RS12,R01}, which can have optimal quantum query complexity for any problem \cite{lee2011quantum}. Using span program algorithms, it was shown that $O(\sqrt{N})$
queries are sufficient for any read-once formula with $N$ inputs
\cite{reichardt2010span,lee2011quantum}. Classically, the query complexity of
evaluating \textsc{nand}-trees is $\Theta(N^{.753})$ \cite{SW86} and the query
complexity of evaluating arbitrary read-once formulas is $\Omega(N^{.51})$
\cite{heiman1991randomized}.

While there are simple bounds on the quantum query
complexity of total formula evaluation problems, promise versions are still not fully understood.
Kimmel \cite{K11} showed that for a certain promise
version of \textsc{nand}-trees, called \emph{$k$-fault trees}, the quantum
query complexity is $O(2^k)$, while Zhan, Kimmel, and Hassidim \cite{ZKH12} showed the classical query complexity is
$\Omega((\log\frac{\log N}{k})^k)$, giving a \emph{superpolynomial} quantum
speed-up for a range of values of $k$. More general treatment of when promises
on the inputs give superpolynomial query speed-ups can be found in
\cite{aaronson2015sculpting}.

Since our analysis of $st$-connectivity shows that graphs with small effective resistance can be decided efficiently, this in turn means that Boolean formula evaluation problems with the promise that their inputs correspond to low resistance graphs can also be evaluated efficiently. This result gives us new insight into the structure of quantum speed-ups for promise Boolean formulas.

\vspace{.25cm}
\noindent{{\bf{Contributions.}} 
We summarize the main results in this paper as follows:
\begin{itemize}
\item Improved quantum query algorithm for deciding $st$-connectivity when the input is a subgraph of some graph $G$ such that $G\cup\{\{s,t\}\}$ --- $G$ with an additional $st$-edge --- is planar.
\begin{itemize}
\item The analysis involves the effective resistance of the original graph and its planar dual.
\item We find families of graphs for which this analysis gives  exponential and polynomial improvements, respectively, over the previous quantum analysis in \cite{BR12}.
\end{itemize}
\item Algorithm for Boolean formula evaluation via reduction to $st$-connectivity.
\begin{itemize}
\item Using this reduction, we provide a simple proof of the fact that read-once Boolean formulas with $N$ input variables can be evaluated using $O(\sqrt{N})$ queries.
\item We show both a quadratic and a superpolynomial quantum-to-classical speed-up using this reduction, for certain classes of promise Boolean formula evaluation problems.
\end{itemize}
\end{itemize}

\vspace{.25cm}
\noindent{\bf{Open Problems.}}
We would like to have better bounds on the classical
query complexity of evaluating $st$-connectivity problems, as this would provide a
new approach to finding separations between classical and quantum query
complexity. Additionally, our reduction from Boolean formula evaluation to $st$-connectivity could be helpful in the design of new classical algorithms for formulas.

Another open problem concerns span programs in general:
 when can we view span programs as solving $st$-connectivity
problems? This could be useful for understanding
when span programs are time-efficient, since the time-complexity
analysis of $st$-connectivity span programs is straightforward (see 
Appendix \ref{app:time},
\cite[Section 5.3]{BR12}, \cite[Appendix B]{IJ15}).

An important class of $st$-connectivity-related span programs are those arising
from the learning graph framework, which provides a means of designing quantum
algorithms that is much simpler and more intuitive than designing a general
span program \cite{Bel11}. A limitation of this framework is its one-sidedness
with respect to 1-certificates: whereas a learning graph algorithm is designed
to detect 1-certificates, a framework capable of giving optimal quantum query
algorithms for any decision problem would likely treat 0- and 1-inputs
symmetrically. In our analysis of $st$-connectivity, 1-inputs and 0-inputs are on equal footing. This duality between 1- and 0-inputs in $st$-connectivity
problems could give insights into how to extend the learning graph framework to
a more powerful framework, without losing its intuition and relative
simplicity.

\vspace{.25cm}
\noindent{\bf{Organization:}} Section \ref{sec:Preliminaries} provides background information. In Section \ref{sec:stconn}, we describe our improved analysis of the span program algorithm for $st$-connectivity for subgraphs of graphs $G$ such that $G\cup\{\{s,t\}\}$ is planar. In Section \ref{sec:nandgraphs}, we show that every formula evaluation problem is equivalent to an $st$-connectivity problem. In Section \ref{sec:nand}, we apply these results to promise \textsc{nand}-trees, for which we are able to prove the most significant classical/quantum separation using our approach. Also in Section \ref{sec:nand}, we use these ideas to create an improved algorithm for playing the two-player game associated with a \textsc{nand}-tree.

\section{Preliminaries}\label{sec:Preliminaries}

\subsection{Graph Theory}
 For an undirected weighted multigraph $G$, let
$V(G)$ and $E(G)$ denote the vertices and edges of $G$ respectively. In this work, we will only consider undirected multigraphs, which we will henceforth often refer to as {\it graphs}. To refer to an edge in a multigraph, we will specify the endpoints, as well as a label $\edgeL$, so that an edge is written $(\{u,v\},\edgeL)$. Although the label $\edgeL$ will be assumed to uniquely specify the edge, we include the endpoints for convenience. 
Let $\overrightarrow{E}(G)=\{(u,v,\edgeL ):(\{u,v\},\edgeL )\in E(G)\}$ denote the set of directed edges of $G$. 
For a planar graph $G$ (with an implicit planar embedding) let $F(G)$ denote the faces of $G$. We call the infinite face of a planar graph the \emph{external face}.

For any graph $G$ with connected vertices $s$ and $t$, we can imagine a fluid flowing into $G$ at $s$, and traveling through the graph along its edges, until it all finally exits at $t$. The fluid will spread out along some number of the possible $st$-paths in $G$. Such a linear combination of $st$-paths is called an \emph{$st$-flow}.
More precisely:
\begin{definition}[Unit $st$-flow]
Let $G$ be an undirected weighted graph with $s,t\in V(G)$, and $s$ and $t$ connected. Then a \emph{unit $st$-flow} on $G$ is a function $\theta:\overrightarrow{E}(G)\rightarrow\mathbb{R}$ such that:
\begin{enumerate}
\item For all $(u,v,\lambda)\in \overrightarrow{E}(G)$, $\theta(u,v,\edgeL )=-\theta(v,u,\edgeL)$;
\item $\sum_{v,\lambda:(s,v,\lambda)\in \overrightarrow{E}}\theta(s,v,\lambda)=\sum_{v,\lambda:(v,t,\lambda)\in \overrightarrow{E}}\theta(v,t,\lambda)=1$; and 
\item for all $u\in V(G)\setminus\{s,t\}$, $\sum_{v,\lambda:(u,v,\lambda )\in \overrightarrow{E}}\theta(u,v,\lambda )=0$. 
\end{enumerate}
\end{definition}

\begin{definition}[Unit Flow Energy]
Given a unit $st$-flow $\theta$ on a graph $G$, the \emph{unit flow energy} is 
\begin{align}
J(\theta)=\sum_{(\{u,v\},\edgeL )\in {E(G)}}{\theta(u,v,\edgeL )^2}.
\end{align}

\end{definition}
\begin{definition}[Effective resistance] Let $G$ be a graph with $s,t\in V(G)$.
If $s$ and $t$ are connected in $G$, the \emph{effective resistance} is $R_{s,t}(G) = \min_\theta J(\theta)$, where $\theta$ runs over all unit $st$-flows. If $s$ and $t$ are not connected, $R_{s,t}(G)=\infty.$ 
\end{definition}

Intuitively, $R_{s,t}(G)$ characterizes ``how connected'' the vertices $s$ and $t$ are. The more, shorter paths connecting $s$ and $t$, the smaller the effective resistance. 
 
The effective resistance has many applications. 
In a random walk on $G$, $R_{s,t}(G)|E(G)|$ is equal to the \emph{commute time} between $s$ and $t$, or the expected time a random walker starting from $s$ takes to reach $t$ and then return to $s$ \cite{CRRST96,aldous2002reversible}. 
If $G$ models an electrical network in which each edge $e$ of $G$ is a unit resistor and a potential difference is applied between $s$ and $t$, then $R_{s,t}(G)$ corresponds to the resistance of the network, which determines the ratio of current to voltage in the circuit (see \cite{DS84}). We can extend these connections further by considering weighted edges. 
A \emph{network} consists of a graph $G$ combined with a positive real-valued \emph{weight} function $\Ohm:E(G)\rightarrow\mathbb{R}^+$. 
\begin{definition}[Effective Resistance with weights]
Let ${\cal N}=(G,\Ohm)$ be a network with $s,t\in V(G)$. The \emph{effective resistance of $\cal N$} is $R_{s,t}({\cal N})=\min_{\theta}\sum_{(\{u,v\},\lambda)\in E(G)}\frac{\theta(u,v,\lambda)^2}{c(\{u,v\},\lambda)}$, where $\theta$ runs over all unit $st$-flows. 
\end{definition}
In a random walk on a network, which models any reversible Markov chain, a walker at vertex $u$ traverses edge $(\{u,v\},\lambda)$ with probability proportional to $\Ohm(\{u,v\},\lambda)$. Then the commute time between $s$ and $t$ is $R_{s,t}({\cal N})\sum_{e\in E(G)}\Ohm(e)$. When $\cal N$ models an electrical network in which each edge $e$ represents a resistor with resistance $1/c(e)$, then $R_{s,t}({\cal N})$ corresponds to the resistance of the network. 

 When $G$ is a single edge $e=(\{s,t\},\lambda)$ with weight $c(e)$, then the resistance $R_{s,t}(G)=1/c(e)$. 
When calculating effective resistance, $R_{s,t}$, we use the rule that for edges in series (i.e., a path), or more generally, graphs connected in series, resistances add. Edges in parallel, or more generally, graphs connected in parallel, follow the rule that conductances in parallel add, where the \emph{conductance} of a graph is given by one over the resistance. (The conductance of an edge $e$ is equal to $c(e),$ the weight of the edge.) More precisely, it is easy to verify the following:

\begin{claim}\label{claim:parallel_series}
Let two networks ${\cal N}_1=(G_1,c_1)$ and ${\cal N}_2=(G_2,c_2)$ each have nodes $s$ and $t$. If we create a new graph $G$ by identifying the $s$ nodes and the $t$ nodes (i.e. connecting the graphs in parallel) and define $c:E(G)\rightarrow\mathbb{R}^+$ by $c(e)=c_1(e)$ if $e\in E(G_1)$ and $c(e)=c_1(e)$ if $e\in E(G_2)$, then
\begin{align}
R_{s,t}(G,c)=\left(\frac{1}{R_{s,t}(G_1,c_1)}+\frac{1}{R_{s,t}(G_2,c_2)}\right)^{-1}.\label{eq:parallel}
\end{align}
However, if we create a new graph $G$ by identifying the $t$ node of $G_1$ with the $s$ node of $G_2$, relabeling this node $v\not\in\{s,t\}$ (i.e. connecting the graphs in series) and define $c$ as before, then
\begin{align}
R_{s,t}(G,c)=R_{s,t}(G_1,c_1)+R_{s,t}(G_2,c_2).\label{eq:series}
\end{align}
\end{claim}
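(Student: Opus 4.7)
The plan is to argue both identities directly from the variational definition $R_{s,t}=\min_\theta\sum_{e}\theta(e)^2/c(e)$, by decomposing any unit $st$-flow on the combined network $G$ into its restrictions to the edge sets $E(G_1)$ and $E(G_2)$, and then optimizing. The key observation is that because $G_1$ and $G_2$ share only boundary vertices in each construction, the restrictions inherit a clean flow structure, and the energies add.

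For the parallel identity \eqref{eq:parallel}, I would let $\theta$ be any unit $st$-flow on $G$ and set $\theta_i=\theta|_{\overrightarrow{E}(G_i)}$. Since the internal vertices of each $G_i$ remain internal in $G$, Kirchhoff conservation forces $\theta_1$ and $\theta_2$ to be flows on $G_1$ and $G_2$ with net outflow at $s$ equal to some $\alpha$ and $1-\alpha$ respectively (the two values must sum to $1$ by conservation at $s$ in $G$). If $\alpha\in(0,1)$, rescaling gives unit $st$-flows $\theta_1/\alpha$ and $\theta_2/(1-\alpha)$ on the two subnetworks, so
\[
\sum_{e\in E(G)}\frac{\theta(e)^2}{c(e)} \;=\; \alpha^2\sum_{e\in E(G_1)}\frac{(\theta_1(e)/\alpha)^2}{c_1(e)} \;+\; (1-\alpha)^2\sum_{e\in E(G_2)}\frac{(\theta_2(e)/(1-\alpha))^2}{c_2(e)} \;\ge\; \alpha^2 R_1+(1-\alpha)^2 R_2,
\]
writing $R_i := R_{s,t}(G_i,c_i)$. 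Boundary values $\alpha\in\{0,1\}$ can be handled separately: the ``loop part'' on the idle side contributes non-negative energy, and the active side must carry a unit flow, so the bound still holds. Minimising the quadratic over $\alpha\in\mathbb{R}$ gives $\alpha^\star=R_2/(R_1+R_2)$ and value $R_1R_2/(R_1+R_2)$, matching the right-hand side of \eqref{eq:parallel}. The matching upper bound is constructive: taking optimal unit flows $\theta_i^\star$ on each $G_i$ and forming $\alpha^\star\theta_1^\star\cup(1-\alpha^\star)\theta_2^\star$ is a valid unit $st$-flow on $G$ attaining this energy.

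For the series identity \eqref{eq:series}, I would use that the identified vertex $v$ is a cut vertex separating $s$ from $t$ in $G$. Hence any unit $st$-flow must send exactly one unit of net flow across $v$; concretely, $\theta_1:=\theta|_{\overrightarrow{E}(G_1)}$ is a unit $sv$-flow on $G_1$ and $\theta_2:=\theta|_{\overrightarrow{E}(G_2)}$ is a unit $vt$-flow on $G_2$. Because $E(G_1)$ and $E(G_2)$ are disjoint, the energy splits as $J(\theta)=\sum_{e\in E(G_1)}\theta_1(e)^2/c_1(e)+\sum_{e\in E(G_2)}\theta_2(e)^2/c_2(e)\ge R_{s,v}(G_1,c_1)+R_{v,t}(G_2,c_2)$. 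Combining the two independent optimisers yields a flow on $G$ achieving equality, establishing \eqref{eq:series}.

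The only genuinely fiddly step is the flow-decomposition bookkeeping in the parallel case: verifying that $\theta_1$ and $\theta_2$ really are flows on the respective subnetworks with opposite net flow values summing to $1$, and handling the degenerate $\alpha\in\{0,1\}$ cases without dividing by zero. Once that is nailed down, both claims reduce to a one-variable optimisation and a direct sum of energies, respectively, and no deep machinery is required.
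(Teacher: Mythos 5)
The paper does not actually supply a proof of Claim~\ref{claim:parallel_series}: it is introduced with ``it is easy to verify the following'' and left as a standard fact about effective resistance. So there is nothing in the paper to compare against at the level of a written argument.

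That said, your proof is correct and is exactly the natural argument one would give directly from the paper's variational definition $R_{s,t}(\mathcal{N})=\min_\theta\sum_e\theta(e)^2/c(e)$. In the parallel case the restriction of any unit $st$-flow on $G$ to $E(G_i)$ is an $st$-flow of some value $\alpha_i$ on $G_i$ (conservation at interior vertices is inherited because the two graphs share only $s$ and $t$), the values sum to $1$, the energies add because the edge sets are disjoint, and scaling gives $J(\theta)\ge\alpha^2R_1+(1-\alpha)^2R_2$; your one-variable minimisation and the matching construction then close the argument, with the $\alpha\in\{0,1\}$ degeneracy handled as you note. In the series case the identified vertex $v$ is a cut vertex, so the restrictions are a unit $sv$-flow and a unit $vt$-flow, energies add, and the two independent optima paste together. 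The only things I would flag are cosmetic: you should observe that the minimiser $\alpha^\star=R_2/(R_1+R_2)$ automatically lies in $(0,1)$ when both $R_i$ are finite and positive, and that the infinite-resistance cases (one side disconnected) fall out of the same inequality by forcing the corresponding $\alpha$ to be $0$. Neither is a gap, just worth saying.
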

\noindent As a bit of foreshadowing, if we let $R_{s,t}(G_1,c_1)$ and $R_{s,t}(G_2,c_2)$ take values $0$, representing $\textsc{false}$, or $\infty$,  representing $\textsc{true}$, then clearly \eqref{eq:series} computes the function $\textsc{or}$, since $0+0=0$, and $0+\infty=\infty+0=\infty+\infty=\infty$. We also have that \eqref{eq:parallel} computes the \textsc{and} function, if we use $\frac{1}{0}=\infty$ and $\frac{1}{\infty}=0$. 

\begin{definition}[$st$-cut]
Given a graph $G$ with $s,t\in V(G)$, if $s$ and $t$ are not connected, an \emph{$st$-cut} is a function 
$\kappa:V(G)\rightarrow\{0,1\}$ such that $\kappa(s)=1$, $\kappa(t)=0$, 
and $\kappa(v)- \kappa(u)=0$ whenever $\{u,v\}\in E(G)$.
\end{definition}
\noindent In other words, 
$\kappa$ defines a subset $S\subset V(G)$ such that $s\in S$, $t\not\in S$, and there is no edge of $G$
with one endpoint in $S$, and one endpoint in $\overline{S}$. An $st$-cut is a witness that $s$ and $t$ are in different components of $G$, so no path exists between $s$ and $t.$

Finally, we consider dual graphs:
\begin{definition}[Dual Graph]
Let $G$ be a planar graph (with an implicit embedding). The \emph{dual} graph, $G^\dagger$, is
defined as follows. For every face $f\in F(G)$, $G^\dagger$ has a
vertex $v_f$, and any two vertices are adjacent if their corresponding
faces share an edge, $e$. We call the edge between two such vertices the
\emph{dual edge} to $e$, $e^\dagger$. By convention, $e$ and $e^\dagger$ will always have the same label, so that if $e=(\{u,v\},\lambda)$, then $e^\dagger=(\{v_f,v_{f'}\},\lambda)$ for $f$ and $f'$ the faces of $G$ on either side of the edge $e$. 
\end{definition}

\subsection{Span Programs and Quantum Query Algorithms} 
Span programs \cite{KW93} were first introduced to the study of
quantum algorithms by Reichardt and \v{S}palek \cite{RS12}. They have
since proven to be immensely important for designing quantum
algorithms in the query model.

\begin{definition}[Span Program]\label{def:span}
A span program $P=(H,U,\tau,A)$ on $\{0,1\}^N$ is made up of 
{\bf{(I)}} finite-dimensional inner product spaces $H=H_1\oplus \dots \oplus H_N$, and $\{H_{j,b}\subseteq H_j\}_{j\in [N],b\in \{0,1\}}$ such that $H_{j,0}+H_{j,1}=H_j$, {\bf{(II)}} a vector space $U$, {\bf{(III)}} a non-zero \emph{target vector} $\tau\in U$, and {\bf{(IV)}} a linear operator $A:H\rightarrow U$.
For every string $x\in \{0,1\}^N$, we associate the subspace $H(x):=H_{1,x_1}\oplus \dots\oplus H_{N,x_N}$, and an operator $A(x):=A\Pi_{H(x)}$, where $\Pi_{H(x)}$ is the orthogonal projector onto~$H(x)$. 
\end{definition}


\begin{definition}[Positive and Negative Witness]\label{def:posNegWit}
Let $P$ be a span program on $\{0,1\}^N$ and let $x$ be a string $x\in \{0,1\}^N$.
Then we call $\ket{w}$ a \emph{positive witness for $x$ in $P$} if
$\ket{w}\in H(x)$, and $A\ket{w}=\tau$. We define the \emph{positive
witness size of $x$} as:
\begin{equation}
w_+(x,P)=w_+(x)=\min\{\norm{\ket{w}}^2: \ket{w}\in H(x),A\ket{w}=\tau\},
\end{equation}
if there exists a positive witness for $x$, and $w_+(x)=\infty$ otherwise.
Let $\mathcal{L}(U,\mathbb{R})$ denote the set of linear maps from $U$ to $\mathbb{R}.$ We call a linear map $\omega\in\mathcal{L}(U,\mathbb{R})$ a \emph{negative
witness for $x$ in $P$} if $\omega A\Pi_{H(x)} = 0$ and $\omega\tau =
1$. We define the \emph{negative witness size of $x$} as:
\begin{equation}
w_-(x,P)=w_-(x)=\min\{\norm{\omega A}^2:{\omega\in \mathcal{L}(U,\mathbb{R}), 
\omega A\Pi_{H(x)}=0, \omega\tau=1}\},
\end{equation}
if there exists a negative witness, and $w_-(x)=\infty$ otherwise.
If $w_+(x)$ is finite, we say that $x$ is \emph{positive} (wrt.\ $P$), 
and if $w_-(x)$ is finite, we say that $x$ is \emph{negative}. We let 
$P_1$ denote the set of positive inputs, and $P_0$ the set of negative 
inputs for $P$.  In this way, the span program defines a partition $(P_0,P_1)$ of $[N]$.
\end{definition}

For a function $f:X\rightarrow \{0,1\}$, with $X\subseteq\{0,1\}^N$, we say
 $P$ \emph{decides} $f$ if $f^{-1}(0)\subseteq P_0$ and
$f^{-1}(1)\subseteq P_1$. We can use $P$ to design a
quantum query algorithm that decides $f$, 
given access to the input $x\in X$ via queries of the form
$\mathcal{O}_x:\ket{i,b}\mapsto \ket{i,b\oplus x_i}$.

\begin{theorem}[\cite{Rei09}]\label{thm:span-decision}
Fix $X\subseteq\{0,1\}^N$ and $f:X\rightarrow\{0,1\}$, and let $P$ be a span program on $\{0,1\}^N$ that decides $f$. 
Let $W_+(f,P)=\max_{x\in f^{-1}(1)}w_+(x,P)$ 
and $W_-(f,P)=\max_{x\in f^{-1}(0)}w_-(x,P)$. 
Then there is a bounded error quantum algorithm that decides $f$ with quantum query complexity $O(\sqrt{W_+(f,P)W_-(f,P)})$.
\end{theorem}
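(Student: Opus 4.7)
The plan is to prove this via the standard span-program-to-quantum-walk reduction, implementing phase estimation on a unitary built from reflections determined by the span program structure. First I would set up the relevant spaces: let $\Lambda = \ker(A)$, and for each input $x$ let $\Pi_{H(x)}$ be the projector onto $H(x)$. Define the ``kernel'' subspace $K(x) = H(x) \cap \ker(A)$, which contains exactly those $\ket{w} \in H(x)$ with $A\ket{w}=0$. The quantum walk will live on a space containing $H \oplus \mathrm{span}\{\ket{\hat\tau}\}$ (where $\ket{\hat\tau}$ is an auxiliary vector encoding $\tau$) and be defined as a product of two reflections: $R_A = 2\Pi_{\ker\tilde A} - I$, where $\tilde A$ is a slight modification of $A$ that sends $\ket{\hat\tau}$ to $-\tau$, and $R_{H(x)} = 2(\Pi_{H(x)} + \ket{\hat\tau}\bra{\hat\tau}) - I$. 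The unitary is $U(x) = R_A R_{H(x)}$.

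Next I would establish the two spectral facts that drive the algorithm. For $x \in f^{-1}(1)$, any positive witness $\ket{w}$ with $A\ket{w}=\tau$ yields a vector $\ket{w} - \ket{\hat\tau} \in \ker \tilde A \cap (H(x) + \mathrm{span}\ket{\hat\tau})$, which is a $+1$ eigenvector of $U(x)$ having nontrivial overlap with the initial state $\ket{\hat\tau}$; the overlap can be made $\Omega(1/\sqrt{w_+(x)})$. For $x \in f^{-1}(0)$, I would use the negative witness $\omega$ to construct a dual object showing that $\ket{\hat\tau}$ is far from every $+1$ eigenvector: concretely, one shows that the projection of $\ket{\hat\tau}$ onto eigenvectors of $U(x)$ with eigenphases in a window of size $\Theta(1/\sqrt{w_-(x)})$ around $0$ is small. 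This is the effective spectral gap lemma, and it is the main technical obstacle: it requires relating $\omega$ to a vector orthogonal to the relevant invariant subspaces and bounding a Jordan-block perturbation argument carefully.

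Given both facts, the algorithm is phase estimation of $U(x)$ applied to $\ket{\hat\tau}$ with precision $\delta = \Theta(1/\sqrt{W_+(f,P) W_-(f,P)})$. On a $1$-input the estimated phase is $0$ with $\Omega(1)$ probability (by the $+1$-eigenvector overlap and boosting); on a $0$-input it is $0$ with probability bounded away from that value (by the effective spectral gap, which excludes small-phase mass near $\ket{\hat\tau}$). The query cost is dominated by implementing $U(x)$: the reflection $R_A$ is input-independent, while $R_{H(x)}$ requires two queries to $x$ (one to compute $x_i$ into an ancilla to select the correct $H_{j,b}$, one to uncompute). Phase estimation to precision $\delta$ needs $O(1/\delta) = O(\sqrt{W_+(f,P)W_-(f,P)})$ applications of $U(x)$, yielding the claimed bound.

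The hardest step will be the quantitative effective spectral gap argument on negative inputs, since the positive case reduces to exhibiting an explicit $+1$ eigenvector, whereas the negative case must rule out near-$+1$ eigenvectors having large overlap with $\ket{\hat\tau}$. I would handle it by decomposing the relevant two-dimensional invariant subspaces in the Jordan form of $R_A R_{H(x)}$ and using $\omega$ to produce a vector in $(\mathrm{range}\,\tilde A|_{H(x)+\ket{\hat\tau}})^\perp$ with controlled inner product against $\tau$, from which the standard quantitative bound $\|\Pi_{|\mathrm{phase}|<\delta}\ket{\hat\tau}\|^2 \le \delta^2 w_-(x)$ follows.
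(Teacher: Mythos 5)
The paper does not prove this theorem; it is cited directly from \cite{Rei09}, and the only related material in the paper is the brief algorithm sketch in Appendix~\ref{app:time}, which (following the conventions of \cite{IJ15}) performs phase estimation on $(2\Pi_{H(x)}-I)(2\Pi_{\ker A}-I)$ with initial state $A^+\tau$. Your outline is correct and is essentially the standard Reichardt argument: phase estimation on a product of two reflections, a positive witness providing a $+1$ eigenvector overlapping the initial state with amplitude $\Omega(1/\sqrt{w_+(x)})$, and the effective spectral gap lemma (derived from the negative witness) controlling the low-phase mass on $0$-inputs. The one genuine difference from the paper's stated conventions is cosmetic: you adjoin an auxiliary vector $\ket{\hat\tau}$ and pass to a modified map $\tilde A$, which is Reichardt's original packaging, whereas \cite{IJ15} works directly in $H$ with $A^+\tau$ as the initial state and no auxiliary register. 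These are equivalent formulations. One small slip worth fixing: with $\tilde A\ket{\hat\tau}=-\tau$, the kernel vector built from a positive witness $\ket{w}$ should be $\ket{w}+\ket{\hat\tau}$, not $\ket{w}-\ket{\hat\tau}$ (or, equivalently, set $\tilde A\ket{\hat\tau}=\tau$ and keep your minus sign). You also correctly flag the effective spectral gap lemma as the real technical content on $0$-inputs; that is indeed where the work is in Reichardt's proof.
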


\subsection{Boolean Formulas}\label{sec:PreliminariesFormulas} A read-once Boolean formula can be expressed as a rooted tree in which the leaves are uniquely labeled by variables, $x_1,\dots,x_N$, and the internal nodes are labeled by gates from the set $\{\wedge,\vee,\neg\}$. Specifically, a node of degree 2 must be labeled by $\neg$ (\textsc{not}), whereas higher degree nodes are labeled by $\wedge$ (\textsc{and}) or $\vee$ (\textsc{or}), with the \emph{fan-in} of the gate being defined as the number of children. The \emph{depth} of a Boolean formula is the largest distance from the root to a leaf. We define an \textsc{and}-\textsc{or} \emph{formula} (also called a \emph{monotone formula}) as a read-once Boolean formula for which every internal node is labeled by $\wedge$ or $\vee$. Restricting to \textsc{and}-\textsc{or} formulas does not lose much generality, since for any formula, there is an equivalent formula in which all $\textsc{not}$-gates are at distance one from a leaf, and such $\textsc{not}$ gates
do not affect the query complexity of the formula. 
Moreover, although we only consider read-once formulas here, our techniques can be applied to more general formulas in which a single variable may label multiple leaves, since this is equivalent to a larger read-once formula with a promise on the input. Hereafter, when we refer to a \emph{formula}, we will mean an \emph{\textsc{and}-\textsc{or} read-once formula}.

In a slight abuse of notation, at times $x_i$ will denote a Boolean variable, and at times, it will denote a bit instantiating that variable. If $x\in\{0,1\}^N$ is an instantiation of all variables labeling the leaves of a formula $\phi$, then $\phi(x)$ is the value of $\phi$ on that input, defined as follows. If $\phi=x_i$ has depth 0, then $\phi(x)=x_i$. If $\phi$ has depth greater than 0, we can express $\phi$ recursively in terms of subformulas $\phi_1,\dots,\phi_\subf$, as $\phi=\phi_1\wedge\dots\wedge\phi_\subf$, if the root is labeled by $\wedge$, or $\phi=\phi_1\vee\dots\vee\phi_\subf$, if the root is labeled by $\vee$. In the former case, we define $\phi(x)=\phi_1(x)\wedge\dots\wedge\phi_\subf(x)$, and in the latter case, we define $\phi(x)=\phi_1(x)\vee \dots\vee\phi_\subf(x)$. 
A family of formulas $\phi=\phi_N$ on $N$ variables gives rise to an evaluation problem, $\textsc{Eval}_\phi$, in which the input is a string $x\in \{0,1\}^N$, and the output is $\phi_N(x)$. If $\phi(x)=0$, we say $x$ is a $0$-instance, and if $\phi(x)=1$, $x$ is a 1-instance. 
By $\phi_1\circ \phi_2$, we mean $\phi_1$ composed with $\phi_2$. That is, if $\phi_1:\{0,1\}^{N_1}\rightarrow\{0,1\}$ and $\phi_2:\{0,1\}^{N_2}\rightarrow\{0,1\}$, then $\phi_1\circ\phi_2:\{0,1\}^{N_1N_2}\rightarrow\{0,1\}$ evaluates as $\phi_1\circ\phi_2(x)=\phi_1(\phi_2(x^1),\dots,\phi_2(x^{N_1}))$, where $x=(x^1,\dots,x^{N_1})$ for $x^i\in \{0,1\}^{N_2}$.

An important formula evaluation problem is \textsc{nand}-tree evaluation. A \textsc{nand}-tree is a full binary tree
of arbitrary depth $\depth$ --- that is, every internal node has two children, and every leaf node is at distance $\depth$ from the root ---
in which an internal node is labeled by  $\vee$ if it is at even distance from the
leaves, or $\wedge$ if it is at odd
distance from the leaves. We use $\textsc{nand}_\depth$ to denote a \textsc{nand}-tree
of depth $\depth$.
While $\textsc{nand}_\depth$ is sometimes defined as a Boolean formula of \textsc{nand}-gates composed to
depth $\depth$, we will instead think of the formula as alternating
\textsc{and}-gates and \textsc{or}-gates --- when $\depth$ is even, these two characterizations are identical. An instance of \textsc{nand}$_d$ is a binary
string $x\in \{0,1\}^{N}$, where $N=2^{\depth}$.  For example, the formula $\textsc{nand}_2(x_1,x_2,x_3,x_4)
=(x_1\wedge x_2)\vee (x_3\wedge x_4)$ is a \textsc{nand}-tree of depth $2$.
$\textsc{nand}_0$ denotes the single-bit identity function. 

A \textsc{nand}$_d$ instance $x\in \{0,1\}^{2^d}$ can be associated with a two-player game on the rooted binary tree that represents \textsc{nand}$_d$,
where the leaves take the values $x_i$, as in Figure \ref{fig:nandtree}. The game starts at the root
node, which we call the current node. In each round of the game, as
long as the current node is not a leaf, if the current node is at
even (respectively odd) distance from the leaves, Player $A$ (resp.\ Player $B$) chooses one
of the current node's children to
become the current node. When the current node is a leaf, if the leaf
has value $1$, then Player $A$ wins, and if the leaf has value $0$, then
Player $B$ wins. The sequence of moves by the two players determines a path
from the root to a leaf.

A simple inductive argument shows that if $x$ is a 1-instance of \textsc{nand}-tree, then there exists a strategy by which Player
$A$ can always win, no matter what strategy $B$ employs; and if $x$ is a $0$-instance, there exists a strategy by which Player $B$ can always win. 
We say an input $x$ is $A$-winnable if it has value 1 and
$B$-winnable if it has value $0$.

\section{Improved Analysis of st-connectivity Algorithm}\label{sec:stconn}

In this section, we give an improved bound on the runtime of a quantum algorithm for $st$-connectivity on subgraphs of $G$, where $G\cup\{\{s,t\}\}$ is planar. 

Let $st$-\textsc{conn}$_{G,D}$ be a problem parameterized by a family of multigraphs $G$, which takes as input a string $x\in D$ where $ D\subseteq \{0,1\}^{E(G)}$. An input $x$ defines a subgraph $G(x)$ of $G$ by including the edge $e$ if and only if $x_e=1$. For all $x\in D$, $st$-\textsc{conn}$_{G,D}(x)=1$ if and only if there exists a path connecting $s$ and $t$ in $G(x)$. We write $st$-\textsc{conn}$_G$ when $D=\{0,1\}^{E(G)}$. A quantum algorithm for $st$-\textsc{conn}$_{G,D}$ accesses the input via queries to a standard quantum oracle $O_x$, defined $O_x\ket{e}\ket{b}=\ket{e}\ket{b\oplus x_e}$.

The authors of~\cite{BR12} present a quantum query algorithm for $st$-\textsc{conn}$_G$ when $G$ is a complete graph, which is easily extended to any multigraph $G$. We further generalize their algorithm to depend on some weight function $c:E(G)\rightarrow\mathbb{R}^+$ (a similar construction is also implicit in \cite{Bel11}). We call the following span program $P_{G,c}$:
\begin{align}
\forall e\in \overrightarrow{E}(G):\; H_{e,0}=\{0\},\quad H_{e,1}=\mathrm{span}\{\ket{e}\},\qquad\qquad\qquad H=\mathrm{span}\{\ket{e}:e\in \overrightarrow{E}(G)\}\label{eq:P}\qquad\nonumber\\
U= \mathrm{span}\{\ket{u}:u\in V(G)\},\; \tau=\ket{s}-\ket{t}, \; A=\sum_{(u,v,\edgeL )\in \overrightarrow{E}(G)}\sqrt{c(\{u,v\},\lambda)}(\ket{u}-\ket{v})\bra{u,v,\edgeL }.
\end{align}
For any choice of weight function $\Ohm$, this span program decides $st$-\textsc{conn}$_{G}$, but as we will soon see, the choice of $c$ may impact the complexity of the resulting algorithm. 

Using $P_{G,c}$ with $c(\{u,v\},\lambda)=1$ for all $(\{u,v\},\lambda)\in E(G)$, the authors of Ref.~\cite{BR12} show that the query complexity of evaluating $st$-\textsc{conn}$_{G,D}$ is
\begin{align}\label{eq:theirPreBound}
O\left(\sqrt{\max_{x\in D: s,t \textrm{ are connected}}R_{s,t}(G(x))\times|E(G)|}\right).
\end{align}
Their analysis was for the case where $G$ is a complete graph, but it is easily seen to apply to more general multigraphs $G$. In fact, it is straightforward to show that this bound can be improved to 
\begin{align}\label{eq:theirbound}
O\left(\sqrt{\max_{x\in D: s,t \textrm{ are connected}}R_{s,t}(G(x))\times\max_{x\in D: s,t \textrm{ are not connected}}\left(C_{s,t}(G(x))\right)}\right).
\end{align}
where
\begin{align}
C_{s,t}(G(x)) = \begin{cases}
\displaystyle\min_{\kappa:\kappa\textrm{ is an }st\textrm{-cut of }G(x)}\sum_{(\{u,v\},\lambda)\in E(G)}|\kappa(u)-\kappa(v)| &\textrm{ if $s$ and $t$ not connected}\\
\infty &\textrm{ otherwise. }
\end{cases}
\end{align}

In particular, when $G$ is a complete graph on vertex set $V$, with the promise that if an $st$-path exists, it is of length at most $k$, 
Eq.\ \eqref{eq:theirPreBound} gives a bound of $O\left(\sqrt{k}|V|\right).$
In the worst case, when $k=|V|$, the analysis of \cite{BR12} does not improve on the previous quantum algorithm of \cite{durr2006quantum}, which gives a bound of $O(|V|^{3/2}$).

In this paper, we consider in particular multigraphs that are planar even when an additional $st$-edge is added (equivalently, there exists a planar embedding in which $s$ and $t$ are on the same face), as in graph $G$ in Figure \ref{fig:dual2}. (In the case of
Figure \ref{fig:dual2}, $s$ and $t$ are both on the external face.) Given such
a graph $G$, we define three other related graphs, which we denote by
$\overline{G}$, $\overline{G}^\dagger$, and $G'$. 

We first define the graph $\overline{G}$, which is the same as $G$, but with an
extra edge labeled by $\emptyset$ connecting $s$ and $t.$ 
We then denote by $\overline{G}^\dagger$ the planar dual of $\overline{G}$.
Because every planar dual has one edge crossing each edge of the original
graph, there exists an edge that is dual to $(\{s,t\},\emptyset)$, also labeled by
$\emptyset$. We denote by $s'$ and $t'$ the two vertices at the
endpoints of $(\{s,t\},\emptyset)^\dagger = (\{s',t'\},\emptyset)$.
Finally, we denote by $G'$ the graph $\overline{G}^\dagger$ except with the edge $(\{s',t'\},\emptyset)$ removed. 

\begin{figure}[ht]
\centering
\begin{tikzpicture}[scale = .95]
\node at (0,0) {\begin{tikzpicture}[scale=.8]
\filldraw (0,1) circle (.1);
\filldraw (0,-1) circle (.1);
\filldraw (-.7,0) circle (.1);

\draw (0,1)--(0,-1);
\draw plot [smooth] coordinates{(0,1) (-.7,0) (0,-1)};
\draw plot [smooth] coordinates{(0,1) (.7,0) (0,-1)};

\node at (.4,1) {$s$};
\node at (.4,-1) {$t$};
\node at (-.7,.6) {$1$};
\node at (-.7,-.6) {$2$};
\node at (.3,0) {$3$};
\node at (1,0) {$4$};

\node at (0,-1.5) {$G$};
\end{tikzpicture}};

\node at (4,0) {\begin{tikzpicture}[scale=.8]
\filldraw (0,1) circle (.1);
\filldraw (0,-1) circle (.1);
\filldraw (-.7,0) circle (.1);

\draw (0,1)--(0,-1);
\draw plot [smooth] coordinates{(0,1) (-.7,0) (0,-1)};
\draw plot [smooth] coordinates{(0,1) (.7,0) (0,-1)};

\node at (.4,1) {$s$};
\node at (.4,-1) {$t$};
\node at (-.7,.6) {$1$};
\node at (-.7,-.6) {$2$};
\node at (.3,0) {$3$};
\node at (1,0) {$4$};

\draw[dashed] plot[smooth] coordinates {(0,1) (-1,1) (-1.5,0) (-1,-1) (0,-1)};

\node at (-1.7,0) {$\emptyset$};

\node at (0,-1.5) {$\overline{G}$};
\end{tikzpicture}};

\node at (8.5,0) {\begin{tikzpicture}[scale=.85]
\filldraw[gray] (0,1) circle (.1);
\filldraw[gray] (0,-.9) circle (.1);
\filldraw[gray] (-.7,0) circle (.1);

\draw[gray] (0,1)--(0,-.9);
\draw[gray] plot [smooth] coordinates{(0,1) (-.7,0) (0,-.9)};
\draw[gray] plot [smooth] coordinates{(0,1) (.7,0) (0,-.9)};

\node[white] at (0,1.25) {$s$};

\draw[gray] plot[smooth] coordinates {(0,1) (-1,1) (-1.5,0) (-1,-1) (0,-.9)};

\filldraw (-.3,0) circle (.1);
\filldraw (-1.1,0) circle (.1);
\filldraw (.3,0) circle (.1);
\filldraw (1,0) circle (.1);

\draw plot[smooth] coordinates {(-1.1,0) (-.7,.25) (-.3,0)};
\draw plot[smooth] coordinates {(-1.1,0) (-.7,-.25) (-.3,0)};
\draw (-.3,0) -- (1,0);
\draw plot[smooth] coordinates {(-1.1,0) (0,1.25) (1,0)};

\node at (-1.1,-.35) {$s'$};
\node at (1,-.35) {$t'$};

\node at (-1.7,0) {\color{gray}$\emptyset$};
\node at (.8,.9) {$\emptyset$};

\node at (0,-1.3) {$\overline{G}^\dagger$};
\end{tikzpicture}};

\node at (12,0) {\begin{tikzpicture}[scale=.8]
\node[white] at (0,1.25) {$s$};

\filldraw (-.3,0) circle (.1);
\filldraw (-1.1,0) circle (.1);
\filldraw (.3,0) circle (.1);
\filldraw (1,0) circle (.1);

\draw plot[smooth] coordinates {(-1.1,0) (-.7,.25) (-.3,0)};
\draw plot[smooth] coordinates {(-1.1,0) (-.7,-.25) (-.3,0)};
\draw (-.3,0) -- (1,0);

\node at (-1.35,0) {$s'$};
\node at (1.35,0) {$t'$};
\node at (-.7, .6) {$1$};
\node at (-.7,-.6) {$2$};
\node at (0,.25) {$3$};
\node at (.7,.25) {$4$};

\node at (0,-1.3) {${G}'$};
\end{tikzpicture}};

\end{tikzpicture}
\caption{Example of how to derive $\overline{G},$ $\overline{G}^\dagger,$ and $G'$ from a planar graph $G$ where $s$ and $t$ are on the same face. $\overline{G}$ is obtained from $G$ by adding an edge $(\{s,t\},\emptyset)$. $\overline{G}^\dagger$ is the planar dual of $\overline{G}$. (In the diagram labeled by $\overline{G}^\dagger$, $\overline{G}$ is the gray graph, while $\overline{G}^\dagger$ is black). $G'$ is obtained from $\overline{G}^\dagger$ by removing the edge $(\{s,t\},\emptyset)^\dagger$. Note that dual edges inherit their labels (in this case $1,2,3,4,\emptyset$) from the primal edge.} \label{fig:dual2}
\end{figure}
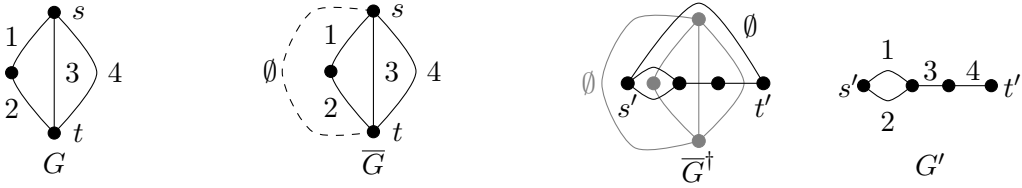

By construction, $G'$ will always have the same number of edges as $G$. Then as $x$ defines a subgraph $G(x)$ of $G$ by including the edge $e$ if and only if $x_e=1$, we let $G'(x)$ be the subgraph of $G'$ where we include the edge $e^\dagger$ if and only if $x_e=0$.

If there is no path from $s$ to $t$ in $G(x)$, there must be a cut between $s$ and $t.$ Note 
that for any $e\in E(G)$, $e\in E(G(x))$ if and only if 
$e^\dagger\not\in E(G'(x))$. 
Looking at Figure \ref{fig:dual2}, 
one can convince oneself that any $s't'$-path in $G'(x)$ defines an $st$-cut in $G(x)$: simply define $\kappa(v)=1$ for vertices \emph{above} the path, and $\kappa(v)=0$ for vertices \emph{below} the path. 

Let $c$ be a weight function on $E(G)$. Then we define a weight function $c'$ on $E(G')$ as $c'(e^\dagger) = 1/c(e).$  Then for every $x$ there will be a path either from $s$ to $t$ in $G(x)$ (and hence $R_{s,t}(G(x),c)$) will be finite), or a path from $s'$ to $t'$ in $G'(x)$ (in which case $R_{s',t'}(G'(x),c')$ will be finite).

We can now state our main lemma:

\begin{restatable}{lemma}{poswit}\label{lemma:both_witnesses}
Let $G$ be a planar multigraph with $s,t\in V(G)$ such that $G\cup\{\{s,t\}\}$ is also planar, and let $c$ be a weight function on $E(G)$. Let $x\in\{0,1\}^{E(G)}$.
Then $w_+(x,P_{G,c})=\frac{1}{2}R_{s,t}(G(x),c)$ and $w_-(x, P_{G,c})={2}R_{s',t'}(G'(x),c')$. {}
\end{restatable}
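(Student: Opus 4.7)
The plan is to translate each witness condition of the span program $P_{G,c}$ into a combinatorial optimization on $G(x)$ or $G'(x)$. For the positive witness, any $\ket{w}\in H(x)$ is forced by $H_{e,0}=\{0\}$ to be supported on directed edges of $G(x)$, so I write $\ket{w}=\sum_{(u,v,\lambda)\in\overrightarrow{E}(G(x))}w(u,v,\lambda)\ket{u,v,\lambda}$ and match coefficients in $A\ket{w}=\ket{s}-\ket{t}$ vertex by vertex. Setting $\theta(u,v,\lambda):=\sqrt{c(\{u,v\},\lambda)}\bigl(w(u,v,\lambda)-w(v,u,\lambda)\bigr)$ turns these linear equations precisely into the defining equations of a unit $st$-flow on $G(x)$, and conversely every unit $st$-flow arises this way. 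For each undirected edge, minimizing $w(u,v,\lambda)^2+w(v,u,\lambda)^2$ subject to $w(u,v,\lambda)-w(v,u,\lambda)=\theta(u,v,\lambda)/\sqrt{c(\{u,v\},\lambda)}$ yields $\theta(u,v,\lambda)^2/(2c(\{u,v\},\lambda))$, achieved at $w(u,v,\lambda)=-w(v,u,\lambda)$. Summing over edges and then minimizing over unit $st$-flows gives $w_+(x,P_{G,c})=\frac{1}{2}R_{s,t}(G(x),c)$.

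For the negative witness I view $\omega\in\mathcal{L}(U,\mathbb{R})$ as a function $\omega:V(G)\to\mathbb{R}$. The constraint $\omega A\Pi_{H(x)}=0$ forces $\omega(u)=\omega(v)$ for every $(u,v,\lambda)\in\overrightarrow{E}(G(x))$, so $\omega$ is constant on each connected component of $G(x)$, while $\omega\tau=1$ fixes $\omega(s)-\omega(t)=1$. Expanding the norm in the basis $\{\ket{u,v,\lambda}:(u,v,\lambda)\in\overrightarrow{E}(G)\}$ and using that the two directed versions of each undirected edge contribute equally gives
\[
\|\omega A\|^2 = 2\sum_{(\{u,v\},\lambda)\in E(G)\setminus E(G(x))}c(\{u,v\},\lambda)\bigl(\omega(u)-\omega(v)\bigr)^2,
\]
since the edges of $G(x)$ contribute zero.

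The heart of the argument, and the step I expect to be the main obstacle, is a planar-duality bijection between valid $\omega$'s (modulo an additive constant) and unit $s't'$-flows $\theta'$ on $G'(x)$. Given $\omega$, assign to each dual edge $e^\dagger=(\{v_f,v_{f'}\},\lambda)$ of $\overline{G}^\dagger$ the value $\theta'(v_f,v_{f'},\lambda):=\pm\bigl(\omega(u)-\omega(v)\bigr)$, with the sign fixed by a consistent orientation of the planar embedding of $\overline{G}$. Antisymmetry is automatic; conservation of $\theta'$ at any $v_f\notin\{s',t'\}$ follows because the signed $\omega$-differences telescope around the cycle bounding face $f$ of $\overline{G}$; the net efflux at $s'$ equals $\omega(s)-\omega(t)=1$ because $s'$ and $t'$ are joined in $\overline{G}^\dagger$ only via the dual of $(\{s,t\},\emptyset)$; and $\theta'$ vanishes on $e^\dagger$ whenever $e\in E(G(x))$, so its support lies in $E(G'(x))$. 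Conversely, a unit $s't'$-flow on $G'(x)$ reconstructs $\omega$ up to an additive constant by integrating $\theta'$ along paths in $\overline{G}$, with path-independence guaranteed by conservation of $\theta'$ at every vertex of $\overline{G}^\dagger$.

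Under this bijection, $c(e)\bigl(\omega(u)-\omega(v)\bigr)^2 = \theta'(e^\dagger)^2/c'(e^\dagger)$ because $c'(e^\dagger)=1/c(e)$, so
\[
\|\omega A\|^2 = 2\sum_{e^\dagger\in E(G'(x))}\frac{\theta'(e^\dagger)^2}{c'(e^\dagger)},
\]
and minimizing over unit $s't'$-flows gives $w_-(x,P_{G,c})=2R_{s',t'}(G'(x),c')$. The infinite cases are consistent: if $s$ and $t$ are connected in $G(x)$, then no valid $\omega$ exists, and by the same planar duality $s'$ and $t'$ are disconnected in $G'(x)$, so both sides equal $\infty$; a symmetric argument handles $w_+$.
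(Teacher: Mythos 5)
Your proof is correct and follows essentially the same strategy as the paper: a flow/witness correspondence for $w_+$ (the paper's Claim \ref{claim:pos-wit}) and a planar-duality correspondence between negative witnesses and unit $s't'$-flows for $w_-$ (the paper's Claim \ref{claim:neg-wit}). The tactical differences are minor and equivalent: your per-edge quadratic minimization nails $w_+=\tfrac12 R_{s,t}$ in one step whereas the paper proves the two inequalities separately (flow-to-witness construction, then Cauchy--Schwarz), and your ``integrate $\theta'$ along primal paths'' reconstruction of $\omega$ matches the paper's decomposition of the extended circulation into face cycles --- path-independence in your version implicitly uses that face cycles of $\overline{G}$ generate its cycle space, which the paper's face-cycle decomposition makes explicit.
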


Using Lemma \ref{lemma:both_witnesses} and Theorem \ref{thm:span-decision}, we immediately have the following:
\begin{theorem}\label{thm:stconn}
Let $G$ be a planar multigraph with $s,t\in V(G)$ such that $G\cup\{\{s,t\}\}$ is also planar. Then the bounded error quantum query complexity of evaluating $st$-\textsc{conn}$_{G,D}$ is
\begin{align}\label{eq:ourBound}
O\left(\min_c\sqrt{\max_{x\in D: st\textsc{-conn}_G(x)=1}R_{s,t}(G(x), c)\times\max_{x\in D: st\textsc{-conn}_G(x)=0}R_{s',t'}(G'(x),c')}\right)
\end{align}
where the minimization is over all positive real-valued functions $c$ on $E(G)$.
\end{theorem}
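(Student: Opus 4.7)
The plan is to observe that Theorem \ref{thm:stconn} is essentially an immediate consequence of Lemma \ref{lemma:both_witnesses} combined with Theorem \ref{thm:span-decision}, once we treat the weight function $c$ as a free design parameter. First, I would fix an arbitrary positive weight function $c:E(G)\to\mathbb{R}^+$ and consider the span program $P_{G,c}$ defined in \eqref{eq:P}. Since $P_{G,c}$ decides $st$-\textsc{conn}$_{G}$ (and hence decides $st$-\textsc{conn}$_{G,D}$ on any $D\subseteq\{0,1\}^{E(G)}$), Theorem~\ref{thm:span-decision} guarantees a bounded-error quantum query algorithm of cost $O(\sqrt{W_+(st\textsc{-conn}_{G,D},P_{G,c})\cdot W_-(st\textsc{-conn}_{G,D},P_{G,c})})$.

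Next, I would invoke Lemma \ref{lemma:both_witnesses} to rewrite the individual witness sizes in terms of effective resistances: for any positive input $x$ (i.e.\ $st\textsc{-conn}_G(x)=1$), $w_+(x,P_{G,c})=\tfrac{1}{2}R_{s,t}(G(x),c)$, and for any negative input $x$ (i.e.\ $st\textsc{-conn}_G(x)=0$), $w_-(x,P_{G,c})=2R_{s',t'}(G'(x),c')$, where $c'$ is the dual weight function $c'(e^\dagger)=1/c(e)$. Taking maxima over the corresponding subsets of $D$ gives
\begin{equation*}
W_+(st\textsc{-conn}_{G,D},P_{G,c}) = \tfrac{1}{2}\!\max_{x\in D:\,st\textsc{-conn}_G(x)=1}\!R_{s,t}(G(x),c),\qquad W_-(st\textsc{-conn}_{G,D},P_{G,c}) = 2\!\max_{x\in D:\,st\textsc{-conn}_G(x)=0}\!R_{s',t'}(G'(x),c').
\end{equation*}
The constants $\tfrac{1}{2}$ and $2$ multiply to $1$ inside the square root, so they are absorbed into the $O(\cdot)$ notation.

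Finally, because the above bound holds for \emph{every} choice of positive weight function $c$, the algorithm designer is free to pick the $c$ that minimizes the resulting complexity. Taking the infimum over $c$ therefore yields \eqref{eq:ourBound}. One should note that the minimum is attained (or can be approached arbitrarily closely) on the relevant quantity because rescaling $c$ by a positive constant does not change the ratio inside the square root, and because both $R_{s,t}(G(x),c)$ and $R_{s',t'}(G'(x),c')$ are continuous in $c$.

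There is essentially no significant obstacle here: Lemma \ref{lemma:both_witnesses} does all the nontrivial work of identifying the witness sizes with effective resistances in $G(x)$ and in the dual subgraph $G'(x)$, and Theorem \ref{thm:span-decision} supplies the generic span-program-to-algorithm reduction. The only thing to be careful about is the bookkeeping of the constants from Lemma \ref{lemma:both_witnesses} (noting that they cancel) and the fact that we may optimize $c$ outside the algorithm (since $c$ is part of the algorithm specification, not part of the input), which justifies the $\min_c$ appearing in \eqref{eq:ourBound}.
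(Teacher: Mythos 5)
Your proposal matches the paper's own reasoning exactly: the paper derives Theorem~\ref{thm:stconn} directly as an immediate consequence of Lemma~\ref{lemma:both_witnesses} and Theorem~\ref{thm:span-decision}, with the minimization over $c$ absorbed as a design choice. Your bookkeeping of the $\tfrac{1}{2}$ and $2$ factors canceling inside the square root is correct, and the overall argument is sound.
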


While it might be difficult in general to find the optimal edge weighting $c$, any choice of $c$ will at least give an upper bound on the query complexity. However, as we will see, sometimes the structure of the graph will allow us to efficiently find good weight functions.

The proof of Lemma \ref{lemma:both_witnesses} is in Appendix \ref{app:st-proofs}. The positive witness result follows from generalizing the proof in \cite{BR12} to weighted multigraphs. The idea is that an $st$-path witnesses that $s$ and $t$ are connected, as does any linear combination of such paths --- i.e. an $st$-flow. The effective resistance $R_{s,t}(G(x),c)$ characterizes the size of the smallest possible $st$-flow. 

Just as
a positive witness is some linear combination of $st$-paths, similarly, a negative witness turns out to be a linear combination of $st$-cuts in $G(x)$. But as we've argued, every $st$-cut corresponds to an $s't'$-path in $G'(x)$. Using the correspondence between cuts and paths, we have that a negative witness is a linear combination of $s't'$-paths in $G'(x)$. This allows us 
to show a correspondence between complexity-optimal negative witnesses 
and minimal $s't'$-flows, connecting $w_-(x,P_{G,c})$ to 
$R_{s',t'}(G'(x),c')$. 

In Appendix \ref{app:time}, we show that if a quantum walk step on the network $(G,c)$ can be implemented time efficiently, then this algorithm is not only query efficient, but also time efficient, with only $\frac{1}{\sqrt{\delta}}$ multiplicative overhead, where $\delta$ is the spectral gap of the symmetric normalized Laplacian. For example, if $G$ is a complete graph with unit weights, $\delta$ is constant. Let
\begin{align}
U_{G,c}:\ket{u}\ket{0}\mapsto \frac{1}{\sqrt{\sum_{v,\lambda:(u,v,\lambda)\in \overrightarrow{E}(G)}c(\{u,v\},\lambda)}}\sum_{v,\lambda:(u,v,\lambda)\in \overrightarrow{E}(G)}\sqrt{c(\{u,v\},\lambda)}\ket{u}\ket{u,v,\lambda}.
\end{align}
Then we show the following.\footnote{An earlier version of this work was missing the $1/\sqrt{\delta}$ term in the complexity, due to an error in the proof. We thank Arjan Cornelissen and Alvaro Piedrafita for finding this error and bringing it to our attention.}

\begin{restatable}{theorem}{timeComp}\label{thm:timeComp}
Let $P_{G,\Ohm}=(H,U,A,\tau)$ be defined as in \eqref{eq:P}. Let $S_{G,\Ohm}$ be an upper bound on the time complexity of implementing $U_{G,\Ohm}$, and $\delta$ the spectral gap of the symmetric normalized Laplacian of $(G,\Ohm)$. 
If $G$ has the property that $G\cup \{\{s,t\}\}$ is planar, then the time complexity of deciding $st$-\textsc{conn}$_{G,D}$ is at most
\begin{align}
\widetilde{O}\left(\min_\Ohm \frac{S_{G,\Ohm}}{\sqrt{\delta}}\sqrt{\max_{x\in D:s,t \textrm{ are connected}}R_{s,t}(G(x),\Ohm)\times \max_{x\in D:s,t\textrm{ are not connected}}R_{s',t'}(G'(x),\Ohm')}\right).
\end{align}
\end{restatable}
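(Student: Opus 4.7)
The plan is to reduce the proof to the general time-complexity machinery for span-program algorithms, showing that the expensive step of each iteration is exactly a constant-depth circuit built from $U_{G,c}$, its inverse, and one call to the input oracle $O_x$. Specifically, I would invoke the standard fact (as used in \cite[Section 5.3]{BR12} and \cite[Appendix B]{IJ15}) that the quantum algorithm underlying Theorem~\ref{thm:span-decision}, when applied to the span program $P_{G,c}$ in \eqref{eq:P}, takes the form of a phase estimation of a product of two reflections, one about $\mathrm{ker}\,A$ and one about $H(x)$, each of which is iterated $O(\sqrt{W_+(f,P)W_-(f,P)})$ times. The query complexity side is already supplied by Theorem~\ref{thm:stconn}, so the entire task is to bound the per-iteration cost.

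My first step would be to implement the reflection about $\mathrm{ker}\,A$ in time $O(S_{G,c})$. The operator $A$ can be written as a direct sum over vertices $u\in V(G)$ of the map sending $\ket{u,v,\lambda}\mapsto \sqrt{c(\{u,v\},\lambda)}(\ket{u}-\ket{v})$. Using $U_{G,c}$ one can prepare, for each vertex $u$, a uniform (weighted) superposition over edges incident to $u$; conjugating by $U_{G,c}$ and its inverse block-diagonalizes the local piece of $A$ into a one-dimensional direction (the prepared weighted superposition) plus its orthogonal complement, so the reflection about $\mathrm{ker}\,A$ becomes a diagonal controlled-phase in that basis and costs $O(S_{G,c})$ time. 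The reflection about $H(x)$ is even simpler: $H(x)$ is spanned by $\{\ket{e}:x_e=1\}$, so one oracle call $O_x$ and a single-qubit controlled phase suffice, costing $O(1)$ time plus one query.

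The final step is bookkeeping. Combining the two reflections, one iteration costs $O(S_{G,c}+1)=O(S_{G,c})$ time (we may assume $S_{G,c}=\Omega(1)$). The number of iterations is $O(\sqrt{W_+W_-})$, which by Lemma~\ref{lemma:both_witnesses} equals $O\bigl(\sqrt{R_{s,t}(G(x),c)\,R_{s',t'}(G'(x),c')}\bigr)$ up to constants. Taking the maxima over positive and negative instances in $D$ and then optimizing over $c$ yields exactly the claimed bound. Finally, since the argument works for any feasible $c$ and the planarity assumption enters only through the witness-size analysis of Lemma~\ref{lemma:both_witnesses} (which is already invoked), no extra hypothesis on $G$ beyond the one inherited from Theorem~\ref{thm:stconn} is needed.

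The main obstacle I anticipate is making the reflection about $\mathrm{ker}\,A$ precise: one must verify that $U_{G,c}$ really implements the local block-diagonalization across all vertices simultaneously (rather than one vertex at a time), and that the ``empty'' vertex registers do not cause normalization issues when $c$ is non-uniform. This is a careful but essentially mechanical construction of the standard quantum-walk-style reflection; handling the edge weights correctly is the only subtle point, and it is exactly what the definition of $U_{G,c}$ is tailored for.
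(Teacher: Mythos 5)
Your high-level plan matches the paper's: the algorithm is phase estimation on $(2\Pi_{H(x)}-I)(2\Pi_{\ker A}-I)$, the first reflection is trivial to implement (one oracle call plus a controlled phase), and the work is in showing $2\Pi_{\ker A}-I$ and the initial state can be handled in $O(S_{G,c})$ time. However, there are two concrete problems with how you carry this out.

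First, your claim that ``$A$ can be written as a direct sum over vertices $u\in V(G)$'' is false: $A\ket{u,v,\lambda}=\sqrt{c(\{u,v\},\lambda)}(\ket{u}-\ket{v})$, and the image vector lives in $\mathrm{span}\{\ket{u},\ket{v}\}$, so the map is not block-diagonal in any vertex-indexed decomposition of $H$. Consequently, ``conjugating by $U_{G,c}$ to block-diagonalize the local piece of $A$'' does not produce the reflection about $\ker A$ --- cycles of $G$ live in $\ker A$ and are global, not local to a vertex. The actual construction (as in the paper's Lemma~\ref{lem:time}) is Szegedy-style: introduce two isometries $M_Z$ (indexed by vertices, built from $U_{G,c}$ together with an auxiliary bit that splits $\ket{u,v,\lambda}$ from $\ket{v,u,\lambda}$) and $M_Y$ (indexed by directed edges, encoding the orientation sign), verify that $\ker(M_Z^\dagger M_Y)=\ker A$ because the two operators have proportional rows, and then prove the identity $2\Pi_{\ker A}-I = M_Y^\dagger(2\Pi_Z-I)(2\Pi_Y-I)M_Y$ by tracking the $\pm 1$ eigenspaces of the walk operator restricted to $Y$. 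This auxiliary-bit bookkeeping is precisely what lets a local walk operator implement a reflection about a global kernel; the ``conjugation'' story you sketch skips the step that makes it work.

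Second, you do not address preparing the initial state $\ket{w_0}=A^+\tau$ at all. This is a genuine part of the argument: the paper's Lemma~\ref{lem:initial-state} handles it by inserting a dummy edge $(\{s,t\},\emptyset)$ with a carefully chosen weight $1/R_{s,t}(G\setminus\{(\{s,t\},\emptyset)\},c)$, noting $\Pi_{(\ker A)^\bot}\ket{s,t,\emptyset}$ is proportional to $\ket{w_0}$, and then using constant-precision phase estimation plus amplitude amplification --- which costs another $O(S_{G,c})$. Without this, the claimed time bound is not established even granting the reflection implementation.
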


\noindent In Appendix \ref{app:time}, we also show that if the space complexity of implementing $U_{G,c}$ in time $S_{G,c}$ is $S_{G,c}'$, the algorithm referred to in Theorem \ref{thm:timeComp} has space complexity at most $O(\max\{\log|E(G)|,\log|V(G)|\}+S_{G,c}'+\log(1/\delta))$.

\subsection{Comparison to Previous Quantum Algorithm}

When $G\cup\{\{s,t\}\}$ is planar, our algorithm always matches or improves on the algorithm in \cite{BR12}. To see this, we compare Eqs. \eqref{eq:ourBound} and \eqref{eq:theirbound}, and  choose $\Ohm$ to have value $1$ on all edges of $G$. Then the first terms are the same in both bounds, so we only need to analyze the second term. However, using the duality between paths and cuts, we have
\begin{align}
C_{s,t}(G(x))=\left(\textrm{shortest path length from $s'$ to $t'$ in }G'(x)\right) \geq R_{s't'}(G'(x)).\label{eq:Cst}
\end{align} 
To obtain the inequality in Eq.\ \eqref{eq:Cst}, we create an $s't'$-flow on $G'(x)$
that has value one on edges on the shortest path from $s'$ to $t'$ and zero on all other edges. Such a flow
has unit flow energy equal to the shortest path. However, the true effective resistance
can only be smaller than this, because it is the minimum energy over all
possible $s't'$-flows.

We now present two simple examples where our algorithm and analysis do better than that of \cite{BR12}. In the first example, we highlight how the change from $C_{s,t}(G(x))$ to $R_{s',t'}(G(x))$ in the complexity gives us an advantage for some graphs. In the second example, we show that being able to choose a non-trivial weight function $\Ohm$ can give us an advantage for some graphs.

Let $G$ be an $st$-path of length $N$: i.e., $N+1$ vertices arranged in a line so that each vertex is connected to its neighbors to the left and right by a single edge, and $s$ and $t$ are the vertices on either end of the line, as in
Figure \ref{fig:exLine}. For some $h\in\{1,\dots,N\}$, let $D=\{1^N\}\cup\{x\in \{0,1\}^N: |{x}|\leq N-h\}$, where $1^N$ is the all-one string of length $N$, and $|x|$ is the hamming weight of the string $x$.

\begin{figure}[ht]
\centering
\begin{tikzpicture}[scale = .95]
\node at (0,0) {\begin{tikzpicture}[scale=.8]
\filldraw (-1,0) circle (.1);
\filldraw (0,0) circle (.1);
\filldraw (1,0) circle (.1);
\filldraw (2,0) circle (.1);
\filldraw (3,0) circle (.1);
\node at (2.5,0) {$\dots$};
\filldraw (4,0) circle (.1);
\draw (-1,0)--(2,0);
\draw (3,0)--(4,0);
\node at (2,-1.7) {$G$};
\node at (-1.2,.5) {$s$};
\node at (4.2,.5) {$t$};

\end{tikzpicture}};

\node at (6,.2) {\begin{tikzpicture}[scale=.8]
\filldraw (0,1) circle (.1);
\filldraw (0,-1) circle (.1);
\draw plot [smooth] coordinates{(0,1) (-.3,0) (0,-1)};
\draw plot [smooth] coordinates{(0,1) (-1.2,0) (0,-1)};
\draw plot [smooth] coordinates{(0,1) (.3,0) (0,-1)};
\draw plot [smooth] coordinates{(0,1) (1.2,0) (0,-1)};

\node at (.7,0) {$\dots$};

\node at (-.5,-1.5) {$G'$};
\node at (0,1.3) {$s'$};
\node at (.3,-1.2) {$t'$};
\end{tikzpicture}};

\end{tikzpicture}
\caption{Example of graph for which our analysis does better than the analysis of \cite{BR12}, even with $\Ohm=1$ for all edges, under the promise that $G'(x)$ always contains at least $h$ edges, if $s'$ and $t'$ are connected.}\label{fig:exLine}
\end{figure}
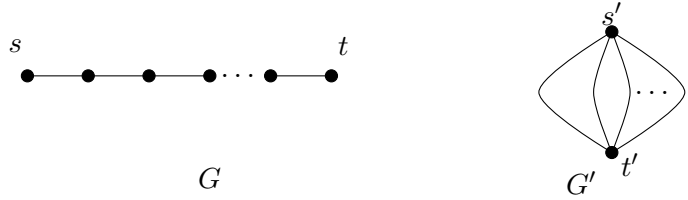

Then, choosing $\Ohm$ to have value $1$ on all edges of $G$, we have
\begin{align}
\max_{x\in D: st\textsc{-conn}_G(x)=1}R_{s,t}(G(x))= N
\end{align}
because the only $x\in D$ such that $s$ and $t$ are connected in $G(x)$ is $x=1^N$, in which case the only unit flow has value $1$ on each edge. This flow has energy $N$. However
\begin{align}
\max_{x\in D: st\textsc{-conn}_G(x)=0}R_{s',t'}(G'(x)) \leq 1/h,
\end{align}
because when $s$ and $t$ are not connected in $G(x)$, $G(x)$ has at most $N-h$ edges, so $G'(x)$ has at least $h$ edges. Thus we can define a unit flow with value $1/h$ on each of $h$ parallel edges in $G'(x)$, giving an energy of $1/h.$ On the other hand
\begin{align}
\max_{x\in D: st\textsc{-conn}_G(x)=0}C_{s,t}(G(x)) = 1.
\end{align}
In fact, since $C_{s,t}(G(x))$ counts the minimum number of edges $(\{u,v\},\lambda)$ across any cut (i.e. such that $\kappa(u)=1$ and $\kappa(v)=0$), it is always at least 1, for \emph{any} $G(x)$ in which an $st$-cut exists, whereas $R_{s',t'}(G'(x))$ can be as small as $1/N$ for some $G$. 

Choosing $h=\sqrt{N}$ in our example, and applying Eqs. (\ref{eq:theirbound}) and (\ref{eq:ourBound}), the analysis in \cite{BR12} gives a query complexity of $O(N^{1/2})$ while our analysis gives a query complexity of $O(N^{1/4})$. 
In Section \ref{sec:nandgraphs} we will show that this bound is tight. 

Now consider the graph $G$ in Figure \ref{fig:exBalloon}. It consists of $N$ edges in a line, connecting vertices $s,u_1,\dots,u_N$,  and then $N$ multi-edges between $u_N$ and $t$. We assign weights $c(e)=1$ for edges $e$ on the path from $s$ to $u_N$, and $c(e)=N^{-1}$ for all other edges.

\begin{figure}[ht]
\centering
\begin{tikzpicture}[scale = .95]
\node at (0,0) {\begin{tikzpicture}[scale=.8]
\filldraw (-1,0) circle (.1);
\filldraw (0,0) circle (.1);
\filldraw (1,0) circle (.1);
\filldraw (2,0) circle (.1);
\filldraw (3,0) circle (.1);
\node at (2.5,0) {$\dots$};
\filldraw (4,0) circle (.1);
\filldraw (5,0) circle (.1);
\draw (-1,0)--(2,0);
\draw (3,0)--(4,0);
\draw (4,0)--(5,0);
\draw plot [smooth] coordinates{(4,0) (4.5,2) (5,0)};
\draw plot [smooth] coordinates{(4,0) (4.5,-2) (5,0)};
\draw plot [smooth] coordinates{(4,0) (4.5,1) (5,0)};
\draw plot [smooth] coordinates{(4,0) (4.5,-1) (5,0)};
\node at (4.75,2.2) {$\bm{N^{-1}}$};
\node at (4.75,1.2) {$\bm{N^{-1}}$};
\node at (4.75,.2) {$\bm{N^{-1}}$};
\node at (4.75,-1.2) {$\bm{N^{-1}}$};
\node at (4.75,-2.2) {$\bm{N^{-1}}$};

\node at (-.5,.3) {$\bf{1}$};
\node at (0.5,.3) {$\bf{1}$};
\node at (1.5,.3) {$\bf{1}$};
\node at (2.5,.3) {$\bf{1}$};
\node at (3.5,.3) {$\bf{1}$};

\node at (2,-1.7) {$G$};
\node at (3.75,-.3) {$u_{N}$};
\node at (-1.2,-.3) {$s$};
\node at (0,-.3) {$u_1$};
\node at (1,-.3) {$u_2$};
\node at (5.2,-.3) {$t$};
\end{tikzpicture}};

\node at (6,.2) {\begin{tikzpicture}[scale=.8]
\filldraw (0,1) circle (.1);
\filldraw (0,-1) circle (.1);
\draw plot [smooth] coordinates{(0,1) (-.3,0) (0,-1)};
\draw plot [smooth] coordinates{(0,1) (-1.2,0) (0,-1)};
\draw plot [smooth] coordinates{(0,1) (.3,0) (0,-1)};
\draw plot [smooth] coordinates{(0,1) (1.2,0) (0,-1)};
\draw plot [smooth] coordinates{(0,1) (2,0) (0,-1)};
\filldraw (1,.6) circle (.1);
\filldraw (1,-.6) circle (.1);
\filldraw (1.6,.3) circle (.1);
\filldraw (1.6,-.3) circle (.1);

\node at (.7,0) {$\dots$};

\node at (-.5,-1.5) {$G'$};
\node at (0,1.3) {$s'$};
\node at (.3,-1.2) {$t'$};

\node at (-1.3,0) {\textbf{1}};
\node at (-.5,0) {\textbf{1}};
\node at (.1,0) {\textbf{1}};
\node at (1,0) {\textbf{1}};
\node at (.7,1.05) {\textbf{N}};
\node at (1.4,.75) {\textbf{N}};
\node at (2.3,0) {\textbf{N}};
\node at (1.4,-.75) {\textbf{N}};
\node at (.7,-1.05) {\textbf{N}};
\end{tikzpicture}};

\end{tikzpicture}
\caption{Example of graph for which our analysis does quadratically better than the analysis of \cite{BR12} by taking advantage of a non-trivial weight function $\Ohm$. The values of $\Ohm$ for each edge of $G$, and of $\Ohm'$ for each edge of $G',$ are shown in boldface.}\label{fig:exBalloon}
\end{figure}
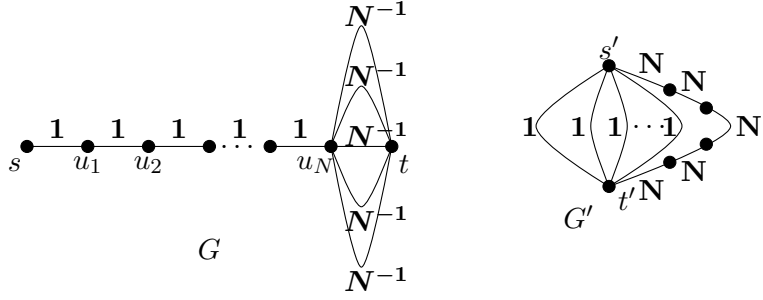

Then,
\begin{align}
\max_{x\in D:st\textsc{-conn}_G(x)=1}R_{s,t}(G(x),\Ohm)= 2N,
\end{align}
which occurs when only one of the multi-edges between $u_N$ and $t$ is present. In that case, the $N$ edges $\{s,u_1\},\{u_1,u_2\},\dots,\{u_{N-1},u_N\}$ each contribute 1 to the effective resistance, and the final edge between $u_N$ and $t$ contributes $\frac{1}{c(e)}=N$. 
Also
\begin{align}
\max_{x\in D: st\textsc{-conn}_G(x)=0}R_{s',t'}(G'(x), \Ohm') \leq 1,
\end{align}
where the maximum occurs when there is only one path from $s'$ to $t'$. (If it is the path with $N$ edges, each edge has weight $N$, and so contributes $1/N$ to the flow energy.) However
\begin{align}
\max_{x\in D: st\textsc{-conn}_G(x)=0}C_{s,t}(G(x)) = N
\end{align}
for a cut across the multi-edges between $u_N$ and $t$, and
\begin{align}
\max_{x\in D: st\textsc{-conn}_G(x)=1}R_{s,t}(G(x))= N+1,
\end{align}
which occurs when only one of the multi-edges between $u_N$ and $t$ is present.

Thus, the analysis in \cite{BR12} gives a query complexity of $O(N)$ while our analysis gives a query complexity of $O(N^{1/2})$.

In Section \ref{sec:nand} we will give an example where our analysis provides an  exponential improvement over the analysis in \cite{BR12}.


\section{AND-OR Formulas and st-Connectivity}\label{sec:nandgraphs}

In this section, we present a useful relationship between \textsc{and}-\textsc{or} formula evaluation problems and $st$-connectivity problems on certain graphs. As mentioned in Section \ref{sec:Preliminaries}, for simplicity we will restrict our analysis to read-once formulas, but the algorithm extends simply to ``read-many'' formulas. In this case, we will primarily be concerned with the query complexity: the input $x=(x_1,\dots,x_N)$ to a formula will be given via a standard quantum oracle $O_x$, defined $O_x\ket{i}\ket{b}=\ket{i}\ket{b\oplus x_i}$.

Given an \textsc{and}-\textsc{or} formula $\phi$ with $N$ variables, we will recursively construct a planar multigraph $G_\phi$, such that $G_\phi$ has two distinguished vertices labeled by $s$ and $t$ respectively, and every edge of $G_{\phi}$ is uniquely labeled by a variable $\{x_i\}_{i\in [N]}$. 
If $\phi=x_i$ is just a single variable, then $G_\phi$ is just a single edge with vertices labeled by $s$ and $t$, and edge label $x_i$. That is $E(G_\phi)=\{(\{s,t\},x_i)\}$ and $V(G_\phi)=\{s,t\}.$

Otherwise, suppose $\phi=\phi_1\wedge\dots\wedge\phi_\subf$. Then $G_\phi$ is the graph obtained from the graphs $G_{\phi_1},\dots,G_{\phi_\subf}$ by identifying the vertex labeled $t$ in $G_{\phi_i}$ with the vertex labeled $s$ in $G_{\phi_{i+1}}$, for all $i=1,\dots,\subf-1$, and labeling the vertex labeled $s$ in $G_{\phi_1}$ by $s$, and the vertex labeled $t$ in $G_{\phi_\subf}$ by $t$. That is, we connect the graphs $G_{\phi_1},\dots, G_{\phi_\subf}$ \emph{in series}, as in Figure \ref{fig:series-parallel}. (For a formal definition of $G_\phi$, see Appendix \ref{app:formula}). 

The only other possibility is that $\phi=\phi_1\vee \dots\vee\phi_\subf$. In that case, we construct $G_\phi$ by starting with $G_{\phi_1},\dots,G_{\phi_\subf}$ and  identifying all vertices labeled by $s$, and labeling the resulting vertex with $s$, and identifying all vertices labeled by $t$, and labeling the resulting vertex by $t$. That is, we connect $G_{\phi_1},\dots,G_{\phi_\subf}$ \emph{in parallel} (see Figure \ref{fig:series-parallel}). We note that graphs constructed in this way are exactly the set of \emph{series-parallel} graphs with two terminals (see e.g. \cite[Def. 3]{Valdes:1979:RSP:800135.804393}), and are equivalent to graphs without a $K_4$ minor \cite{dirac1952property,DUFFIN1965303}.

\begin{figure}[ht]
\centering
\begin{tikzpicture}
\node at (2,0) {\begin{tikzpicture}[scale = .9] 
\draw[gray] plot [smooth] coordinates {(0,0) (.25,.5) (0,1)};
\draw[gray] plot [smooth] coordinates {(0,0) (-.25,.5) (0,1)};
\filldraw (0,0) circle (.1);
\filldraw (0,1) circle (.1);
\filldraw (.25,.5) circle (.1);

\node at (0,1.25) {$s$};
\node at (0,-.35) {$t$};
\node at (0,-1) {$G_{\phi_2}$};
\end{tikzpicture}};

\node at (0,0) {\begin{tikzpicture}[scale = .9] 
\filldraw (0,0) circle (.1);
\filldraw (0,.5) circle (.1);
\filldraw (0,1) circle (.1);
\draw (0,0) -- (0,1);

\node at (0,1.25) {$s$};
\node at (0,-.35) {$t$};
\node at (0,-1) {$G_{\phi_1}$};
\end{tikzpicture}};

\node at (4,0) {\begin{tikzpicture}[scale = .9] 
\draw[red] (0,0) -- (0,1);
\filldraw (0,0) circle (.1);
\filldraw (0,1) circle (.1);

\node at (0,1.25) {$s$};
\node at (0,-.35) {$t$};
\node at (0,-1) {$G_{\phi_3}$};
\end{tikzpicture}};

\node at (8,0) {\begin{tikzpicture}[scale = .9] 
\draw[gray] plot [smooth] coordinates {(0,0) (.25,.5) (0,1)};
\draw[gray] plot [smooth] coordinates {(0,0) (-.25,.5) (0,1)};
\draw plot [smooth] coordinates {(0,0) (-.5,.5) (0,1)};
\draw[red] plot [smooth] coordinates {(0,0) (.5,.5) (0,1)};
\filldraw (0,0) circle (.1);
\filldraw (0,1) circle (.1);
\filldraw (.25,.5) circle (.1);
\filldraw (-.5,.5) circle (.1);

\node at (0,1.25) {$s$};
\node at (0,-.35) {$t$};
\node at (0,-1) {$G_{\phi_1\vee \phi_2\vee \phi_3}$};
\end{tikzpicture}};

\node at (6,0) {\begin{tikzpicture}[scale = .9] 

\draw (0,3)--(0,2);
\draw[gray] plot [smooth] coordinates {(0,2) (-.25,1.5) (0,1)};
\draw[gray] plot [smooth] coordinates {(0,2) (.25,1.5) (0,1)};
\draw[red] (0,0)--(0,1);

\filldraw (0,3) circle (.1);
\filldraw (0,2.5) circle (.1);
\filldraw (0,2) circle (.1);
\filldraw (.25,1.5) circle (.1);
\filldraw (0,1) circle (.1);
\filldraw (0,0) circle (.1);

\node at (0,3.25) {$s$};
\node at (0,-.35) {$t$};
\node at (0,-.7) {$G_{\phi_1\wedge \phi_2\wedge \phi_3}$};

\end{tikzpicture}};

\end{tikzpicture}
	\caption{Let $\phi_1 = x_1\wedge x_2$, $\phi_2=x_3\vee (x_4\wedge x_5)$, and $\phi_3=x_6$. Then we obtain $G_{\phi_1\wedge \phi_2\wedge \phi_3}$ by connecting $G_{\phi_1}$, $G_{\phi_2}$, and $G_{\phi_3}$ in series, and $G_{\phi_1\vee \phi_2\vee\phi_3}$ by connecting them in parallel.}\label{fig:series-parallel}
\end{figure}
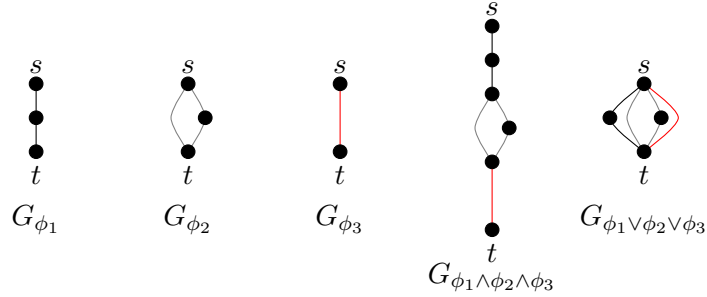

Note that for any $\phi$, $G_\phi$ is planar, and furthermore, both $s$ and $t$ are always on the same face. Thus, we can define $G_\phi'$, $G_\phi(x)$ and $G_\phi'(x)$ as in Section \ref{sec:stconn}. Then we can show the following:
\begin{restatable}{lemma}{formulast}\label{lem:formula-st}
Let $\phi$ be any \textsc{and}-\textsc{or} formula on $N$ variables. For every $x\in\{0,1\}^N$, there exists a path from $s$ to $t$ in $G_\phi(x)$ if and only if $\phi(x)=1$. Furthermore, for every $x\in\{0,1\}^N$, there exists a path from $s'$ to $t'$ in $G'_\phi(x)$ if and only if $\phi(x)=0$. 
\end{restatable}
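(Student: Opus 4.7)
My plan is to prove both statements by straightforward structural induction on the formula $\phi$, using the recursive definition of $G_\phi$, and then derive the second statement either from the first via planar duality or by a parallel induction.

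For the first statement, the base case is immediate: if $\phi = x_i$, then $G_\phi$ consists of a single edge labeled $x_i$ joining $s$ and $t$, so $G_\phi(x)$ contains this edge (and hence an $st$-path) precisely when $x_i = 1 = \phi(x)$. For the inductive step, I split on the type of the root gate. If $\phi = \phi_1 \wedge \cdots \wedge \phi_\subf$, then $G_\phi$ is the series composition of $G_{\phi_1},\ldots,G_{\phi_\subf}$, and an $st$-path in $G_\phi(x)$ exists if and only if there is an $st$-path in each $G_{\phi_i}(x)$ (since any $st$-path must pass through each glued vertex in order). By induction this is equivalent to $\phi_i(x) = 1$ for every $i$, i.e.\ $\phi(x) = 1$. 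If $\phi = \phi_1 \vee \cdots \vee \phi_\subf$, then $G_\phi$ is the parallel composition, and an $st$-path in $G_\phi(x)$ exists if and only if there is an $st$-path in at least one $G_{\phi_i}(x)$, which by induction is equivalent to $\phi_i(x) = 1$ for some $i$, i.e.\ $\phi(x) = 1$.

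For the second statement, I would appeal to the planar duality already invoked in Section \ref{sec:stconn}: because $G_\phi$ is series-parallel with $s$ and $t$ on the same face, $\overline{G_\phi}$ is planar, and the construction of $G_\phi'$ from $\overline{G_\phi}^\dagger$ gives a bijective correspondence between $st$-cuts in $G_\phi(x)$ and $s't'$-paths in $G_\phi'(x)$. Since $s$ and $t$ are disconnected in $G_\phi(x)$ iff an $st$-cut exists, the second statement follows from the first: $\phi(x) = 0$ iff $s,t$ are disconnected in $G_\phi(x)$ iff $s',t'$ are connected in $G_\phi'(x)$. Alternatively, one can prove the dual statement by a separate induction, observing that passing to the planar dual swaps series and parallel composition (a series composition of $G_{\phi_i}$ becomes a parallel composition of the corresponding $G_{\phi_i}'$, and vice versa), which matches the duality between $\wedge$ and $\vee$ under negation.

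The only step that requires care is verifying the correctness of the vertex identifications in the inductive description of $G_\phi$ and of $G_\phi'$, in particular that the $s'$ and $t'$ defined from $\overline{G_\phi}$ are consistent with the recursive picture, so that the series/parallel swap under duality holds exactly as claimed. This is essentially a bookkeeping exercise (and is treated in the formal definition the authors defer to Appendix \ref{app:formula}), so I do not expect any real obstacle; the main content of the lemma is the simple observation that series-parallel graph connectivity exactly mirrors \textsc{and}-\textsc{or} formula evaluation.
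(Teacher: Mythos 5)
Your proof of the first statement is identical to the paper's: a structural induction on $\phi$ using the series/parallel decomposition of $G_\phi$. For the second statement, your primary route (a) appeals to the general planar cut/path duality, arguing that an $st$-cut in $G_\phi(x)$ exists iff an $s't'$-path in $G'_\phi(x)$ does; combined with the first part, this gives the claim. The paper's proof instead runs a parallel induction directly on $G'_\phi$: it proves Claim~\ref{claim:compose} (that $G'_\phi(x)$ is obtained from the $G'_{\phi_i}(x)$ by the \emph{dual} series/parallel composition), together with Lemma~\ref{claim:primeForm} (identifying $G'_\phi$ with $G_{\phi'}$, where $\phi'$ is the de~Morgan dual of $\phi$, and $G'_\phi(x)$ with $G_{\phi'}(\bar x)$), and then reduces the second statement to the first applied to $\phi'$ and $\bar x$. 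This is exactly your alternative (b), so you have captured the paper's actual argument as well. The trade-off between the two is one of self-containment: your (a) treats the cut/path correspondence as a black box, a fact the paper only establishes indirectly through the flow/cut characterization of span-program witnesses (Lemma~\ref{lemma:both_witnesses} via Claim~\ref{claim:neg-wit}), whereas route (b) keeps the argument purely combinatorial and parallel to the first half. One small wording nit on (a): the correspondence between $st$-cuts in $G_\phi(x)$ and $s't'$-paths in $G'_\phi(x)$ need not be a bijection (distinct cuts can induce the same dual path, and the cut function is free on vertices not touching any live edge); what you need and what holds is only an existence equivalence.
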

We give a formal proof of Lemma \ref{lem:formula-st} in Appendix \ref{app:formula}, but the intuition is that an \textsc{or} of subformulas, $\phi_1\vee\dots\vee\phi_\subf$ evaluates to true if any of the subformulas evaluates to true, and likewise, if two vertices are connected by multiple subgraphs in parallel, the vertices are connected if there is a path in any of the subgraphs. An $\textsc{and}$ of subformulas $\phi_1\wedge\dots\wedge \phi_\subf$ evaluates to true only if every subformula evaluates to true, and likewise, if two vertices are connected by multiple subgraphs in series, the vertices are only connected if there is a path through every subgraph.
Thus, we can show by induction that $s$ and $t$ are connected in $G_\phi(x)$ if and only if $\phi(x)=1$. To see that $s'$ and $t'$ are connected in $G_\phi'(x)$ if and only if $\phi(x)=0$, we can use a similar argument, and make use of the fact that an $s't'$-path in $G_\phi'(x)$ is an $st$-cut in $G_\phi(x)$.

Lemma \ref{lem:formula-st} implies that we can solve a formula evaluation problem $\textsc{Eval}_\phi$ by solving the associated $st$-connectivity problem, in which the input is a subgraph of $G_\phi$. By our construction, $G_\phi$ will always be a planar graph with $s$ and $t$ on the external face, so moreover, we can apply Theorem \ref{thm:stconn} to obtain the following.
\begin{theorem}\label{thm:easy-instances}
For any family $\phi$ of \textsc{and}-\textsc{or} formulas, the bounded error quantum query complexity of $\textsc{Eval}_\phi$ when the input is promised to come from a set $D$ is 
\begin{align}
O\left(\min_{\Ohm}\sqrt{\max_{x\in D:\phi(x)=1} R_{s,t}(G_\phi(x),\Ohm) \times \max_{x\in D:\phi(x)=0}R_{s',t'}(G_\phi'(x),\Ohm')}\right),\label{eq:easy-instances}
\end{align}
where the minimization is over all positive real-valued functions $\Ohm$ on $E(G_\phi)$.
\end{theorem}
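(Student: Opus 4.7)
The plan is to derive Theorem \ref{thm:easy-instances} as a direct corollary of the two main ingredients already established: Lemma \ref{lem:formula-st}, which identifies the formula evaluation problem with an $st$-connectivity problem on $G_\phi$, and Theorem \ref{thm:stconn}, which bounds the quantum query complexity of such $st$-connectivity problems in terms of effective resistances on the graph and its modified planar dual. So the proof is essentially a composition, and I would present it as such.

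First I would show that $G_\phi$ always satisfies the hypothesis of Theorem \ref{thm:stconn}, namely that $G_\phi \cup \{\{s,t\}\}$ is planar, or equivalently that there is a planar embedding of $G_\phi$ in which $s$ and $t$ lie on the same face. I would argue this by structural induction on $\phi$. The base case $\phi = x_i$ is a single edge, which is trivially planar with $s,t$ on the same (external) face. For the inductive step, if $\phi = \phi_1 \wedge \cdots \wedge \phi_\subf$, gluing the planar graphs $G_{\phi_i}$ in series at vertices that lie on their external faces preserves planarity and keeps the new $s$ and $t$ on the external face; similarly, if $\phi = \phi_1 \vee \cdots \vee \phi_\subf$, identifying all the $s$-vertices and all the $t$-vertices (both of which lie on the external faces) yields a planar graph with $s,t$ on the same face. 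In both cases, a concrete planar embedding is apparent from the series/parallel construction illustrated in Figure \ref{fig:series-parallel}.

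Next, given an input $x \in D \subseteq \{0,1\}^N$ and the oracle $O_x$ for the formula, I would observe that $O_x$ can be used, without any additional queries, as an oracle for the graph-input $\tilde{x} \in \{0,1\}^{E(G_\phi)}$ obtained by assigning to each edge the value of its label, since by construction each edge of $G_\phi$ is uniquely labeled by some variable $x_i$. By Lemma \ref{lem:formula-st}, we have $\phi(x)=1$ iff $s$ and $t$ are connected in $G_\phi(\tilde x)$, so the promise problem $\textsc{Eval}_\phi$ on $D$ reduces with no query overhead to $st$-\textsc{conn}$_{G_\phi, \tilde D}$, where $\tilde D$ is the image of $D$ under $x \mapsto \tilde x$.

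Finally I would invoke Theorem \ref{thm:stconn} on this $st$-connectivity problem. Its bound involves a minimization over weight functions $\Ohm$ on $E(G_\phi)$ of the product of $\max_{x:s,t\text{ connected in }G_\phi(x)} R_{s,t}(G_\phi(x),\Ohm)$ and $\max_{x:s,t\text{ not connected in }G_\phi(x)} R_{s',t'}(G_\phi'(x),\Ohm')$. Using Lemma \ref{lem:formula-st} a second time to translate ``$s,t$ connected'' into ``$\phi(x)=1$'' and ``$s,t$ not connected'' into ``$\phi(x)=0$'' yields exactly \eqref{eq:easy-instances}. There is no serious obstacle here: the only thing to be careful about is the planarity verification and the bookkeeping that the formula oracle and the graph oracle coincide via the edge labelling, but neither requires real work beyond tracking definitions.
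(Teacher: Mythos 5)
Your proposal is correct and follows essentially the same route as the paper: reduce $\textsc{Eval}_\phi$ to $st$-\textsc{conn}$_{G_\phi,D}$ via Lemma \ref{lem:formula-st}, check that $G_\phi\cup\{\{s,t\}\}$ is planar (which the paper asserts from the series-parallel construction and you justify by structural induction), and apply Theorem \ref{thm:stconn}. The only additions you make are spelling out the planarity induction and the oracle identification, both of which the paper treats as immediate from the construction.
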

\begin{proof} 
By Lemma \ref{lem:formula-st}, the query complexity of \textsc{Eval}$_\phi$ on $D$ is at most the query complexity of $st$-\textsc{conn}$_{G_\phi,D}$. Since $G_\phi$ is planar, and has $s$ and $t$ on the same face, we can apply Theorem \ref{thm:stconn}, which immediately implies the result.
\end{proof}

\subsection{Comparison to Existing Boolean Formula Algorithms }\label{sec:nandGraphsApplication}

Reichardt proved that the quantum query complexity of evaluating any formula on $N$ variables is $O(\sqrt{N})$ \cite[Corollary 1.6]{reichardt2010span}. 
Our algorithm recovers this result:

\begin{restatable}{theorem}{rootN}\label{thm:rootN}
Let $\phi$ be a read-once  formula on $N$ variables. Then there exists a choice of $\Ohm$ on $E(G_{\phi})$ such that the quantum algorithm obtained from the span program $P_{G_\phi,\Ohm}$ computes \textsc{Eval}$_\phi$ with bounded error in $O(\sqrt{N})$ queries.
\end{restatable}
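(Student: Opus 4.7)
The plan is to invoke Theorem \ref{thm:easy-instances} with no promise ($D = \{0,1\}^N$) and exhibit weights $c$ on $E(G_\phi)$ such that $W_+(\phi,c)\cdot W_-(\phi,c)\le N$, where I write $W_+(\phi,c) := \max_{x:\phi(x)=1} R_{s,t}(G_\phi(x),c)$ and $W_-(\phi,c) := \max_{x:\phi(x)=0} R_{s',t'}(G'_\phi(x),c')$. Once this product bound is established, Theorem \ref{thm:easy-instances} immediately yields query complexity $O(\sqrt{W_+ W_-}) = O(\sqrt{N})$. I will prove the product bound by induction on the structural depth of $\phi$.

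First I would set up the recursions for $W_\pm$. Because $G_\phi$ is built by series composition of the $G_{\phi_i}$ at an \textsc{and}-gate and by parallel composition at an \textsc{or}-gate, and because the primal/dual construction of Section \ref{sec:stconn} swaps series with parallel, Claim \ref{claim:parallel_series} combined with read-onceness (which lets the inputs to distinct subformulas be independently worst-cased) yields, for $\phi = \phi_1 \wedge \cdots \wedge \phi_\subf$, the recursions $W_+(\phi) = \sum_i W_+(\phi_i)$ and $W_-(\phi) = \max_i W_-(\phi_i)$. The max arises because every true $\phi_i$ contributes a zero-conductance branch to the dual parallel sum, so the worst case for $W_-$ is a single false child chosen to maximize $W_-(\phi_i)$. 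The case $\phi = \phi_1 \vee \cdots \vee \phi_\subf$ is dual.

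The core step is a rescaling argument. Multiplying the weight function on $E(G_{\phi_i})$ by a scalar $\alpha_i > 0$ sends $W_+(\phi_i) \mapsto W_+(\phi_i)/\alpha_i$ and $W_-(\phi_i) \mapsto \alpha_i W_-(\phi_i)$, leaving the product $W_+(\phi_i) W_-(\phi_i)$ invariant. Assuming inductively that weights have been chosen on each subformula with $W_+(\phi_i) W_-(\phi_i) \le N_i$, I pick the $\alpha_i$ in the \textsc{and} case so that the rescaled values $\alpha_i W_-(\phi_i)$ all share a common value $C$. Then $W_-(\phi) = C$ and $W_+(\phi) = \sum_i W_+(\phi_i) W_-(\phi_i)/C \le N/C$, so the product is at most $N$. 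The \textsc{or} case is symmetric, balancing the $W_+(\phi_i)$ to a common value instead. The base case $\phi = x_i$ is a single edge, and taking $c(e)=1$ gives $W_+ = W_- = 1$, matching $N=1$.

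The step I expect to need the most care is verifying the structural correspondence: that series composition on the primal side really does become parallel composition on the dual, so that the dual conductance recursion $1/W_-(\phi) = \sum_i 1/W_-(\phi_i)$ at an \textsc{and}-gate collapses to $W_-(\phi) = \max_i W_-(\phi_i)$ once the true-subformula branches are correctly assigned infinite effective resistance by Lemma \ref{lem:formula-st} (and symmetrically for \textsc{or}). After that topological check, everything is algebraic: the rescaling is a one-line optimization and the induction threads through without further obstacle.
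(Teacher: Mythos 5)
Your proposal is correct and follows essentially the same strategy as the paper's proof: establish series/parallel recursions for $W_\pm$, exploit the fact that rescaling subformula weights by $\alpha_i$ sends $W_+(\phi_i)\mapsto W_+(\phi_i)/\alpha_i$ and $W_-(\phi_i)\mapsto\alpha_i W_-(\phi_i)$ while preserving their product, and balance the children so the product at the parent stays bounded by $N$. The paper hard-codes your normalization constant $C$ to $1$ (choosing $\alpha_i = 1/W_-(P_{G_{\phi_i},c_i})$ in the $\wedge$ case) and writes the resulting bound $W_-(\phi)\le 1/|S|\le 1$ directly rather than via the recursion $W_-(\phi)=\max_i W_-(\phi_i)$, but the underlying argument is the same.
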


\noindent We need the following claim, which we prove in Appendix \ref{app:formula}:
\begin{restatable}{claim}{compose}\label{claim:compose}
If $\phi = \phi_1\vee\phi_2\vee\cdots\vee\phi_l$, then $G'_\phi(x)$ is formed by composing $\{G'_{\phi_i}(x)\}_i$ in series, and
if $\phi = \phi_1\wedge\phi_2\wedge\cdots\wedge\phi_l$, then $G'_\phi(x)$ is formed by composing $\{G'_{\phi_i}(x)\}_i$ in parallel.
\end{restatable}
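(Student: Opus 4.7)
The plan is to prove the claim by a direct analysis of the planar embeddings of $\overline{G}_\phi$ in terms of the embeddings of the $\overline{G}_{\phi_i}$ used to build it. This is essentially the standard ``series-parallel duality'' result for planar graphs, restricted to the two-terminal setting of $G_\phi$. I would first fix, for each $i$, a planar embedding of $\overline{G}_{\phi_i}$ (which exists since $s_i, t_i$ lie on a common face by construction), and note that its $\emptyset$-edge has exactly two adjacent faces, which by the construction of $G'_{\phi_i}$ correspond to its terminal vertices $s'_i$ and $t'_i$.

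For the OR case ($\phi = \phi_1 \vee \cdots \vee \phi_l$, so $G_\phi$ is the parallel composition), I would embed $\overline{G}_\phi$ so that the $l+1$ strands $G_{\phi_1}, \ldots, G_{\phi_l}, \{\emptyset\text{-edge}\}$ appear in this cyclic order around $s$ (and the reverse cyclic order around $t$), with each $G_{\phi_i}$ drawn as in the fixed embedding above. The key step is to show that the faces of $\overline{G}_\phi$ partition into (a) the internal faces of each $G_{\phi_i}$ (faces of $\overline{G}_{\phi_i}$ not adjacent to its $\emptyset$-edge), together with (b) exactly $l+1$ ``separating'' faces $F_0, F_1, \ldots, F_l$ lying between consecutive strands in the cyclic order, where $F_0$ sits between the $\emptyset$-edge and $G_{\phi_1}$, and $F_l$ sits between $G_{\phi_l}$ and the $\emptyset$-edge. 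Under this identification, $F_{i-1}$ plays the role of the $s'_i$-face of $\overline{G}_{\phi_i}$, and $F_i$ plays the role of the $t'_i$-face. The $\emptyset$-edge separates $F_0$ and $F_l$, so $s' = v_{F_0}$ and $t' = v_{F_l}$. After deleting the $\emptyset$-dual edge, the remaining subgraph of $\overline{G}_\phi^\dagger$ is exactly $l$ copies of $G'_{\phi_i}$ glued along the shared vertices $v_{F_i} = t'_i = s'_{i+1}$, which is precisely the series composition of $\{G'_{\phi_i}(x)\}_i$.

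For the AND case ($\phi = \phi_1 \wedge \cdots \wedge \phi_l$, so $G_\phi$ is the series composition), I would embed $\overline{G}_\phi$ with the subgraphs $G_{\phi_i}$ drawn in sequence from $s$ to $t$ along a curve, with the $\emptyset$-edge arcing over the entire composition. The face decomposition of $\overline{G}_\phi$ is now (a) the internal faces of each $G_{\phi_i}$, plus only two ``separating'' faces: a single face $F^{top}$ enclosed by the $\emptyset$-edge and the ``top'' boundaries of all the $G_{\phi_i}$'s, and the outer face $F^{bot}$ bordering the ``bottom'' of every $G_{\phi_i}$. For every $i$, the $s'_i$-face and $t'_i$-face of $\overline{G}_{\phi_i}$ are identified with $F^{top}$ and $F^{bot}$ respectively. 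Hence in $G'_\phi$ every subgraph $G'_{\phi_i}$ shares the same two terminal vertices $s' = v_{F^{top}}$ and $t' = v_{F^{bot}}$, so $G'_\phi$ is the parallel composition of $\{G'_{\phi_i}(x)\}_i$.

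The main obstacle is rigorously justifying the face decompositions above. I would handle this by induction on $l$: the base case $l = 2$ can be checked by direct inspection (verifying that combining two embeddings in parallel, or in series, produces exactly one new separating face in the OR case and collapses the outer regions in the AND case), and the inductive step follows from the fact that composing a planar embedding with $s,t$ on a common face with another such embedding yields another embedding of the same kind, with the face structure transforming predictably. Once the face decomposition is in hand, the identifications of the $s'_i$-faces and $t'_i$-faces and the conclusion that $G'_\phi$ is the claimed composition become routine bookkeeping.
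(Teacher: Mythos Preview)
Your proposal is correct, and the underlying combinatorial content---the face decomposition showing that dualizing $\overline{G}_\phi$ swaps series with parallel---is exactly what the paper uses as well. The organization differs, however. Rather than proving Claim~\ref{claim:compose} directly, the paper first establishes a more structural lemma (Lemma~\ref{claim:primeForm}): if $\phi'$ denotes the De~Morgan dual of $\phi$ (swap every $\wedge$ with $\vee$), then $G'_\phi = G_{\phi'}$ and $G'_\phi(x) = G_{\phi'}(\bar{x})$. The inductive step of that lemma is precisely your face-decomposition argument (and the paper's Figure~\ref{fig:dual-proof} depicts your AND case). Once Lemma~\ref{claim:primeForm} is in hand, Claim~\ref{claim:compose} becomes a two-line corollary: if $\phi = \phi_1\vee\cdots\vee\phi_l$ then $\phi' = \phi_1'\wedge\cdots\wedge\phi_l'$, so $G'_\phi(x) = G_{\phi'}(\bar{x})$ is by Definition~\ref{def:Gphi} the series composition of the $G_{\phi_i'}(\bar{x}) = G'_{\phi_i}(x)$.

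Your direct route is equally valid and avoids introducing $\phi'$. The paper's detour buys something, though: the identity $G'_\phi = G_{\phi'}$ is reused independently in the proofs of Lemma~\ref{lem:formula-st} and Lemma~\ref{lemma:approx_neg_wit}, so factoring it out as a standalone lemma pays off later.
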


The intuition behind Claim \ref{claim:compose} is the following. Although $G_\phi'$ is defined via the dual of $G_\phi$, which is constructed through a sequence of series and parallel compositions, $G_\phi'$ itself can also be built up through a sequence of series and parallel compositions. For any and-or formula $\phi$ on $N$ variables, we can define a formula $\phi'$ on $N$ variables by replacing all $\vee$-nodes in $\phi$ with $\wedge$-nodes, and all $\wedge$-nodes in $\phi$ with $\vee$-nodes. By de Morgan's law, for all $x\in\{0,1\}^N$, $\phi(x)=\neg \phi'(\bar{x})$, where $\bar{x}$ is the entrywise negation of $x$. A simple inductive proof shows that $G_{\phi'}=G_{\phi}'$, and for all $x$, $G_{\phi'}(\bar{x})=G_\phi'(x)$ (see Lemma \ref{claim:primeForm} in Appendix \ref{app:formula}). 

\begin{proof}[Proof of Theorem \ref{thm:rootN}]
We will make use of the following fact: for any network $(G,c)$, and any positive real number $W$:
\begin{equation}
R_{s,t}(G,c/W)=\min_\theta\sum_{e\in E(G)}\frac{\theta(e)^2}{c(e)/W}=W\min_\theta\sum_{e\in E(G)}\frac{\theta(e)^2}{c(e)}
=WR_{s,t}(G,c).\label{eq:weight-factor}
\end{equation}

We now proceed with the proof.
For any formula $\phi$ in $\{\wedge,\vee,\neg\}$, by repeated applications of de Morgan's law, we can push all \textsc{not}-gates to distance-1 from a leaf. Since $x_i$ and $\neg x_i$ can both be learned in one query, we can restrict our attention to \textsc{and}-\textsc{or} formulas. 

If $\phi$ has only $N=1$ variable, it's easy to see that $W_+(P_{G_\phi,\Ohm})W_-(P_{G_\phi,\Ohm})\leq N$ for $\Ohm$ taking value $1$ on the single edge in $G_\phi$. We will prove by induction that this is true for any $\phi$, for some choice of $\Ohm$, completing the proof, since the complexity of our algorithm obtained from $P_{G_\phi,c}$ is $O\left(\sqrt{W_+(P_{G_\phi,\Ohm})W_-(P_{G_\phi,\Ohm})}\right)$. 

Suppose $\phi=\phi_1\wedge\dots\wedge\phi_\subf$ for formulas $\phi_i$ on $N_i$
variables, so $\phi$ has $N=\sum_iN_i$ variables. For $x\in \{0,1\}^N$, we will
let $x^i\in \{0,1\}^{N_i}$ denote the $(N_1+\dots+N_{i-1}+1)$-th to
$(N_1+\dots+N_i)$-th bits of $x$. For each $G_{\phi_i}$, by the induction hypothesis, there is some
weight function $\Ohm_i$ on $E(G_{\phi_i})$ such that
$W_+(P_{G_{\phi_i},\Ohm_i})W_-(P_{G_{\phi_i},\Ohm_i})\leq N_i$. 

Using our
construction, $G_\phi$ is formed by composing $\{G_{\phi_i}\}_i$ in series. Thus
every edge $(\{u,v\},\lambda)\in E(G_{\phi})$ corresponds to an edge
$(\{u,v\},\lambda)\in E(G_{\phi_i})$ for some $i$. We create a weight function $\Ohm:E(G_\phi)\rightarrow \mathbb R^+$ such that 
$\Ohm(\{u,v\},\lambda) = \frac{\Ohm_i(\{u,v\},\lambda)}{W_{-}(P_{G_{\phi_i},\Ohm_i})}$ if $(\{u,v\},\lambda)$ is an edge
originating from the graph $G_{\phi_i}$. That is, our new weight function
is the same as combining all of the old weight functions, up to scaling
factors $\{W_{-}(P_{G_{\phi_i},\Ohm_i})\}_i$.

Using Lemma \ref{lemma:both_witnesses}, Claim \ref{claim:parallel_series}, and Eq.\ \eqref{eq:weight-factor}, for any 1-instance $x$,
\begin{align}
w_+(x,P_{G_phi,c}) &= \frac{1}{2}R_{s,t}(G_\phi(x),c)=\frac{1}{2}\sum_{i=1}^\subf R_{s,t}\left(G_{\phi_i}(x),\frac{c_i}{W_-(P_{G_{\phi_i},c_i})}\right)\nonumber\\
 &= \frac{1}{2}\sum_{i=1}^\subf W_-(P_{G_{\phi_i},c_i}) R_{s,t}\left(G_{\phi_i}(x),{c_i}{}\right) 
\leq \sum_{i=1}^\subf W_{-}(\phi_i,P_{G_{\phi_i},\Ohm_i})W_+(P_{G_{\phi_i},\Ohm_i}).
\end{align}
Thus 
\begin{equation}
W_+(P_{G_{\phi},\Ohm})\leq \sum_{i=1}^\subf W_{-}(P_{G_{\phi_i},\Ohm_i})W_+(P_{G_{\phi_i},\Ohm_i})\leq \sum_{i=1}^{\subf} N_i=N.\label{eq:Wplus}
\end{equation}
Recall that for a weight function $c$ on $G_\phi$, we define a weight function $c'$ on $G'_\phi$ by $c'(e^\dagger)=1/c(e)$. Then for an edge $e\in E(G_{\phi_i}(x^i))$, we have $c'(e^\dagger)={W_-(P_{G_{\phi_i},c_i})}/{c_i(e)}=W_-(P_{G_{\phi_i},c_i})c_i'(e^\dagger)$.
By Claim \ref{claim:compose}, $G'_\phi$ is formed by composing $\{G'_{\phi_i}\}_i$ in parallel, so by Lemma \ref{lemma:both_witnesses}, Claim \ref{claim:parallel_series}, and Eq.\ \eqref{eq:weight-factor}:
\begin{align}
w_-(x,P_{G_\phi,\Ohm})&=2R_{s',t'}(G_{\phi}'(x),\Ohm') 
 =  2\left(\sum_{i=1}^{\subf} \frac{1}{R_{s',t'}(G_{\phi_i}'(x^i),c')}\right)^{-1}\nonumber\\
 &= 2\left(\sum_{i=1}^{\subf} \frac{1}{R_{s',t'}(G_{\phi_i}'(x^i),W_-(P_{G_{\phi_i},c_i})c'_i)}\right)^{-1}
= 2\left(\sum_{i=1}^{\subf} \frac{W_-(P_{G_{\phi_i},c_i})}{R_{s',t'}(G_{\phi_i}'(x^i),c'_i)}\right)^{-1}\nonumber\\
 &= 2\left(\sum_{i=1}^{\subf} \frac{W_-(P_{G_{\phi_i},c_i})}{w_-(x^i,P_{G_{\phi_i},c_i})}\right)^{-1}.
\end{align}
Whenever $x$ is a 0-instance of $\phi$, the set $S\subseteq[\subf]$ of $i$ such that $x^i$ is a 0-instance of $\phi_i$ is non-empty. This is exactly the set of $i$ such that $w_-(x^i,P_{G_{\phi_i},c_i})<\infty$. Continuing from above, we have:
\begin{equation}
w_-(x,P_{G_\phi,\Ohm})=\left(\sum_{i\in S} \frac{W_-(P_{G_{\phi_i},c_i})}{w_-(x^i,P_{G_{\phi_i},c_i})}\right)^{-1} 
\leq\left(\sum_{i\in S} \frac{W_-(P_{G_{\phi_i},c_i})}{W_-(P_{G_{\phi_i},c_i})}\right)^{-1}
=\frac{1}{|S|}\leq 1 .
\end{equation}
Thus $W_-(P_{G_{\phi_i},c_i})\leq 1$. Combining this with Eq.\ \eqref{eq:Wplus} we have $W_+(P_{G_{\phi},c})W_-(P_{G_\phi,c})\leq N$, as desired. 

\noindent The proof for the case $\phi=\phi_1\vee\dots\vee\phi_\subf$ is similar. 
\end{proof}

\noindent An immediate corollary of Theorem \ref{thm:rootN} is the following.
\begin{corollary}
Deciding $st$-connectivity on subgraphs of two-terminal series-parallel graphs of $N$ edges
can be accomplished using $O(\sqrt{N})$ queries, if $s$ and $t$ are chosen to be the two terminal nodes.
\end{corollary}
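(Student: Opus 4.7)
The plan is to observe that two-terminal series-parallel graphs are exactly the graphs of the form $G_\phi$ arising from read-once \textsc{and}-\textsc{or} formulas, and then invoke Theorem~\ref{thm:rootN} via the reduction of Lemma~\ref{lem:formula-st}. Concretely, I would show that for every two-terminal series-parallel graph $G$ with terminals $s,t$ and $N$ edges, there exists a read-once \textsc{and}-\textsc{or} formula $\phi$ on $N$ variables such that $G = G_\phi$ (under a bijection between edges of $G$ and variables of $\phi$); then the query problem $st$-\textsc{conn}$_G$ and the evaluation problem $\textsc{Eval}_\phi$ are literally the same problem, and Theorem~\ref{thm:rootN} provides the $O(\sqrt{N})$ bound.

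First I would recall the standard recursive characterization of two-terminal series-parallel graphs (following Valdes--Tarjan--Lawler \cite{Valdes:1979:RSP:800135.804393}, which is cited earlier in Section~\ref{sec:nandgraphs}): the class is generated from the single-edge graph with terminals at its two endpoints by the closure under two operations, \emph{series} composition (identify the $t$ of one graph with the $s$ of another; relabel endpoints accordingly) and \emph{parallel} composition (identify the two $s$'s and the two $t$'s). This is exactly the recursion used in Section~\ref{sec:nandgraphs} to construct $G_\phi$ from its subformulas, with $\wedge$ corresponding to series and $\vee$ corresponding to parallel composition.

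Next I would proceed by induction on the number of edges $N$ to produce the desired formula. The base case $N=1$ is the single-edge graph, which equals $G_\phi$ for $\phi=x_1$. For the inductive step, decompose $G$ either as a series composition $G_1\circ\dots\circ G_\ell$ or as a parallel composition $G_1\parallel\dots\parallel G_\ell$ of smaller series-parallel graphs with their own pairs of terminals; by the induction hypothesis each $G_i = G_{\phi_i}$ for some formula $\phi_i$ on $N_i$ variables (with $\sum_i N_i = N$), and then taking $\phi = \phi_1\wedge\dots\wedge\phi_\ell$ or $\phi = \phi_1\vee\dots\vee\phi_\ell$, respectively, gives $G_\phi = G$ by construction. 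Identify the variables $x_1,\dots,x_N$ of $\phi$ with the $N$ edges of $G$ according to this recursive construction.

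Finally, an input $x\in\{0,1\}^{E(G)}$ to $st$-\textsc{conn}$_G$ can be read as an input to $\textsc{Eval}_\phi$ via this edge-variable bijection, and a single query to the oracle for $x$ serves both problems. By Lemma~\ref{lem:formula-st}, $st$-\textsc{conn}$_G(x)=\phi(x)$ for every $x$, so any quantum query algorithm for $\textsc{Eval}_\phi$ is a quantum query algorithm for $st$-\textsc{conn}$_G$ with the same query complexity. Applying Theorem~\ref{thm:rootN} yields the bound $O(\sqrt{N})$. The main (but entirely standard) obstacle is the structural step: verifying that every two-terminal series-parallel graph, with the \emph{specified} pair of terminals $s,t$, admits a recursive series/parallel decomposition consistent with those terminals. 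Since this is the defining property of the class, the obstacle is really just one of citation, and once it is in hand the corollary is immediate.
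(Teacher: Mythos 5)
Your proposal is correct and follows the same route as the paper: the paper remarks in Section~\ref{sec:nandgraphs} that the graphs $G_\phi$ are exactly the two-terminal series-parallel graphs (citing \cite{Valdes:1979:RSP:800135.804393}) and then labels the corollary as an ``immediate corollary'' of Theorem~\ref{thm:rootN}. Your write-up simply makes explicit the inductive bijection between series/parallel decompositions and $\wedge$/$\vee$ formula structure that the paper leaves implicit.
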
 

As with many results in this field, characterizing classical complexity seems to be more difficult than quantum complexity. However, we show we can lower bound the classical query complexity of a class of Boolean formulas in terms of the effective resistance of their corresponding graphs, achieving a quadratic quantum/classical speed-up in query complexity.

We consider \textsc{and}-\textsc{or} formulas on restricted domains. For $N,h\in \mathbb
Z^+$, let $D_{N,h}=\{x\in\{0,1\}^N:|x|=N \textrm{ or } |x|\leq
N-h\}$ and let $D_{N,h}'=\{x\in\{0,1\}^N:|x|=0 \textrm{ or }
|x|\geq h\}$. We will analyze \textsc{and}-\textsc{or} formulas such that the input to every gate in the formula comes from $D_{N,h}$ (in the case of $\textsc{and}$), which we denote $\textsc{and}|_{D_{N,h}}$ and $D_{N,h}'$ (in the case of $\textsc{or}$), which we denote $\textsc{or}|_{D_{N,h}'}$.
 These promises on the domains make it easier to evaluate both functions. For example, if $\textsc{or}$ evaluates to 1, we are promised that there will not be just one input with value $1,$ but at least $h.$

 Then using sabotage complexity \cite{BK16} to bound the classical query complexity, we have the following theorem, whose (somewhat long, but not technical) proof can be found in Appendix \ref{app:classical-lb}:

\begin{theorem}\label{thm:class}
Let $\phi=\phi_1\circ \phi_2\circ\cdots\circ \phi_l$, where for
each $i\in[l]$, $\phi_i=\textsc{or}|_{D_{N_i,h_i}'}$ or $\phi_i=\textsc{and}|_{D_{N_i,h_i}}$. Then the randomized bounded-error query complexity of evaluating $\phi$ is $\Omega\left(\prod_{i=1}^lN_i/h_i\right)$, and the bounded-error quantum query complexity of evaluating $\phi$ is $O\left(\prod_{i=1}^l\sqrt{N_i/h_i}\right)$.
\end{theorem}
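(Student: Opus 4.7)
The two halves decouple: the quantum upper bound falls out of Theorem~\ref{thm:easy-instances} by a direct resistance calculation on the series-parallel graph $G_\phi$, while the randomized lower bound will follow from the multiplicative composition of sabotage complexity from~\cite{BK16}.

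For the quantum half, I would use the constant weight function $c\equiv 1$ and, writing $\psi_k := \phi_k\circ\phi_{k+1}\circ\cdots\circ\phi_l$, track
\[
R^{(1)}_k=\max_{x:\psi_k(x)=1}R_{s,t}\bigl(G_{\psi_k}(x)\bigr),\qquad R^{(0)}_k=\max_{x:\psi_k(x)=0}R_{s',t'}\bigl(G'_{\psi_k}(x)\bigr),
\]
with the base case $R^{(1)}_{l+1}=R^{(0)}_{l+1}=1$ (a single edge representing a leaf). Claim~\ref{claim:compose} says that if $\phi_k=\textsc{and}|_{D_{N_k,h_k}}$ then $G_{\psi_k}$ is the series composition of $N_k$ copies of $G_{\psi_{k+1}}$ and $G'_{\psi_k}$ is the parallel composition of $N_k$ copies of $G'_{\psi_{k+1}}$. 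The promise forces every 1-instance of $\psi_k$ to consist of $N_k$ 1-instances of $\psi_{k+1}$, so Eq.~\eqref{eq:series} gives $R^{(1)}_k\le N_k R^{(1)}_{k+1}$; every 0-instance has at least $h_k$ of its subinputs being 0-instances of $\psi_{k+1}$, so Eq.~\eqref{eq:parallel} gives $R^{(0)}_k\le R^{(0)}_{k+1}/h_k$. The case $\phi_k=\textsc{or}|_{D'_{N_k,h_k}}$ is symmetric (the roles of $G$/$G'$ and of $0$/$1$ swap). In both cases $R^{(1)}_k R^{(0)}_k\le (N_k/h_k)\, R^{(1)}_{k+1}R^{(0)}_{k+1}$, so telescoping gives $R^{(1)}_1 R^{(0)}_1\le\prod_{i=1}^l (N_i/h_i)$, and plugging into Theorem~\ref{thm:easy-instances} yields the claimed $O\bigl(\prod_{i=1}^l\sqrt{N_i/h_i}\bigr)$ quantum bound.

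For the classical lower bound, the plan is to use the sabotage complexity $RS$ of~\cite{BK16}, which satisfies $R(f)=\Omega(RS(f))$ and the composition inequality $RS(f\circ g)\ge RS(f)\cdot RS(g)$. The key single-gate estimate is $RS(\textsc{or}|_{D'_{N,h}})=\Omega(N/h)$ (and symmetrically for $\textsc{and}|_{D_{N,h}}$), proved directly by pairing $0^N$ with a Hamming-weight-$h$ string: the sabotage task reduces to finding one of $h$ marked coordinates among $N$, for which any randomized algorithm needs $\Omega(N/h)$ queries by a standard adversary/averaging argument. Iterating the composition inequality $l-1$ times gives $RS(\phi)=\Omega\bigl(\prod_{i=1}^l N_i/h_i\bigr)$ and hence the claimed randomized lower bound on $R(\phi)$.

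The main technical obstacle is checking that the sabotage composition inequality from~\cite{BK16} --- stated for total functions --- goes through for the promise functions $\textsc{and}|_{D_{N,h}}$ and $\textsc{or}|_{D'_{N,h}}$; concretely one needs to ensure that the hard sabotaged distribution constructed for $\phi_1\circ\cdots\circ\phi_l$ only ever exposes sub-inputs lying in the promise at each level. Everything else --- the resistance bookkeeping, the telescoping of the product, and the single-gate sabotage bound --- is routine.
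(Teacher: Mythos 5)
Your proposal is correct and follows essentially the same route as the paper: Appendix~\ref{app:classical-lb} proves the single-gate sabotage bound $RS(\textsc{or}|_{D'_{N,h}})=RS(\textsc{and}|_{D_{N,h}})=\Omega(N/h)$ (Lemma~\ref{lemma:sabCompAndOr}) and composes it (Corollary~\ref{corr:classicalBound}) exactly as you do, and the paper's Lemma~\ref{lemm:quantumBound} carries out the same series/parallel resistance induction you describe (with equality in place of your one-sided bound, which does not affect the $O(\cdot)$ conclusion) before plugging into Theorem~\ref{thm:easy-instances}. The caveat you flag about the sabotage composition inequality for promise functions is a fair observation, but the paper also invokes the \cite{BK16} composition property without further comment under its explicit convention for the composed promise domain.
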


Note that in the above theorem, when we compose formulas with promises on the input, we implicitly assume a promise on the input to the composed formula. More precisely, for $\phi_1$ on $D_1\subseteq \{0,1\}^{N_1}$ and $\phi_2$ on $D_2\subseteq \{0,1\}^{N_2}$, $\phi=\phi_1\circ\phi_2$ is defined on all $x=(x^1,\dots,x^{N_1})\in \{0,1\}^{N_1N_2}$ such that $x^i\in D_2$ for all $i\in [N_1]$, and $(\phi_2(x^1),\dots,\phi_2(x^{N_1}))\in D_1$. 

Theorem \ref{thm:class} is proven by showing that 
\begin{align}
\frac{\prod_{i=1}^lN_i}{\prod_{i=1}^lh_i}=\left(\max_{x\in D: \phi(x)=1}R_{s,t}(G_\phi(x))\right) \left(\max_{x\in D: \phi(x)=0}R_{s,t}(G'_\phi(x))\right),
\end{align}
and using sabotage complexity to show that this is a lower bound on the randomized query complexity of $\phi$. This gives us a quadratic separation between the randomized and quantum query complexities of this class of formulas. For details, see Appendix \ref{app:classical-lb}.

Using the composition lower bound for promise Boolean functions of \cite{K11}, and the lower bound for Grover's search with multiple marked items \cite{BBHT98}, we have that the quantum query complexity of Theorem \ref{thm:class} is tight. Additionally, in light of our reduction from Boolean formula evaluation to $st$-connectivity, we see that our example from Figure~\ref{fig:exLine} in Section \ref{sec:stconn} is equivalent to the problem of $\textsc{and}|_{D_{N,h}}$, so our query bound in that example is also tight.

Based on Theorem \ref{thm:class}, one might guess that when evaluating formulas using the $st$-connectivity reduction, one can obtain at most a quadratic speed-up over classical randomized query complexity. However, it is in fact possible to obtain a superpolynomial speed-up for certain promise problems using this approach, as we will discuss in Section \ref{sec:nandQuery}.

\section{NAND-tree Results}\label{sec:nand}

\subsection{Query Separations}\label{sec:nandQuery}

In this section, we prove two query separations that are stronger than our previous results. These query separations rely on the \textsc{nand}-tree formula with a promise on the inputs. This restriction, the $k$-fault promise, will be defined shortly. Let $F^d_k$ be the set of inputs to $\textsc{nand}_d$ that satisfy the $k$-fault condition. Then the two results are the following:
\begin{theorem}\label{thm:superPoly}
Using the $st$-connectivity approach to formula evaluation (Theorem \ref{thm:easy-instances}), one can solve $\textsc{Eval}_{\textsc{nand}_d}$ when the input is promised to be from $F^d_{\log d}$ with $O(d)$ queries, while any classical algorithm requires $\Omega(d^{\log \log (d)})$ queries. 
\end{theorem}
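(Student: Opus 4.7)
The plan is to prove the two bounds independently: the quantum upper bound by applying Theorem \ref{thm:easy-instances} with a weight function tailored to the $k$-fault structure, and the classical lower bound by invoking the randomized query lower bound of Zhan, Kimmel, and Hassidim \cite{ZKH12}.

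For the quantum upper bound, I would proceed by induction on the depth of the NAND-tree, mirroring the recursive weighting argument in the proof of Theorem \ref{thm:rootN} but using the fault condition to obtain a much stronger bound. The target inductive invariant is that for every $k$-fault subformula $\psi$ at depth $d'$, there is a weight function $c$ on $E(G_\psi)$ such that for every $x$ satisfying the $k$-fault promise,
\begin{equation*}
R_{s,t}(G_\psi(x),c)\cdot R_{s',t'}(G'_\psi(x),c') = O(4^k).
\end{equation*}
Combined with Theorem \ref{thm:easy-instances}, this yields $O(2^k)$ queries, and setting $k=\log d$ gives $O(d)$ queries, matching Kimmel's $k$-fault bound \cite{K11} but via the $st$-connectivity approach. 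The recursion uses Claim \ref{claim:parallel_series} to compose effective resistances in series and parallel, Claim \ref{claim:compose} to handle the dual graph under de Morgan duality, and the fault condition to argue that only a small number of subtrees at each level are "costly." Specifically, at an AND-node that outputs $1$ (no fault), parallel composition in $G'_\psi$ kills the dual resistance quickly, so the heavy lifting falls on a single subtree in the series composition of $G_\psi$; at a faulty node, a factor of $4$ is absorbed into the budget of $4^k$. Depth-dependent scaling of the weights, as in the proof of Theorem \ref{thm:rootN}, is then used to balance positive and negative witness sizes across the two sides of the product.

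For the classical lower bound, I would directly invoke \cite{ZKH12}, which shows that the randomized bounded-error query complexity of $\textsc{nand}_d$ restricted to $F^d_k$ is $\Omega\bigl((\log(\log N / k))^k\bigr)$ with $N=2^d$. Substituting $k=\log d$ gives
\begin{equation*}
\Omega\!\left(\left(\log\frac{d}{\log d}\right)^{\log d}\right) = \Omega\!\left((\log d)^{\log d}\right) = \Omega\!\left(d^{\log\log d}\right),
\end{equation*}
as claimed. The main obstacle is the quantum upper bound: the weight function must be chosen so that each fault incurs at most a constant multiplicative cost to the product of resistances while preserving the inductive invariant through both AND and OR levels, which requires tracking how the fault budget splits between children and how the dual-graph resistance responds under series/parallel composition.
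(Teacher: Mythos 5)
Your derivation of the classical lower bound is correct and matches the paper: substituting $k=\log d$ and $N=2^d$ into the $\Omega\bigl((\log(\log N / k))^k\bigr)$ bound of \cite{ZKH12} gives $\Omega\bigl((\log(d/\log d))^{\log d}\bigr) = \Omega(d^{\log\log d})$.

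However, your quantum upper bound sketch has a genuine gap and a mis-stated invariant. First, the invariant ``for every $x$ satisfying the $k$-fault promise, $R_{s,t}(G_\psi(x),c)\cdot R_{s',t'}(G'_\psi(x),c') = O(4^k)$'' is vacuous: for any fixed input $x$, exactly one of $s,t$ connected in $G_\psi(x)$ and $s',t'$ connected in $G'_\psi(x)$ holds, so one of the two resistances is always $\infty$ and the product is never bounded. What must be bounded is the product of $\max_{x:\phi(x)=1} R_{s,t}(G_\psi(x),c)$ and $\max_{x:\phi(x)=0} R_{s',t'}(G'_\psi(x),c')$, which are maxima over \emph{different} inputs. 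Second, your intuitive description of the recursion is confused: at an $\wedge$-node that evaluates to $1$, the dual graph is a parallel composition of subgraphs all of which have $R_{s',t'}=\infty$ (nothing is ``killed''), while the primal $R_{s,t}$ is a \emph{sum} over a series composition, not a single dominant term. The fault-free case actually controls resistance growth through the $\vee$-nodes (even depth), where the parallel rule gives a factor-$\tfrac12$ decrease that cancels the series doubling at $\wedge$-nodes.

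Third, and most importantly, the route through a tailored weight function is unnecessary and obscures the key lemma you would need to prove anyway. The paper's proof is much more direct: it goes through Lemma~\ref{thm:kfault_conn}, which shows that with the \emph{uniform} weight function $c\equiv 1$ one already has $R_{s,t}(G_{\textsc{nand}_d}(x))\leq 2\mathcal{F}_A(x)$ and $R_{s',t'}(G'_{\textsc{nand}_d}(x))\leq 2\mathcal{F}_B(x)$, by an induction on depth that tracks a factor of $1$ at even depths and $2$ at odd depths. Combined with Theorem~\ref{thm:easy-instances}, this immediately gives Corollary~\ref{cor:kfault}, stating that the unweighted span program $P_{G_{\textsc{nand}_d}}$ already decides $k$-fault trees in $O(2^k)$ queries; setting $k=\log d$ gives $O(d)$. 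Your proposal replaces this single resistance-vs-fault-complexity lemma with an unproven weighted-recursion argument, and the weighted argument as sketched (heavy-lifting single subtree, factor of $4$ absorbed per fault) does not correctly account for how the fault count $\mathcal{F}$ and the resistance actually evolve under series and parallel composition. You would need to either prove Lemma~\ref{thm:kfault_conn} (or an equivalent) directly, or supply the missing bookkeeping for how your weights propagate through $\wedge$- versus $\vee$-levels while keeping both sides of the product bounded.
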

For a different choice of $k$, this example demonstrates the dramatic improvement our $st$-connectivity algorithm can give over the analysis of \cite{BR12} --- in this case, an exponential (or more precisely, a polynomial to constant) improvement:
\begin{theorem}\label{thm:stBig}
Consider the problem $st$-$\textsc{conn}_{G_{\textsc{nand}_d},F^d_1}$. The analysis of \cite{BR12} gives a bound of $O(N^{1/4})$ quantum queries to decide this problem (where $N=2^d$ is the number of edges in $G_{\textsc{nand}_d}$), while our analysis shows this problem can be decided with $O(1)$ quantum queries. 
\end{theorem}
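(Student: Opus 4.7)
The two parts of the theorem will be handled separately, using the recursive series/parallel structure of $G_{\textsc{nand}_d}$ throughout.

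For the $O(N^{1/4})$ bound obtained from the analysis of \cite{BR12}, I plan to apply Eq.~\eqref{eq:theirbound} with the trivial weight $c \equiv 1$, reducing the task to bounding $R := \max_{x \in F^d_1,\,\textsc{nand}_d(x)=1} R_{s,t}(G_{\textsc{nand}_d}(x))$ and $C := \max_{y \in F^d_1,\,\textsc{nand}_d(y)=0} C_{s,t}(G_{\textsc{nand}_d}(y))$. These quantities can be computed level-by-level using Claim~\ref{claim:parallel_series} at each \textsc{and}/\textsc{or} gate of the alternating NAND tree, while accounting for the combinatorial restriction imposed by the $1$-fault condition on which children of each gate must (and may) take each value. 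The role of the $1$-fault promise is to force the product $R\cdot C$ down to $O(\sqrt{N})$, as opposed to $O(N)$ in the unrestricted case, yielding $\sqrt{R\cdot C} = O(N^{1/4})$ after the outer square root.

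For the $O(1)$ bound via our Theorem~\ref{thm:easy-instances}, I would construct a weight function $c : E(G_{\textsc{nand}_d}) \to \mathbb{R}^+$ recursively using the series/parallel decomposition of $G_{\textsc{nand}_d}$, following the template of the proof of Theorem~\ref{thm:rootN} but tailored to the $1$-fault promise. At each recursive step, the weights of each subformula subgraph are rescaled by its current maximum negative witness size, so that the positive and negative witness sizes of the subformulas are balanced against each other via Eq.~\eqref{eq:weight-factor}. The inductive invariant to maintain is that for any subformula whose subtree contains no fault, both maximum witness sizes remain $O(1)$ after weighting, while the single permitted fault contributes only a further constant factor. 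Applying Lemma~\ref{lemma:both_witnesses} and Claim~\ref{claim:parallel_series} at each composition step then yields $\max_x R_{s,t}(G_{\textsc{nand}_d}(x), c)\cdot \max_y R_{s',t'}(G'_{\textsc{nand}_d}(y), c') = O(1)$, and Theorem~\ref{thm:easy-instances} delivers the query bound. This result matches the general $O(2^k)$ bound of \cite{K11} for $k$-fault trees, specialized to $k=1$.

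The main obstacle is carrying out the weight construction for the $O(1)$ bound. The naive weighting from the proof of Theorem~\ref{thm:rootN} only gives $W_+\cdot W_- = \Theta(N)$, so a subtler rescaling is needed to exploit the fact that \emph{most} gates along any root-to-leaf path are fault-free. The induction should split cases on whether the fault, if any, lies in a given subtree, handling the series (\textsc{and}) and parallel (\textsc{or}) compositions separately; the technical content is to show that for a fault-free subformula at either type of root, the maximum witness product can be held constant under appropriate rescaling, so that the final product depends only on the number of faults and not on the depth $d$.
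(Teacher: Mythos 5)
Your plan splits into two halves, and they land very differently.

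For the $O(N^{1/4})$ bound your high-level plan --- apply Eq.~\eqref{eq:theirbound} with $c\equiv 1$ and compute $R$ and $C$ level-by-level using the series/parallel structure --- is essentially the paper's route, but you are missing the key observation that makes the computation clean: $C_{s,t}(G_{\textsc{nand}_d}(x))$ equals \emph{exactly} $2^{\lfloor d/2\rfloor}$ for \emph{every} $0$-instance $x$, with no dependence on the $1$-fault promise at all. The promise shrinks only $R$; $C$ is a fixed quantity of the graph. Your phrasing (``accounting for the combinatorial restriction imposed by the $1$-fault condition on which children... may take each value'') suggests you are planning to thread the promise through the $C$-computation, which is unnecessary work and obscures the point. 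Also note that the induction for $C$ needs the cut-combination rules (Claim~\ref{claim:Cutparallel_series}: min under series, sum under parallel), not the resistance rules of Claim~\ref{claim:parallel_series}.

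The $O(1)$ part is where your proposal genuinely diverges from the paper, and the route you sketch would not go through. You propose to build a recursive weight function $c$ in the style of Theorem~\ref{thm:rootN} and maintain an inductive invariant about fault-free subtrees. But the paper uses the \emph{unweighted} span program $P_{G_{\textsc{nand}_d}}$ throughout (this is Corollary~\ref{cor:kfault}, which sets $c\equiv 1$); the work is done by Lemma~\ref{thm:kfault_conn}, which directly shows $R_{s,t}(G_{\textsc{nand}_d}(x)) \leq 2\,{\cal F}_A(x)$ and $R_{s',t'}(G'_{\textsc{nand}_d}(x))\leq 2\,{\cal F}_B(x)$ with no weights at all, so that for $1$-fault inputs both effective resistances are $O(1)$. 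There is no ``subtler rescaling'' to find. Worse, your plan has a conceptual flaw: the weight function $c$ must be chosen once, independently of the input $x$, whereas the location of the permitted fault in a $1$-fault tree varies with $x$; a single weighting cannot adapt to where the fault sits. Your invariant (``for any subformula whose subtree contains no fault, both maximum witness sizes remain $O(1)$'') is not a property of a subformula --- whether a subtree is fault-free depends on $x$ --- so it cannot drive an induction over a fixed weighted graph. You should abandon the weight-construction route and instead prove (or invoke) the bound on unweighted effective resistance in terms of fault complexity, which is exactly what Lemma~\ref{thm:kfault_conn} supplies.
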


We now define what we mean by $k$-fault \textsc{nand}-trees. In \cite{ZKH12}, Zhan et al.\ find a relationship between
the difficulty of playing the two-player game associated with a
\textsc{nand}-tree, and the witness size of a particular span program for $\textsc{nand}_d$. They find that trees with fewer
 \emph{faults}, or critical decisions for a player
playing the associated two-player game, are easier to evaluate on a
quantum computer. We show that our algorithm does at least as well as the algorithm of Zhan et al.\ for evaluating $k$-fault trees.
To see this, we relate fault complexity to effective resistances of $G_{\textsc{nand}_d}(x)$ or $G_{\textsc{nand}_d}'(x)$.

Consider a \textsc{nand}$_d$ instance $x\in\{0,1\}^{2^{d}}$, and recall the relationship between a \textsc{nand}-tree instance and the two-player \textsc{nand}-tree game described in Section \ref{sec:PreliminariesFormulas}. We call the sequence of nodes that Player $A$ and Player $B$ choose during the course of a game a path $p$ --- this is just a path from the root of the $\textsc{nand}$-tree to a leaf node. If $x$ is $Z$-winnable, we call ${\cal P}_Z(x)$ the set of paths where Player $Z$ wins, and 
Player $Z$ never makes a move that would allow her opponent to win. That is, a  path in ${\cal P}_A(x)$ (resp.\ ${\cal P}_B(x)$) only encounters nodes that are themselves the roots of $A$-winnable (resp.\ $B$-winnable) subtrees and never passes through a node where Player $B$ (resp.\ Player $A$) could make a decision to move to a $B$-winnable (resp.\ $A$-winnable) subtree. Whether a node in the tree is the root of an $A$-winnable or $B$-winnable subtree can be determined by evaluating the subformula corresponding to that subtree. See Figure \ref{fig:nandtree} for an example of ${\cal P}_A$.
Let $\nu_Z(p)$ be the set
of nodes along a path $p$ at which it is Player $Z$'s turn. Thus, $\nu_A(p)$ (resp.\ $\nu_B(p))$ contains those nodes in $p$ at even (resp.\ odd) distance $>0$ from the leaves.

 Zhan et al.\ call a node $v$ a \emph{fault} if one child is the root of an $A$-winnable tree, while the other child is the root of a $B$-winnable tree. Such a node constitutes a critical decision point. If we let
${f}_Z(p)$ denote the number of faults in $\nu_Z(p)$, we can define the fault
complexity ${\cal F}(x)$ of input $x$ as\footnote{We have actually
used the more refined definition of $k$-fault from \cite{K11}.} ${\cal F}(x)=\min\{{\cal F}_A(x),{\cal F}_B(x)\}$, where:
\begin{align}\label{eq:FaultDef}
{\cal F}_Z(x)=\begin{cases}2^{\max_{p\in {\cal P}_Z(x)}
f_Z(p)} &\textrm{ if $x$ is $Z$-winnable}\\
\infty &\textrm{ otherwise}.
\end{cases}
\end{align}
For $k=0,\dots,d/2$, the set of $k$-fault trees, $F^d_k$, are those instances $x\in\{0,1\}^{2^d}$ with $\log{\cal F}(x)\leq k$. In these trees, the winning player will encounter at most $k$ fault nodes on their path to a leaf. Kimmel \cite{K11} shows there exists a span program for evaluating \textsc{nand}-trees whose witness size for an instance $x$ is at most the fault complexity ${\cal F}(x).$

\begin{figure}[ht]
\centering
\begin{tikzpicture}[scale=.65]
\tikzstyle{vertex} = [circle,draw,fill=white,minimum size=.5em]
\tikzstyle{vertexFault} = [circle,draw,fill=white,double,minimum size=.5em]
\tikzstyle{operator} = [rectangle,rounded corners,draw,
fill = black!20!white,
minimum size=1.5em]
\tikzstyle{turn} = [rectangle,draw,fill=white,minimum size=1.5em]
\node[vertex] (v0000) at (-7.5,0) {$1$};
\node[operator] (p1) at (-7.5,-1) {$p_1$};
\node[vertex] (v0001) at (-6.5,0) {$1$};
\node[operator] (p2) at (-6.5,-1) {$p_2$};
\node[vertex] (v0010) at (-5.5,0) {$1$};
\node[vertex] (v0011) at (-4.5,0) {$0$};
\node[vertex] (v0100) at (-3.5,0) {$0$};
\node[vertex] (v0101) at (-2.5,0) {$0$};
\node[vertex] (v0110) at (-1.5,0) {$1$};
\node[operator] (p3) at (-1.5,-1) {$p_3$};
\node[vertex] (v0111) at (-0.5,0) {$1$};
\node[operator] (p4) at (-0.5,-1) {$p_4$};
\node[vertex] (v1000) at (0.5,0) {$0$};
\node[vertex] (v1001) at (1.5,0) {$0$};
\node[vertex] (v1010) at (2.5,0) {$0$};
\node[vertex] (v1011) at (3.5,0) {$1$};
\node[vertex] (v1100) at (4.5,0) {$1$};
\node[vertex] (v1101) at (5.5,0) {$1$};
\node[vertex] (v1110) at (6.5,0) {$0$};
\node[vertex] (v1111) at (7.5,0) {$1$};
\node[vertex] (v000) at (-7,1.5) {$\wedge$};
\node[vertexFault] (v001) at (-5,1.5) {$\wedge$};
\node[vertex] (v010) at (-3,1.5) {$\wedge$};
\node[vertex] (v011) at (-1,1.5) {$\wedge$};
\node[vertex] (v100) at (1,1.5) {$\wedge$};
\node[vertexFault] (v101) at (3,1.5) {$\wedge$};
\node[vertex] (v110) at (5,1.5) {$\wedge$};
\node[vertexFault] (v111) at (7,1.5) {$\wedge$};
\node[vertexFault] (v00) at (-6,3) {$\vee$};
\node[vertexFault] (v01) at (-2,3) {$\vee$};
\node[vertex] (v10) at (2,3) {$\vee$};
\node[vertexFault] (v11) at (6,3) {$\vee$};
\node[vertex] (v0) at (-4,4.5) {$\wedge$};
\node[vertexFault] (v1) at (4,4.5) {$\wedge$};
\node[vertexFault] (v) at (0,6) {$\vee$};
\path (v000) edge[thick] (v0001);
\path (v000) edge[thick] (v0000);
\path[dashed] (v00) edge[thick] (v001);
\path (v00) edge[thick] (v000);
\path (v0) edge[thick] (v01);
\path (v0) edge[thick] (v00);
\path[dashed] (v001) edge[thick] (v0011);
\path[dashed] (v001) edge[thick] (v0010);
\path (v01) edge[thick] (v011);
\path[dashed] (v01) edge[thick] (v010);
\path[dashed] (v1) edge[thick] (v11);
\path (v011) edge[thick] (v0111);
\path (v011) edge[thick] (v0110);
\path[dashed] (v111) edge[thick] (v1110);
\path[dashed] (v111) edge[thick] (v1111);
\path[dashed] (v010) edge[thick] (v0101);
\path[dashed] (v010) edge[thick] (v0100);
\path[dashed] (v100) edge[thick] (v1001);
\path[dashed] (v100) edge[thick] (v1000);
\path[dashed] (v101) edge[thick] (v1011);
\path[dashed] (v101) edge[thick] (v1010);
\path[dashed] (v110) edge[thick] (v1101);
\path[dashed] (v110) edge[thick] (v1100);
\path[dashed] (v10) edge[thick] (v101);
\path[dashed] (v10) edge[thick] (v100);
\path[dashed] (v1) edge[thick] (v10);
\path[dashed] (v11) edge[thick] (v111);
\path[dashed] (v11) edge[thick] (v110);
\path[dashed] (v1) edge[thick] (v);
\path (v0) edge[thick] (v);
\node[turn] (p1m) at (10.1,6) {$\textrm{Player $A$'s First Turn}$};
\node[turn] (p1m) at (10.1,4.5) {$\textrm{Player $B$'s First Turn}$};
\node[turn] (p1m) at (10.1,3) {$\textrm{Player $A$'s Second Turn}$};
\node[turn] (p1m) at (10.1,1.5) {$\textrm{Player $B$'s Second Turn}$};
\node (h1) at (-8.3,0) {};
\node (h2) at (-8.3,5.3) {};
\end{tikzpicture}
\caption{Depiction of a depth-4 \textsc{nand}-tree as a two-player game. Let $x$ be the input to $\textsc{nand}_4$ shown in the figure. This instance is $A$-winnable, and ${\cal P}_A(x)$ consists of the paths $\{p_1,p_2,p_3,p_4\}$, shown using solid lines. Fault nodes are those with double circles. Each path in ${\cal P}_A(x)$ encounters two faults at nodes where Player $A$ makes decisions. Therefore, ${\cal F}_A(x)=4$.}
\label{fig:nandtree}
\end{figure}
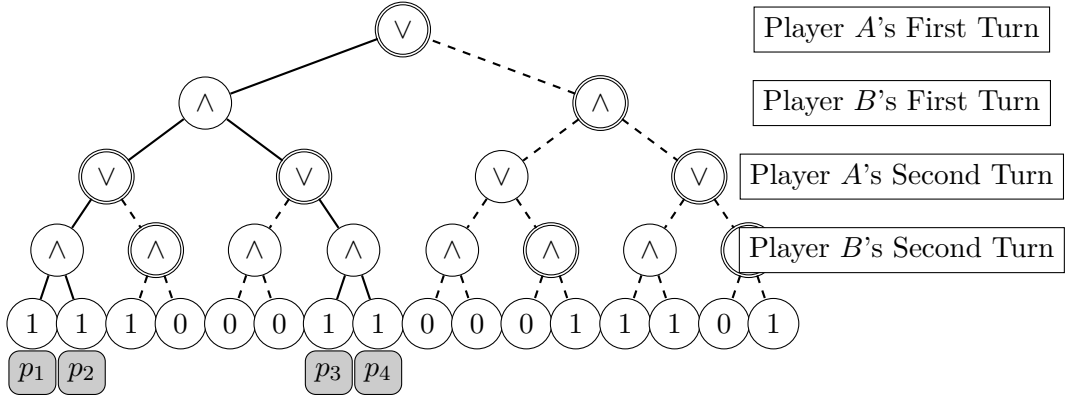

We first show a relationship between effective resistance of $G_{\textsc{nand}_d}(x)$ or $G_{\textsc{nand}_d}'(x)$ and~${\cal F}(x)$:
\begin{restatable}{lemma}{CCeER}\label{thm:kfault_conn}
For any $x\in\{0,1\}^{2^{d}}$, if $d$ is even, then we have $R_{s,t}(G_{\textsc{nand}_d}(x))\leq {\cal F}_A(x)$ and $R_{s',t'}(G'_{\textsc{nand}_d}(x))\leq {\cal F}_B(x)$, while if $d$ is odd, we have $R_{s,t}(G_{\textsc{nand}_d}(x))\leq 2 {\cal F}_A(x)$ and $R_{s',t'}(G'_{\textsc{nand}_d}(x))\leq 2{\cal F}_B(x)$.
\end{restatable}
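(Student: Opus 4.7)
The plan is induction on $d$. A subtlety is that the lemma as stated is not strong enough to serve as its own induction hypothesis: the factor-of-$2$ bound $R_{s',t'}\leq 2{\cal F}_B$ at odd $d$, when combined via series composition at the next even $d$ (Claim \ref{claim:compose}), would inflate to a factor-$4$ bound rather than the required factor of $1$. I will therefore prove by induction the strengthened pair
\begin{align*}
R_{s,t}(G_{\textsc{nand}_d}(x))\leq \tilde c_d\,{\cal F}_A(x)\qquad\text{and}\qquad R_{s',t'}(G'_{\textsc{nand}_d}(x))\leq \tilde c_d^{-1}\,{\cal F}_B(x),
\end{align*}
where $\tilde c_d=1$ for even $d$ and $\tilde c_d=2$ for odd $d$; the lemma then follows since $\tilde c_d^{-1}\leq 2\tilde c_d$. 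The base case $d=0$ is immediate: $G_0$ and $G'_0$ are each a single edge, and the corresponding resistances and fault complexities all equal $1$ in the non-trivial case.

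For the inductive step write $x=(x^L,x^R)$. By construction, $G_d$ is the parallel composition of $G_{d-1}^L$ and $G_{d-1}^R$ when $d$ is even (root $\vee$) and the series composition when $d$ is odd (root $\wedge$); by Claim \ref{claim:compose}, $G'_d$ uses the opposite operation. A useful preliminary observation is that a root fault is possible only at a $\vee$-root for a $1$-instance (the fault lying in $\nu_A$) or at a $\wedge$-root for a $0$-instance (the fault lying in $\nu_B$), since the opposite situation forces both children to be $Z$-winnable for the relevant $Z$. When the root is not a fault, ${\cal F}_Z(x)=\max({\cal F}_Z(x^L),{\cal F}_Z(x^R))$; when it is a fault, ${\cal F}_Z(x)=2\,{\cal F}_Z(x^Y)$, where $x^Y$ is the unique $Z$-winnable sub-instance.

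In each of the four (parity, instance-type) cases, applying the IH to each child reduces the argument to one of the elementary inequalities $\tfrac{\alpha ab}{a+b}\leq \tfrac{\alpha}{2}\max(a,b)$ (for a parallel composition) or $\alpha(a+b)\leq 2\alpha\max(a,b)$ (for a series one); plugging in $\alpha=\tilde c_{d-1}$ reproduces exactly the claimed prefactor $\tilde c_d$ or $\tilde c_d^{-1}$ in each case. In the two fault sub-cases (necessarily inside a parallel composition), one child contributes infinite sub-resistance, so the composition reduces to the finite side, which by the IH is at most $\tilde c_{d-1}\,{\cal F}_Z(x^Y)=\tilde c_{d-1}\,{\cal F}_Z(x)/2\leq \tilde c_d\,{\cal F}_Z(x)$. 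The main obstacle is thus identifying the strengthened induction hypothesis itself: because $G$ and $G'$ undergo \emph{opposite} compositions at every depth, the two prefactors must move in opposite directions, so the asymmetric pair $(\tilde c_d,\tilde c_d^{-1})$, rather than the symmetric pair stated in the lemma, is what the induction actually requires.
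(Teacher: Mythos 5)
Your proof is correct and follows the same induction-on-depth, case-by-case (series/parallel, fault/no-fault) structure as the paper. The noteworthy difference is that the paper proves only the $R_{s,t}\leq \tilde c_d\,{\cal F}_A$ half in detail and then writes ``A similar analysis for ${\cal F}_B(x)$ completes the proof,'' while you have noticed that ``similar'' is not quite literal: because $G'_{\textsc{nand}_d}$ undergoes the \emph{opposite} composition from $G_{\textsc{nand}_d}$ at each level, the constant must move in the opposite direction, so the correct induction hypothesis on the dual side is $R_{s',t'}\leq \tilde c_d^{-1}{\cal F}_B$ (i.e.\ $\tfrac12{\cal F}_B$ at odd depth, not $2{\cal F}_B$). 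Your diagnosis that the naive IH fails at the even-depth series step (producing a factor-$4$ overshoot) is exactly right, and the strengthened pair $(\tilde c_d,\tilde c_d^{-1})$ is in fact forced: the two composition constraints $c_{\text{even}}\geq 2c_{\text{odd}}$ and $c_{\text{odd}}\geq\tfrac12 c_{\text{even}}$ together with the base case $c_{\text{even}}\geq1$ pin down $(1,\tfrac12)$. So your proposal both proves the lemma and makes explicit a point the paper's proof leaves implicit. Two trivial nits: in the fault sub-case your generic expression $\tilde c_{d-1}{\cal F}_Z(x^Y)$ should read $\tilde c_{d-1}^{-1}$ when $Z=B$, and the final comparison only needs $\tilde c_d^{-1}\leq\tilde c_d$ (you wrote $\tilde c_d^{-1}\leq 2\tilde c_d$); neither affects correctness.
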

The proof of Lemma \ref{thm:kfault_conn} can be found in Appendix \ref{app:proofs}.
An immediate corollary of Lemma \ref{thm:kfault_conn} and Theorem \ref{thm:easy-instances} is the following.
\begin{corollary} \label{cor:kfault}
The span program $P_{G_\phi}$ for $\phi=\textsc{nand}_d$ decides $\textsc{Eval}_{\textsc{nand}_d}$ restricted to the domain $X$ in $O(\max_{x\in X}{\cal F}(x))$ queries. In particular, it decides $k$-fault trees (on domain $F^d_k$) in $O(2^k)$ queries.
\end{corollary}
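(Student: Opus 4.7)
The plan is to combine Theorem \ref{thm:easy-instances} directly with Lemma \ref{thm:kfault_conn}, using the trivial weight function $c \equiv 1$ on $E(G_{\textsc{nand}_d})$. Theorem \ref{thm:easy-instances} gives an upper bound on the query complexity of $\textsc{Eval}_{\textsc{nand}_d}$ on domain $X$ of
\[
O\!\left(\sqrt{\max_{x \in X,\, \phi(x)=1} R_{s,t}(G_\phi(x)) \; \times \; \max_{x \in X,\, \phi(x)=0} R_{s',t'}(G_\phi'(x))}\right),
\]
so the task reduces to bounding each of these two maxima by $O(\max_{x \in X} \mathcal{F}(x))$.

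The key observation is that the definition \eqref{eq:FaultDef} sets $\mathcal{F}_A(x) = \infty$ on $0$-instances and $\mathcal{F}_B(x) = \infty$ on $1$-instances, so $\mathcal{F}(x) = \min\{\mathcal{F}_A(x), \mathcal{F}_B(x)\}$ equals $\mathcal{F}_A(x)$ whenever $\phi(x)=1$ and equals $\mathcal{F}_B(x)$ whenever $\phi(x)=0$. Applying Lemma \ref{thm:kfault_conn}, for any $x \in X$ with $\phi(x)=1$ we have $R_{s,t}(G_{\textsc{nand}_d}(x)) \leq 2\mathcal{F}_A(x) = 2\mathcal{F}(x)$, and for any $x \in X$ with $\phi(x)=0$ we have $R_{s',t'}(G_{\textsc{nand}_d}'(x)) \leq 2\mathcal{F}_B(x) = 2\mathcal{F}(x)$ (using the weaker of the two cases in the lemma, so the bound holds for both parities of $d$).

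Substituting into the expression from Theorem \ref{thm:easy-instances}, the argument of the square root is at most $4(\max_{x \in X} \mathcal{F}(x))^2$, and the overall query complexity is $O(\max_{x \in X} \mathcal{F}(x))$, proving the first claim. For the specialization to $k$-fault trees, $X = F^d_k$ is by definition the set of $x$ with $\log \mathcal{F}(x) \leq k$, i.e.\ $\mathcal{F}(x) \leq 2^k$, so the query complexity is $O(2^k)$.

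There is no genuine obstacle here: the entire proof is a direct chaining of the previously established bounds. The only mild care needed is the bookkeeping around $\mathcal{F}_A$ versus $\mathcal{F}_B$ (ensuring that on each type of instance we are applying the correct half of Lemma \ref{thm:kfault_conn} and identifying it with $\mathcal{F}(x)$), and absorbing the constant factor of $2$ from the odd-$d$ case of the lemma into the $O(\cdot)$.
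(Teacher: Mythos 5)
Your proof is correct and takes exactly the route the paper intends: the paper itself only remarks that this is ``an immediate corollary of Lemma \ref{thm:kfault_conn} and Theorem \ref{thm:easy-instances},'' and you supply precisely the missing bookkeeping — picking $c\equiv 1$, noting that $\mathcal{F}(x)=\mathcal{F}_A(x)$ on $1$-instances and $\mathcal{F}(x)=\mathcal{F}_B(x)$ on $0$-instances, applying the appropriate half of the lemma to each maximum, and absorbing the factor of $2$ into the $O(\cdot)$.
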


\begin{proof}[Proof of Theorem \ref{thm:superPoly}]
Theorem \ref{thm:superPoly} is now an immediate consequence of Corollary \ref{cor:kfault}, with $k$ set to $\log(d)$, along with the fact from \cite{ZKH12} that the classical query complexity of evaluating such formulas is $\Omega(d^{\log \log (d)})$. 
\end{proof}

\noindent We will use Corollary \ref{cor:kfault}, along with the following claim, to prove Theorem \ref{thm:stBig}:
\begin{claim}\label{claim:Cutparallel_series}
Let two graphs $G_1$ and $G_2$ each have nodes $s$ and $t$ and let $x^1\in\{0,1\}^{E(G_1)}$ and $x^2\in\{0,1\}^{E(G_2)}$. Suppose we create a new graph $G$ by identifying the $s$ nodes and the $t$ nodes (i.e. connecting the graphs in parallel), then
\begin{align}\label{eq:Cparallel}
C_{s,t}(G(x^1,x^2))=C_{s,t}(G_1(x^1))+C_{s,t}(G_2(x^2))
\end{align}
If we create a new graph $G$ by identifying the $t$ node of $G_1$ with the $s$ node of $G_2$ and relabeling this node $v\not\in\{s,t\}$ (i.e. connecting the graphs in series), then
\begin{align}\label{eq:Cseries}
C_{s,t}(G(x^1,x^2))=\min\{C_{s,t}(G_1(x^1)),C_{s,t}(G_2(x^2))\}.
\end{align}
\end{claim}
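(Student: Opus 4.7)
The plan is to verify both identities by directly relating $st$-cuts of the composed graph to $st$-cuts of the two constituents, treating the parallel and series cases separately. Since $C_{s,t}(H)=\infty$ precisely when $s$ and $t$ are connected in $H$, I will also check that finiteness of $C_{s,t}(G(x^1,x^2))$ matches finiteness of the right-hand side in each case, so that the $\infty$-valued cases need no separate treatment.

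For the parallel identity \eqref{eq:Cparallel}, I start from the observations $E(G)=E(G_1)\sqcup E(G_2)$ and $V(G_1)\cap V(G_2)=\{s,t\}$, which together imply that $s$ and $t$ are disconnected in $G(x^1,x^2)$ iff they are disconnected in both $G_1(x^1)$ and $G_2(x^2)$. The key step is the correspondence: any $st$-cut $\kappa$ of $G(x^1,x^2)$ restricts to an $st$-cut $\kappa|_{V(G_i)}$ of each $G_i(x^i)$, and conversely any pair of $st$-cuts of $G_1(x^1)$ and $G_2(x^2)$ (which necessarily agree on $\{s,t\}$) glues into an $st$-cut of $G(x^1,x^2)$. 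Under this bijection the cost $\sum_{e\in E(G)}|\kappa(u_e)-\kappa(v_e)|$ splits additively over $E(G_1)$ and $E(G_2)$, so minimizing each summand independently yields \eqref{eq:Cparallel}.

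For the series identity \eqref{eq:Cseries}, write $v$ for the shared vertex. Here $s$ and $t$ are disconnected in $G(x^1,x^2)$ iff $s,v$ are disconnected in $G_1(x^1)$ \emph{or} $v,t$ are disconnected in $G_2(x^2)$. Any $st$-cut $\kappa$ of $G(x^1,x^2)$ is classified by $\kappa(v)\in\{0,1\}$: if $\kappa(v)=0$, then $\kappa|_{V(G_1)}$ is an $st$-cut of $G_1(x^1)$ (with $v$ playing the role of ``$t$ of $G_1$''), the $E(G_1)$-contribution to the total cost is at least $C_{s,t}(G_1(x^1))$, and the $E(G_2)$-contribution is nonnegative; the case $\kappa(v)=1$ is symmetric. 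This gives the ``$\geq$'' direction. For ``$\leq$'', I take a cost-optimal $st$-cut $\kappa_1$ of $G_1(x^1)$ (assuming $C_{s,t}(G_1(x^1))<\infty$) and extend it to $V(G)$ by assigning $0$ to every vertex of $V(G_2)\setminus\{v\}$; this is a valid $st$-cut of $G(x^1,x^2)$ whose $E(G_2)$-edges contribute $0$, so the total cost equals $C_{s,t}(G_1(x^1))$. The symmetric construction achieves $C_{s,t}(G_2(x^2))$, so the minimum of the two upper-bounds $C_{s,t}(G(x^1,x^2))$.

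The one step requiring genuine thought is the lower bound in the series case, namely why the value of $\kappa(v)$ forces the total cost to be at least \emph{one} of $C_{s,t}(G_1(x^1))$, $C_{s,t}(G_2(x^2))$ rather than their sum. The qualitative point is that fixing $\kappa(v)$ collapses one of the two sides to a monochromatic assignment whose cut-cost is $0$, so all of the cost can always be ``concealed'' on the other side, and the minimum is the best one can hope for.
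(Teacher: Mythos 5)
Your proof is correct and complete. The paper states this claim without giving a proof (it is used immediately in the proof of Theorem~\ref{thm:stBig} and, like Claim~\ref{claim:parallel_series} on effective resistance under series/parallel composition, is evidently treated as routine), so there is no paper argument to compare against. Your route is the natural one and handles the necessary details: in the parallel case, an $st$-cut of $G(x^1,x^2)$ restricts to an $st$-cut of each piece and pairs of $st$-cuts glue (they agree on $\{s,t\}$ by definition of an $st$-cut), while the cut cost is additive over $E(G)=E(G_1)\sqcup E(G_2)$, giving the sum. In the series case, the value $\kappa(v)$ determines which side inherits a nontrivial cut, giving the lower bound $\min\{C_{s,t}(G_1(x^1)),C_{s,t}(G_2(x^2))\}$, and extending an optimal cut on one side by a constant to the other side yields the matching upper bound. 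Your preliminary remark that the $\infty$ cases match on both sides (connectivity of the composite is the disjunction of the connectivities of the pieces in parallel and the conjunction in series) is exactly the check needed to dispose of the degenerate cases.
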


\begin{proof}[Proof of Theorem \ref{thm:stBig}]
Using Corollary \ref{cor:kfault}, our analysis shows that $st$-$\textsc{conn}_{G_{\textsc{nand}_d},F^d_1}$ can be decided in $O(1)$ queries. 

We apply Eq.\ \eqref{eq:theirbound} to compare to the analysis of \cite{BR12}. We must characterize the quantity $\max\{C_{s,t}(G_{\textsc{nand}_d}(x)):x\in F_1^d, \textrm{and } s,t \textrm{ are not connected}\}$,
since we already have 
\begin{align}
\max_{x\in F_1^d: s,t \textrm{ are connected}}R_{s,t}(G(x)) =O(1).
\end{align}

We now prove that for \emph{every} $x\in\{0,1\}^{2^d}$  such that $\textsc{nand}_d(x)=0$, $C_{s,t}(G_{\textsc{nand}_d}(x)) = 2^{\lfloor d/2\rfloor}$. Thus, for any promise $D$ on the input, as long as there exists some $x\in D$ such that $\textsc{nand}_d(x)=0$, we have $\max_{x\in D:\textsc{nand}_d(x)=0}C_{s,t}(G_{\textsc{nand}_d}(x))=2^{\lfloor{d/2}\rfloor}$. Intuitively, this is because every $st$-cut on any subgraph of $G_{\textsc{nand}_d}$ cuts across $2^{\lfloor d/2\rfloor}$ edges of $G_{\textsc{nand}_d}$. 

The proof is by induction on $d$.
For the base even case, $d=0$ and $G_{\textsc{nand}_0}$ is a single edge connecting $s$ and $t$. The only input $x\in\{0,1\}^{2^0}$ such that $\textsc{nand}_d(x)=0$ is $x=0$, in which case, the $st$-cut is $\kappa(s)=1$ and $\kappa(t)=0$, so the cut is across the unique edge in $G_{\textsc{nand}_0}$,
so $C_{s,t}(G_{\textsc{nand}_0}(x))=1$.

For the induction step, we treat even and odd separately. Suppose $d>0$
is odd. Then $\textsc{nand}_d(x) = \textsc{nand}_{d-1}(x^0)\wedge\textsc{nand}_{d-1}(x^1)$, where $x^0=(x_1,\dots,x_{2^{d-1}})$ and $x^1=(x_{2^{d-1}+1},\dots,x_{2^d})$. Thus $G_{\textsc{nand}_d}(x)$ involves composing two graphs $G_{\textsc{nand}_{d-1}}(x^0)$ and $G_{\textsc{nand}_{d-1}}(x^1)$ in series. 
Since we are assuming $s$ and $t$ are not connected in $G_{\textsc{nand}_d}(x)$, at least one of $G_{\textsc{nand}_{d-1}}(x^0)$ and $G_{\textsc{nand}_{d-1}}(x^1)$ must not be connected. Without loss of generality, suppose $G_{\textsc{nand}_{d-1}}(x^0)$ is not connected and $C_{s,t}(G_{\textsc{nand}_{d-1}}(x^0))\leq C_{s,t}(G_{\textsc{nand}_{d-1}}(x^1))$. By induction, $C_{s,t}(G_{\textsc{nand}_{d-1}}(x^0))=2^{(d-1)/2}$. Thus using Eq.\ \eqref{eq:Cseries} in Claim \ref{claim:Cutparallel_series},
\begin{align}
C_{s,t}(G_{\textsc{nand}_d}(x)) = 2^{(d-1)/2} = 2^{\lfloor d/2\rfloor}.
\end{align}

Now suppose $d>0$
is even. Then $\textsc{nand}_d(x) = \textsc{nand}_{d-1}(x^0)\vee \textsc{nand}_{d-1}(x^1)$. Thus $G_{\textsc{nand}_d}(x)$ involves composing two graphs $G_{\textsc{nand}_{d-1}}(x^0)$ and $G_{\textsc{nand}_{d-1}}(x^1)$ in parallel. 
Since we are assuming $s$ and $t$ are not connected in $G_{\textsc{nand}_d}(x)$, both of $G_{\textsc{nand}_{d-1}}(x^0)$ and  $G_{\textsc{nand}_{d-1}}(x^1)$ must not be connected, and so by induction, we have $C_{s,t}(G_{\textsc{nand}_{d-1}}(x^0))=C_{s,t}(G_{\textsc{nand}_{d-1}}(x^1))=2^{\lfloor(d-1)/2\rfloor}=2^{d/2-1}$, since $d$ is even. Thus using Eq.\ \eqref{eq:Cparallel} in Claim \ref{claim:Cutparallel_series},
\begin{align}
C_{s,t}(G_{\textsc{nand}_d}(x)) = 2^{d/2-1}+2^{d/2-1} = 2^{d/2}=2^{\lfloor d/2\rfloor}.
\end{align}

Therefore, using Eq.\ \eqref{eq:theirbound}, we have that the analysis of \cite{BR12} for $d$-depth \textsc{nand}-trees with inputs in $F_1^d$ gives a query complexity of $O(\sqrt{2^{\lfloor d/2\rfloor}})=O(N^{1/4})$, where $N=2^d$ is the number of input variables. Comparing with our analysis, which gives a query complexity of $O(1),$ we see there is a polynomial to constant improvement. 
\end{proof}

\subsection{Winning the NAND-tree Game}\label{sec:nandWin}

In this section, we describe a quantum algorithm that can be used to help a player make decisions while playing the \textsc{nand}-tree game. 
In particular, we consider the number of queries to $x$ needed by Player $A$ to make
decisions throughout the course of the game in order to win with probability $\geq 2/3$. (In this section, we focus on $A$-winnable trees,
but the case of $B$-winnable trees is similar.)

We first describe a naive strategy, which uses a quantum algorithm \cite{Rei09,R01} 
that decides if a depth-$\depth$ tree is winnable with bounded error in
$O(2^{d/2}\log d)$ queries. If Player $A$
must decide to move to node $v_0$ or $v_1$,
she
evaluates each subtree rooted at $v_0$ and $v_1$, amplifying the success probability to $\Omega(1/d)$ by using $O(\log d)$ repetitions, 
and moves to one that evaluates to 1. 
Since Player $A$ has $O(d)$ decisions to make, this strategy succeeds with bounded error, and since evaluating a \textsc{nand}-tree of depth $r$ costs $O(2^{r/2})$ quantum queries, the total query complexity is:
\begin{equation}
O\left(\textstyle 2\sum_{i=0}^{\frac{d}{2}}2^{\frac{d-2i}{2}}\log d\right) = O\left(2\sum_{i=0}^{\frac{d}{2}}2^{i}\log d\right) =O\left(2^{\frac{d}{2}}\log d\right)=O\left(\sqrt{N}\log\log N\right).
\end{equation}

This strategy does not use the fact that some subtrees may be
easier to win than others. For example, if one choice leads to a subtree with all leaves labeled by 1, whereas the other subtree has all leaves labeled by 0, the player
just needs to distinguish these two disparate cases. More generally, one of the subtrees might have a
 small positive witness size --- i.e., it is very winnable --- whereas
the other has a large positive witness size --- i.e., is not very winnable. 

Our strategy will be to move to the subtree whose formula corresponds to a graph with smaller effective resistance, unless the two subtrees are very close in effective resistance, in which case it doesn't matter which one we
choose. For a depth $d$ game on instance $x$, we show if $R_{s,t}(G_{\textsc{nand}_d}(x))$ is small and Player $B$ plays randomly, this strategy does better than the naive strategy, on average. 

We estimate the effective resistance of both
subtrees of the current node using the witness size estimation algorithm of \cite{IJ15}. In particular, in Appendix \ref{app:approx-analysis} we prove:
\begin{restatable}[$\mathtt{Est}$ Algorithm]{lemma}{appWit}\label{lem:est}
Let $\phi$ be an \textsc{and}-\textsc{or} formula
with constant fan-in $\subf$, $\vee$-depth $d_\vee$ and $\wedge$-depth $d_\wedge$. Then the quantum query complexity of estimating
$R_{s,t}(G_\phi(x))$ (resp. $R_{s,t}(G_{\phi}'(x))$) to relative accuracy
$\epsilon$ is 
$
\tO\left(\frac{1}{\eps^{3/2}}\sqrt{R_{s,t}(G_{\phi}(x))\subf^{d_\vee}}\right)
$ (resp.
$
\tO\left(\frac{1}{\eps^{3/2}}\sqrt{R_{s,t}(G'_{\phi}(x))\subf^{d_\wedge}}\right)
$).
\end{restatable}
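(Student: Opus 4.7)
The plan is to invoke the witness-size estimation algorithm of \cite{IJ15} applied to the span program $P_{G_\phi,c}$ with the unit weight function $c \equiv 1$, and then bound the relevant witness sizes using the recursive series/parallel structure of $G_\phi$ and $G_\phi'$. By Lemma~\ref{lemma:both_witnesses}, estimating $R_{s,t}(G_\phi(x))$ to relative accuracy $\epsilon$ is equivalent (up to a factor of $2$) to estimating the positive witness size $w_+(x,P_{G_\phi,1})$ to relative accuracy $\epsilon$, and estimating $R_{s,t}(G_\phi'(x))=R_{s',t'}(G_\phi'(x))$ is equivalent to estimating $w_-(x,P_{G_\phi,1})/2$.

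The witness-size estimation algorithm of \cite{IJ15} estimates $w_+(x)$ to relative accuracy $\epsilon$ in $\widetilde{O}(\sqrt{w_+(x)\,W_-}/\epsilon^{3/2})$ queries, where $W_-$ is any uniform upper bound on the negative witness size over inputs on which it is finite (and symmetrically for estimating $w_-(x)$ with the roles of $W_+$ and $W_-$ swapped). So the remaining task is to bound $W_-(P_{G_\phi,1})=\max_x w_-(x,P_{G_\phi,1}) = 2\max_x R_{s',t'}(G_\phi'(x))$ by $O(\subf^{d_\vee})$, and $W_+(P_{G_\phi,1})=\max_x w_+(x,P_{G_\phi,1}) = \tfrac{1}{2}\max_x R_{s,t}(G_\phi(x))$ by $O(\subf^{d_\wedge})$, with the maxima taken over $0$-instances (resp.\ $1$-instances) of $\phi$.

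The bound on $W_-$ will be proved by induction on the formula structure, using Claim~\ref{claim:compose}. For a leaf, $R_{s',t'}(G_\phi'(x))=1=\subf^0$. If $\phi=\phi_1\vee\cdots\vee\phi_l$ with $l\le\subf$, then $G_\phi'(x)$ is a series composition of the $G_{\phi_i}'(x^i)$, so by Claim~\ref{claim:parallel_series}, $R_{s',t'}(G_\phi'(x))=\sum_i R_{s',t'}(G_{\phi_i}'(x^i))\le\subf\cdot\subf^{d_\vee(\phi)-1}=\subf^{d_\vee(\phi)}$. If $\phi=\phi_1\wedge\cdots\wedge\phi_l$, then $G_\phi'(x)$ is a parallel composition, and $R_{s',t'}(G_\phi'(x))\le\min_i R_{s',t'}(G_{\phi_i}'(x^i))$; for any $0$-instance $x$ of $\phi$, at least one child $\phi_i$ is a $0$-instance of its subformula, and we apply the induction hypothesis to that child (using $d_\vee(\phi_i)\le d_\vee(\phi)$ at $\wedge$-nodes). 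The symmetric induction on $R_{s,t}(G_\phi(x))$ at $1$-instances, using series composition at $\wedge$-nodes and parallel composition at $\vee$-nodes, gives $W_+\le\subf^{d_\wedge}$ (up to constants).

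The main obstacle is matching the precise form of the witness-size-estimation complexity from \cite{IJ15}: one must check that the relevant ``canonical'' quantity controlling the estimation cost really is bounded by $W_-$ (resp.\ $W_+$) as defined above, rather than some larger quantity depending on all inputs including those on which the relevant witness is infinite. However, for the pure $st$-connectivity span program $P_{G_\phi,1}$ this is standard: the $1$/$0$-partition is exact and the effective-resistance bound holds on the side where the graph (resp.\ dual graph) is connected, which is exactly the side where the opposing witness is finite. Combining the witness-size bounds with the \cite{IJ15} complexity and Lemma~\ref{lemma:both_witnesses} then yields the two stated query complexities.
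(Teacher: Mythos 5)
Your high-level plan is correct in spirit, and your series/parallel induction does correctly bound $\max_{x : \phi(x)=0} R_{s',t'}(G'_\phi(x))$ and $\max_{x : \phi(x)=1} R_{s,t}(G_\phi(x))$ by $\subf^{d_\vee}$ and $\subf^{d_\wedge}$ respectively. But you have mis-stated the complexity of the \cite{IJ15} witness-size estimation algorithm, and the ``main obstacle'' you flag and then wave away is precisely where your argument breaks.

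The algorithm from \cite{IJ15} (Theorem~\ref{theorem:span-est} in the paper) does \emph{not} run in $\tO(\sqrt{w_+(x)\,W_-}/\eps^{3/2})$ queries with $W_-=\max_{x : w_-(x)<\infty} w_-(x)$. Its complexity is controlled by $\widetilde{W}_- = \max_{x\in X}\tilde{w}_-(x,P)$, the maximum over \emph{all} inputs of the \emph{approximate} negative witness size. For a $0$-instance one has $\tilde{w}_-(x)=w_-(x)$, so your induction handles those; but for a $1$-instance $w_-(x)=\infty$ while $\tilde{w}_-(x)$ is a finite quantity (the minimum of $\norm{\omega A}^2$ over functionals $\omega$ with $\omega\tau=1$ that merely minimize, rather than zero out, $\norm{\omega A\Pi_{H(x)}}^2$), and you have not bounded it. Your claim that ``the $1/0$-partition is exact'' and that the resistance bound holds ``on the side where the opposing witness is finite'' is not a justification: the estimation algorithm's cost genuinely involves the approximate witness on the \emph{other} side, where the exact witness does not exist. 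This is not a vacuous concern that can be brushed off.

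The paper closes this gap with a separate argument (Lemma~\ref{lemma:approx_neg_wit} via Claim~\ref{claim:depths}). There, an optimal approximate witness is shown to correspond to a unit flow on the \emph{full} graph $G_\phi$ or its dual $G'_\phi$ (not the input-restricted subgraph), and by a norm-minimality argument its flow decomposition contains no cycles; hence its energy is bounded by (twice) the length of the longest self-avoiding $st$-path, which Claim~\ref{claim:depths} bounds by $\subf^{d_\wedge}$ (resp.\ $\subf^{d_\vee}$ for the dual). That bound holds uniformly over all $x$, $0$-instances and $1$-instances alike, which is what $\widetilde{W}_\pm$ requires. Your induction on effective resistance of subgraphs does not yield this uniform bound; to repair your proof you need to replace your bound on $\max_{\phi(x)=0} R_{s',t'}(G'_\phi(x))$ with a bound on $\max_{x}\tilde{w}_-(x,P_{G_\phi})$, e.g.\ by reproducing the flow-decomposition/longest-path argument.
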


Let $\mathtt{Est}(x)$ be the
algorithm from Lemma \ref{lem:est} with $\eps=\frac{1}{3}$, and $\phi=\textsc{nand}_d$, so $\subf=2$, and both $d_\vee$ and $d_\wedge$ are at most $\lceil d/2\rceil$. While estimating the effective resistance of two subtrees, we
only care about which of the subtrees has the smaller effective resistance, so we do not want to wait for both iterations of \texttt{Est}
to terminate.  Let $p(d)$ be some polynomial function in $d$ such
that \texttt{Est}$(x)$ always terminates after at most $p(d)\sqrt{R_{s,t}(G_{\phi}(x))}2^{d/4}$ queries, for all $x\in\{0,1\}^{2^d}$.  We define a subroutine,
$\texttt{Select}(x^0,x^1)$, that takes two instances, $x^0,x^1\in \{0,1\}^{2^{d-1}}$,
and outputs a bit $b$ such that $R_{s,t}(G_{\textsc{nand}_{d-1}}(x^b))\leq 2R_{s,t}(G_{\textsc{nand}_{d-1}}(x^{\bar{b}}))$, where $\bar{b}=b\oplus 1$. $\texttt{Select}$ works
as follows. It runs \texttt{Est}$(x^0)$ and $\texttt{Est}(x^1)$ in parallel. 
If one of these programs, say $\texttt{Est}(x^b)$,
outputs some estimate $w_b$, then it terminates the other program
after $p(d)\sqrt{w_b}2^{d/4}$ steps. If only the algorithm running on $x^b$ 
has terminated after
this time, it outputs $b$. If both programs have terminated, it outputs a bit
$b$ such that $w_b\leq w_{\bar{b}}$.
  In Appendix \ref{app:nandWin}, we prove the following lemma.

\begin{restatable}{lemma}{select}\label{lemma:select}
Let $x^0,x^1\in\{0,1\}^{2^d}$ be instances of \textsc{nand}$_d$ with at least one of them a 1-instance. Let $N=2^d$, and $w_{\min}=\min\{R_{s,t}(G_{\textsc{nand}_d}(x^0)),R_{s,t}(G_{\textsc{nand}_d}(x^1))\}$. Then
$\mathtt{Select}(x^0,x^1)$  terminates after
$\widetilde{O}\left(N^{1/4}\sqrt{w_{\min}}\right)$
 queries to
$(x^0,x^1)$ and outputs $b$ such that $R_{s,t}(G_{\textsc{nand}_d}(x^b))\leq
{2}R_{s,t}(G_{\textsc{nand}_d}(x^{\bar{b}}))$ with bounded error.
\end{restatable}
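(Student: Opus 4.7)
The plan is to analyze the query complexity and correctness of $\mathtt{Select}$ separately, leveraging the guarantees of $\mathtt{Est}$ from Lemma~\ref{lem:est}. Write $w_c := R_{s,t}(G_{\textsc{nand}_d}(x^c))$ for $c \in \{0,1\}$, and let $b^\star$ denote the index with $w_{b^\star} = w_{\min}$. I would start by instantiating Lemma~\ref{lem:est} with $\phi = \textsc{nand}_d$ (so $\subf = 2$ and $d_\vee, d_\wedge \le \lceil d/2 \rceil$) and $\eps = 1/3$: each call $\mathtt{Est}(x^c)$ uses at most $p(d)\sqrt{w_c}\, 2^{d/4}$ queries, and with failure probability boosted to at most $1/10$ by $O(\log d)$-fold majority amplification, it outputs $w_c^{\mathrm{est}} \in [\tfrac{2}{3} w_c,\, \tfrac{4}{3} w_c]$. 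On a $0$-instance, where $w_c = \infty$, the call does not terminate within any finite budget.

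For the query count, I would argue that, since $\mathtt{Est}$'s worst-case runtime is monotone in the resistance (as is natural for an amplitude-estimation-style procedure), the run $\mathtt{Est}(x^{b^\star})$ halts first, in at most $p(d)\sqrt{w_{\min}}\, 2^{d/4}$ queries, and triggers a budget of $p(d)\sqrt{w_{b^\star}^{\mathrm{est}}}\, 2^{d/4}$ additional queries for the other run. Since $w_{b^\star}^{\mathrm{est}} \le \tfrac{4}{3} w_{\min}$, this additional budget is $O(p(d)\sqrt{w_{\min}}\, 2^{d/4})$ as well. Summing and noting that $p(d) = \mathrm{polylog}(N)$ with $N = 2^d$ gives the total query count $\widetilde{O}(N^{1/4}\sqrt{w_{\min}})$.

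For correctness, I would condition on both $\mathtt{Est}$ runs succeeding, which has probability at least $4/5$ by a union bound. If $\mathtt{Est}(x^{\bar{b^\star}})$ fails to terminate within its allotted budget (in particular whenever $x^{\bar{b^\star}}$ is a $0$-instance), then $\mathtt{Select}$ outputs $b^\star$, and the desired bound $w_{b^\star} \le 2 w_{\bar{b^\star}}$ holds trivially since $w_{b^\star} \le w_{\bar{b^\star}}$. If it does terminate, $\mathtt{Select}$ outputs the index with smaller estimate; calling this index $b$, the accuracy guarantee gives $\tfrac{2}{3} w_b \le w_b^{\mathrm{est}} \le w_{\bar b}^{\mathrm{est}} \le \tfrac{4}{3} w_{\bar b}$, i.e.\ $w_b \le 2 w_{\bar b}$, as required.

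The main technical obstacle is the budget tuning. The $\eps = 1/3$ relative accuracy of $\mathtt{Est}$ just barely leaves enough slack so that the budget $p(d)\sqrt{w_{b^\star}^{\mathrm{est}}}\, 2^{d/4}$ is simultaneously (i) large enough to ensure $\mathtt{Est}(x^{\bar{b^\star}})$ terminates whenever $w_{\bar{b^\star}}$ is within a factor of two of $w_{\min}$, so that the comparison step is never forced into a wrong answer, and (ii) small enough that the total query count stays $\widetilde{O}(N^{1/4}\sqrt{w_{\min}})$ when $w_{\bar{b^\star}}$ is much larger than $w_{\min}$. The factor-$2$ slack in the output guarantee is precisely what absorbs this $\tfrac{1}{3}$-multiplicative error.
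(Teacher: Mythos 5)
There is a genuine gap in your runtime/correctness analysis, centered on the claim that ``$\mathtt{Est}$'s worst-case runtime is monotone in the resistance, so $\mathtt{Est}(x^{b^\star})$ halts first.'' The bound $p(d)\sqrt{w_c}2^{d/4}$ is only an \emph{upper} bound on $\mathtt{Est}$'s running time; nothing rules out $\mathtt{Est}(x^{\bar{b^\star}})$ halting earlier than $\mathtt{Est}(x^{b^\star})$ even when $w_{\bar{b^\star}}>w_{b^\star}$. Your entire subsequent analysis is conditioned on $b^\star$ halting first: the budget $p(d)\sqrt{w_{b^\star}^{\mathrm{est}}}2^{d/4}$ you write down, the conclusion ``$\mathtt{Select}$ outputs $b^\star$,'' and the ``trivially $w_{b^\star}\le 2w_{\bar{b^\star}}$'' step all silently assume this. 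In the scenario where $\mathtt{Est}(x^{\bar{b^\star}})$ halts first and $\mathtt{Est}(x^{b^\star})$ then fails to finish within the budget set by $w_{\bar{b^\star}}^{\mathrm{est}}$, $\mathtt{Select}$ outputs $\bar{b^\star}$, and you still owe a proof that $w_{\bar{b^\star}}\le 2w_{b^\star}$. This is precisely where the paper's argument does real work: it conditions on \emph{whichever} program halted first (not on which resistance is smaller), and proves the factor-2 bound by contradiction --- if the other resistance were less than half, the other $\mathtt{Est}$ call would have been forced to terminate inside the budget because $w_0\geq\tfrac{2}{3}R_{s,t}(G(x^0))$ and $\tfrac{1}{\sqrt 2}<\sqrt{2/3}$. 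That numeric margin, not monotonicity, is what makes the budget well-tuned.

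Relatedly, your closing intuition about the budget tuning is stated backwards: you claim the budget is ``large enough to ensure $\mathtt{Est}(x^{\bar{b^\star}})$ terminates whenever $w_{\bar{b^\star}}$ is within a factor of two of $w_{\min}$,'' but that would require the budget to exceed $p(d)\sqrt{2w_{\min}}2^{d/4}$, while the budget can be as small as $p(d)\sqrt{(2/3)w_{\min}}2^{d/4}$, so it does not hold. The correct guarantee is the contrapositive of the paper's contradiction: the budget forces termination whenever the \emph{other} resistance is less than \emph{half} of the first-to-halt one's. The second-case argument (both calls terminate, output the smaller estimate, use $\tfrac{1+\eps}{1-\eps}=2$) and the general structure of leveraging Lemma~\ref{lem:est} do match the paper; the missing ingredient is the contradiction argument that handles the timeout case without assuming anything about which run finishes first.
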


\noindent Using Lemma \ref{lemma:select}, we can prove the following (the inductive proof is in
Appendix \ref{app:nandWin}):
\begin{restatable}{theorem}{win}\label{thm:winning}
Let $x\in\{0,1\}^N$ for $N=2^{d}$
be an $A$-winnable input to $\textsc{nand}_d$. At every node $v$ 
where Player $A$ makes a decision, let Player $A$
use the $\mathtt{Select}$ algorithm in the following way.
Let $v_0$ and $v_1$ be the two children of $v$, with
inputs to the respective subtrees of $v_0$ and $v_1$ given
by $x^0$ and $x^1$ respectively. Then Player $A$ moves
to $v_b$ where $b$ is the outcome that occurs a majority of
times when $\mathtt{Select}(x^0,x^1)$
is run $O(\log d)$ times. 
Then if Player $B$, at his decision nodes, chooses left
and right with equal probability, Player $A$ will
win the game with probability at least $2/3$, and 
will use
$\widetilde{O}\left(N^{1/4}\sqrt{R_{s,t}(G_{\textsc{nand}_d}(x))}\right)$ queries on average, where the average
is taken over the randomness of Player $B$'s choices. 
\end{restatable}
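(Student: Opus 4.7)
I would prove Theorem~\ref{thm:winning} by induction on the depth $d$, tracking simultaneously Player~$A$'s winning probability and her expected query count. The base cases $d=0,1$ are trivial because Player~$A$ makes no decisions. For the inductive step, the key plan is to use Lemma~\ref{lem:formula-st} and Claim~\ref{claim:parallel_series} to observe that $G_{\textsc{nand}_d}$ decomposes as a series--parallel network aligned with the $\wedge/\vee$ structure, so I can track how the effective resistance $W := R_{s,t}(G_{\textsc{nand}_d}(x))$ of the current subtree evolves across one pair of moves.

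On Player~$A$'s turn (root a $\vee$-node at even depth), the two subtrees with resistances $R_0, R_1$ are combined in parallel, so $W = R_0 R_1/(R_0+R_1)$ and hence $w_{\min} := \min\{R_0, R_1\} \le 2W$. By Lemma~\ref{lemma:select}, $O(\log d)$ repetitions of $\mathtt{Select}(x^0, x^1)$ followed by a majority vote use $\tO(2^{d/4}\sqrt{W})$ queries and, for any constant $c$ we are willing to pay for, have failure probability at most $1/d^c$. Conditional on success, the output $b$ satisfies $R_b \le 2 R_{\bar b}$; writing $r := R_b/R_{\bar b} \le 2$ and inverting the parallel formula gives $R_b = (1+r) W \le 3W$. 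So after Player~$A$'s move, she is in an $A$-winnable subtree at depth $d-1$ with resistance at most $3W$. On Player~$B$'s turn (root a $\wedge$-node at odd depth), the subtrees combine in series, $W = R_0 + R_1$, and $B$'s uniform choice lands her in a subtree whose resistance has expectation $W/2$, while Player~$A$ spends no queries.

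Let $g(d,W)$ denote the supremum, over $A$-winnable depth-$d$ instances of resistance at most $W$, of the expected query count of Player~$A$'s strategy (the expectation being over Player~$B$'s coins and $\mathtt{Select}$'s internal randomness). The previous paragraph implies, for even $d$, a recurrence
\begin{align*}
g(d,W) \;\le\; \tO\!\bigl(2^{d/4}\sqrt{W}\bigr) \;+\; \mathbb{E}_{R'}\bigl[g(d-2, R')\bigr], \qquad \mathbb{E}[R'] \le \tfrac{3}{2}W,
\end{align*}
where $R'$ is the random resistance after one Player~$A$ move (a factor of $3$) followed by one Player~$B$ move (a factor of $1/2$). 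Plugging in the inductive hypothesis $g(d-2, R') \le C \cdot 2^{(d-2)/4}\sqrt{R'}$ and applying Jensen's inequality to the concave function $\sqrt{\cdot}$ yields $\mathbb{E}[g(d-2, R')] \le C \cdot 2^{(d-2)/4}\sqrt{3W/2}$. The resulting recurrence unrolls to a geometric sum with ratio $2^{-1/2}\sqrt{3/2} = \sqrt{3/4} < 1$, which converges and gives $g(d,W) = \tO(2^{d/4}\sqrt{W}) = \tO(N^{1/4}\sqrt{W})$, as required. The odd-$d$ case is analogous, starting with Player~$B$'s move and using $\sqrt{R_0} + \sqrt{R_1} \le \sqrt{2(R_0+R_1)}$.

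For the winning probability, I would observe that as long as every $\mathtt{Select}$ call succeeds, Player~$A$ remains in an $A$-winnable subtree throughout: the guarantee $R_b \le 2R_{\bar b}$ forces her to pick a finite-resistance (hence $A$-winnable) child whenever only one of the two children is $A$-winnable, and Player~$B$'s move from any $A$-winnable $\wedge$-node lands in an $A$-winnable subtree because \emph{both} children of such a node are $A$-winnable. Amplifying $\mathtt{Select}$'s failure probability to $1/d^3$ and union-bounding over the $O(d)$ decisions bounds the total failure probability by $1/3$. The main obstacle I foresee is the subtle point that $g(d,W)$ is defined only on $A$-winnable inputs: on the low-probability bad event where some $\mathtt{Select}$ fails and Player~$A$ is pushed into a non-$A$-winnable subtree, subsequent $\mathtt{Select}$ calls could be invoked with $w_{\min} = \infty$ and might not terminate, so one must either impose a polynomial global query cap or use sharp enough amplification to make the expected-query bound rigorous. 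Handling this carefully does not change the leading asymptotics.
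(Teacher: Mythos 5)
Your proof is correct and follows the same basic strategy as the paper's: amplify $\mathtt{Select}$ by repetition and union-bound the $O(d)$ decision nodes to get the $2/3$ success probability, then bound the expected query count (conditional on the error-free branch) by induction on depth, using Jensen's inequality to average over Player~$B$'s uniform choice and $\mathtt{Select}$'s guarantee $R_b\le 2R_{\bar b}$ to control the resistance after Player~$A$'s choice. Your organization is somewhat cleaner: you collapse two moves (one $A$, one $B$) into a single inductive step driven by a geometric sum with ratio $\sqrt{3}/2$, and you derive the uniform bound $R_b=(1+r)W\le 3W$ from the parallel-resistance identity, whereas the paper does one move per step (tracking separate explicit constants $2^{d/4+5}$ and $2^{d/4+11/2}$ for odd and even depth) and splits Player~$A$'s step into two cases according to whether $R_b$ or $R_{\bar b}$ is the smaller. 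Incidentally, your unified bound sidesteps a small arithmetic slip in the paper's second case, where the parallel-resistance relation is written as $R_{\bar b}=W\bigl(1+R_{\bar b}/R_b\bigr)^{-1}$ (leading to the claim $R_{\bar b}\le \tfrac{2}{3}W$); the inverse should not be there, and in fact $R_{\bar b}>W$ always, so the constant in that case must be slightly larger (e.g.\ $R_{\bar b}<2W$) --- this is harmless since the constants are absorbed by $\widetilde{O}$. You also explicitly flag the subtlety that on the low-probability failure branch $\mathtt{Select}$ could be called with $w_{\min}=\infty$ and fail to terminate; the paper handles this only implicitly by restricting its analysis to ``the error-free case,'' so your observation that a query cap (or an interpretation of the expectation conditional on success) is needed to make the statement fully rigorous is a fair and worthwhile point.
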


\section{Acknowledgments}
Both authors thank Arjan Cornelissen and Alvaro Piedrafita for making us aware of a bug in Lemma~\ref{lem:time}.
SK is funded by the US Department of Defense. SJ acknowledges funding provided by the Institute for Quantum Information and Matter, an NSF Physics Frontiers Center (NFS Grant PHY-1125565) with support of the Gordon and Betty Moore Foundation (GBMF-12500028); and an NWO WISE Fellowship.

\bibliographystyle{plainnat}

\bibliography{nandbib}

\appendix

\section{Analysis of the Span Program for {\it st}-Connectivity}\label{app:st-proofs}

In this section, we analyze the complexity of our span-program-based algorithms, proving  Lemma \ref{lemma:both_witnesses}, first stated in Section \ref{sec:stconn}, which relates witness sizes of the span program $P_{G,\Ohm}$ to the effective resistance of graphs related to $G$.

We need the concept of a circulation, which is like a flow but with no source and no sink.
\begin{definition}[Circulation] 
A \emph{circulation} on a graph $G$ is a function $\theta:\overrightarrow{E}(G)\rightarrow\mathbb{R}$ such that:
\begin{enumerate}
\item For all $(u,v,\edgeL )\in \overrightarrow{E}(G)$, $\theta(u,v,\edgeL )=-\theta(v,u,\edgeL )$;\label{item1}
\item for all $u\in V(G)$, $\sum_{v,\edgeL : (u,v\edgeL )\in\overrightarrow{E}(G)}\theta(u,v,\edgeL )=0$. \label{item3}
\end{enumerate}
\end{definition}

The following easily verified observations will be useful in several of the remaining proofs in this section.

\begin{claim}\label{claim:flow-decomposition}
Let $\theta$ be a unit $st$-flow in some multigraph $G$. We can consider the corresponding vector $\ket{\theta}=\sum_{(u,v,\edgeL )\in\overrightarrow{E}(G)}\theta(u,v,\edgeL )\ket{u,v,\edgeL }$. Then $\ket{\theta}$ can be written as a linear combination of vectors corresponding to self-avoiding $st$-paths and cycles that are edge-disjoint from these paths.

Let $\sigma$ be a circulation on $G$. Then $\ket{\sigma}$ can be written as a linear combination of cycles in $G$. Furthermore, $\ket{\sigma}$ can be written as a linear combination of cycles such that each cycle goes around a face of $G$. 
\end{claim}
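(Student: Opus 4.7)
Plan: I would prove both parts by induction on $|\mathrm{supp}(\theta)|$, the number of directed edges $(u,v,\lambda)$ carrying nonzero flow value, using the classical flow-decomposition argument. For the first part, if $\theta \neq 0$, condition (2) in the definition of a unit $st$-flow forces net outflow $1$ at $s$, so there exists a directed edge $(s,v,\lambda)$ with $\theta(s,v,\lambda) > 0$. Starting at $s$, I would follow directed edges of strictly positive $\theta$-value; by conservation condition (3) at every interior vertex, the walk is always extendable, so it must either first reach $t$ (from which I extract a self-avoiding $st$-path $P$ by excising any repeated-vertex subwalk) or first revisit a vertex (revealing a simple cycle $C$ contained in the positive support). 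Setting $\alpha$ to the minimum value of $\theta$ on $P$ or $C$ and subtracting $\alpha \ket{P}$ or $\alpha \ket{C}$ zeros at least one directed edge, reducing to a strictly smaller support where the induction hypothesis applies. Concatenating the extracted pieces yields a decomposition of $\ket\theta$ into self-avoiding $st$-paths and simple cycles.

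To enforce the edge-disjointness of cycles from paths, I would run the procedure in two phases: extract self-avoiding $st$-paths for as long as the positive support still admits any $st$-walk, and only then switch to extracting cycles (at which point the remainder is necessarily a pure circulation). If residual circulation value still lingers on an edge of a previously extracted path, I would absorb that value into the path's coefficient by combining against a compatible cycle, clearing path/cycle edge-overlap one edge at a time; each absorption strictly reduces the number of path edges still carrying residual flow, so the procedure terminates.

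For the second part, the same walk-extraction applied to a circulation produces only cycles (the walk cannot reach a nonexistent sink), so $\ket\sigma$ is written as a linear combination of simple cycles in $G$. To upgrade each simple cycle to face cycles, I would invoke the classical planar-graph fact that in a fixed planar embedding, the bounded face boundaries form a basis of the $\mathbb{Z}$-cycle space of $G$: any simple cycle $C$ encloses a subset $S$ of bounded faces, and $\ket{C} = \sum_{f \in S} \ket{\partial f}$ once a consistent orientation-sign convention is fixed on the edges. Substituting this expansion into the simple-cycle decomposition writes $\ket\sigma$ as a linear combination of cycles each going around a single face, as required. The fiddliest step in all of this will be the edge-disjointness clause in the first part — a small parallel-edges example shows that naive greedy extraction can easily produce cycles sharing edges with paths, so the two-phase procedure and the absorption bookkeeping above (or, alternatively, an abstract dimension-theoretic argument in the cycle space) are what make the disjointness genuinely hold rather than being an immediate byproduct of the induction.
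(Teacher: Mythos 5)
The paper states this claim without proof (it is introduced as one of several ``easily verified observations''), so there is no canonical proof to compare against. Your walk-extraction induction correctly handles the straightforward half: decomposing a unit $st$-flow into self-avoiding $st$-paths and simple cycles, and decomposing a circulation into simple cycles. The face-boundary upgrade in the second part is also fine, and you are right that it is where the implicit planarity assumption enters (``faces of $G$'' presupposes a planar embedding).

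The gap is exactly where you flagged it: the edge-disjointness of the cycles from the paths. Your two-phase extraction plus ``absorption'' is not yet a proof. A concrete failure of the greedy scheme: take $V = \{s,a,b,t\}$, edges $e_1 = \{s,a\}$, $e_2 = \{a,b\}$, $e_3 = \{a,b\}$ (parallel), $e_4 = \{b,t\}$, and the unit flow $\theta(s,a,e_1)=1$, $\theta(a,b,e_2)=2$, $\theta(b,a,e_3)=1$, $\theta(b,t,e_4)=1$. Greedy extraction pulls out the path $e_1 e_2 e_4$ with weight $1$, leaving a residual cycle on $e_2, e_3$ which shares $e_2$ with the path. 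The disjoint decomposition $\ket{\theta} = 2\ket{p'} - \ket{p''}$ (with $p' = e_1e_2e_4$, $p'' = e_1e_3e_4$) does exist, but it is obtained not by ``absorbing'' a cycle value into a single path coefficient but by trading the cycle $e_2 - e_3$ for a \emph{difference of two paths}. This is the actual content you need: every cycle whose edges all lie on $st$-paths must be expressible as a linear combination of $st$-paths. Your sketch never establishes this, and the phrase ``combining against a compatible cycle'' does not describe a terminating procedure --- it is possible for the residual cycle to overlap several paths in several edges, and absorbing one overlap can create another.

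A cleaner route, which is probably the ``dimension-theoretic argument'' you gestured at but did not carry out: let $H$ be the support of $\theta$ and let $E_1 \subseteq E(H)$ be the set of edges lying on some self-avoiding $st$-path of $H$. One shows that $\theta|_{E_1}$ is again a unit $st$-flow and $\theta|_{E(H)\setminus E_1}$ is a circulation (this needs the structural fact that each component of $E(H)\setminus E_1$ hangs off $E_1$ at a single cut-vertex, else its edges would lie on $st$-paths and belong to $E_1$). The complement circulation decomposes into cycles that touch no edge of $E_1$, hence no edge of any path. For $\theta|_{E_1}$, one argues that in the graph $(V,E_1)$ with $s$ and $t$ identified to a single vertex $v_0$, every edge lies on a cycle through $v_0$, so the graph is bridgeless and the cycles through $v_0$ span its entire cycle space; translating back, $st$-paths span the flow space of $(V,E_1)$, so $\theta|_{E_1}$ is a linear combination of paths alone. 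None of this is long, but it needs to actually be said --- in its current form your ``absorption'' step is a placeholder, not an argument, and without it the claim's edge-disjointness conclusion (which is what Lemma~\ref{lemma:approx_neg_wit} actually relies on, via $\braket{c_i}{p_j}=0$) is unproven.
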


\noindent The next claim shows a direct correspondence between positive witnesses, and $st$-flows. 

\begin{claim}\label{claim:pos-wit}
Fix a span program $P_{G,\Ohm}$ as in \eqref{eq:P}. Call $\ket{w}\in H$ a positive witness in $P_{G,\Ohm}$ if $A\ket{w}=\tau$ (note that such a $\ket{w}$ is not necessarily a positive witness for any particular input $x$). Then if $\theta$ is a unit $st$-flow in $G$, $\frac{1}{2}\sum_{(u,v,\edgeL )\in\overrightarrow{E}(G)}\frac{\theta(u,v,\edgeL )}{\sqrt{\Ohm(\{u,v\},\lambda)}}\ket{u,v,\edgeL }$ is a positive witness in $P_{G,\Ohm}$, and furthermore, if $\ket{w}$ is a positive witness in $P_{G,\Ohm}$, then $\theta(u,v,\edgeL )=\sqrt{\Ohm(\{u,v\},\lambda)}(\braket{w}{u,v,\edgeL }-\braket{w}{v,u,\edgeL})$ is a unit $st$-flow in $G$.
\end{claim}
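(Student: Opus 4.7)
The plan is to verify both directions by direct computation, exploiting the antisymmetry of $\theta$ together with the symmetric structure of $A$ (which treats each unordered edge $\{u,v\}$ via the difference $\ket{u}-\ket{v}$).

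For the forward direction, given a unit $st$-flow $\theta$, I would set $\ket{w}=\frac{1}{2}\sum_{(u,v,\lambda)\in\overrightarrow{E}(G)}\frac{\theta(u,v,\lambda)}{\sqrt{c(\{u,v\},\lambda)}}\ket{u,v,\lambda}$ and compute $A\ket{w}$ directly. The square roots cancel, leaving $\frac{1}{2}\sum_{(u,v,\lambda)}\theta(u,v,\lambda)(\ket{u}-\ket{v})$. Collecting the coefficient of a fixed $\ket{u}$ and applying $\theta(v,u,\lambda)=-\theta(u,v,\lambda)$ folds the ``$-\ket{v}$" terms into the ``$+\ket{u}$" terms, producing exactly $\sum_{v,\lambda:(u,v,\lambda)\in\overrightarrow{E}(G)}\theta(u,v,\lambda)$. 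The flow-conservation axioms for $\theta$ then give coefficient $+1$ at $u=s$, $-1$ at $u=t$, and $0$ elsewhere, i.e.\ $A\ket{w}=\ket{s}-\ket{t}=\tau$, so $\ket{w}$ is a positive witness.

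For the reverse direction, given any $\ket{w}\in H$ with $A\ket{w}=\tau$, I would define $\theta(u,v,\lambda)=\sqrt{c(\{u,v\},\lambda)}(\braket{w}{u,v,\lambda}-\braket{w}{v,u,\lambda})$ and check the three defining properties of a unit $st$-flow. Antisymmetry is immediate from swapping the two inner products. For conservation, I would again compute the coefficient of $\ket{u}$ in $A\ket{w}$ from the definition of $A$; it equals $\sum_{v,\lambda:(u,v,\lambda)\in\overrightarrow{E}(G)}\sqrt{c(\{u,v\},\lambda)}(\braket{u,v,\lambda}{w}-\braket{v,u,\lambda}{w})=\sum_{v,\lambda:(u,v,\lambda)\in\overrightarrow{E}(G)}\theta(u,v,\lambda)$. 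Since this coefficient equals the coefficient of $\ket{u}$ in $\tau=\ket{s}-\ket{t}$, conservation at internal vertices and the unit net flux at $s$ and $t$ follow simultaneously.

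I do not expect any real obstacle here: both directions amount to matching coefficients of the basis vectors $\ket{u}$ in $U$ after expanding $A$, and the only subtle point is bookkeeping the orientation ``$(u,v,\lambda)$ vs.\ $(v,u,\lambda)$" so that the antisymmetry of $\theta$ combines cleanly with the antisymmetric column $\ket{u}-\ket{v}$ of $A$. The two constructions are evidently inverse up to the identification $\ket{w}\leftrightarrow\theta$, so no separate argument is needed to relate them.
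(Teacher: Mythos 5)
Your proposal is correct and follows essentially the same route as the paper's proof: both directions are direct coefficient-matching computations, expanding $A\ket{w}$, regrouping by the basis vectors $\ket{u}$ of $U$, and using the antisymmetry of $\theta$ (together with the fact that $\overrightarrow{E}(G)$ lists each undirected edge twice) to merge the $+\ket{u}$ and $-\ket{v}$ contributions. The orientation bookkeeping you flag as the ``only subtle point'' is precisely what the paper handles by splitting the sum into two halves and reindexing, and your sketch of folding the $-\ket{v}$ terms into $+\ket{u}$ terms via $\theta(v,u,\lambda)=-\theta(u,v,\lambda)$ yields the same conclusion.
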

\begin{proof}
The proof is a straightforward calculation. Let $\theta$ be a unit $st$-flow on $G$. Then
\begin{eqnarray}
&& A\left(\frac{1}{2}\sum_{(u,v,\edgeL )\in\overrightarrow{E}(G)}\frac{\theta(u,v,\edgeL )}{\sqrt{\Ohm(\{u,v\},\lambda)}}\ket{u,v,\edgeL}\right) \nonumber\\
 &=& \frac{1}{2}\sum_{(u,v,\edgeL )\in\overrightarrow{E}(G)}\theta(u,v,\edgeL )(\ket{u}-\ket{v})\nonumber\\
&=& \frac{1}{2}\sum_{u\in V(G)}\left(\sum_{v,\edgeL:(u,v,\edgeL )\in \overrightarrow{E}(G)}\theta(u,v,\edgeL )\right)\ket{u}+\frac{1}{2}\sum_{v\in V(G)}\left(\sum_{u,\edgeL:(v,u,\edgeL)\in\overrightarrow{E}(G)}\theta(v,u,\edgeL)\right)\ket{v}\nonumber\\
&=& \frac{1}{2}\left(\ket{s}-\ket{t}\right)+\frac{1}{2}\left(\ket{s}-\ket{t}\right)=\tau.
\end{eqnarray}
Above we have used that $\theta(u,v,\edgeL )=-\theta(v,u,\edgeL)$, and $\sum_{v,\edgeL:(u,v,\edgeL )\in \overrightarrow{E}(G)}\theta(u,v,\edgeL )=0$ when $u\not\in\{s,t\}$, 1 when $u=s$, and $-1$ when $u=t$. 

To prove the second half of the claim, let $\ket{w}$ be such that $A\ket{w}=\tau$, and define $\theta(u,v,\edgeL )=\sqrt{\Ohm(\{u,v\},\lambda)}(\braket{w}{u,v,\edgeL }-\braket{w}{v,u,\edgeL})$. We immediately see that $\theta(u,v,\edgeL )=-\theta(v,u,\edgeL)$ for all $(u,v,\edgeL )$. Furthermore, we have:
\begin{align}
\ket{s}-\ket{t} =A\ket{w}
&=\sum_{u\in V(G)}\left(\sum_{(u,v,\edgeL )\in\overrightarrow{E}(G)}\sqrt{\Ohm(\{u,v\},\lambda)}\braket{u,v,\edgeL }{w}\right)\ket{u}\nonumber\\
&\qquad\qquad-\sum_{v\in V(G)}\left(\sum_{(u,v,\edgeL )\in\overrightarrow{E}(G)}\sqrt{\Ohm(\{u,v\},\lambda)}\braket{u,v,\edgeL }{w}\right)\ket{v} \nonumber\\
&=
\sum_{u\in V(G)}\left(\sum_{(u,v,\edgeL )\in\overrightarrow{E}(G)}\sqrt{\Ohm(\{u,v\},\lambda)}(\braket{u,v,\edgeL }{w}-\braket{v,u,\edgeL}{w})\right)\ket{u}\nonumber\\&=
\sum_{u\in V(G)}\left(\sum_{(u,v,\edgeL )\in\overrightarrow{E}(G)}\theta(u,v,\edgeL )\right)\ket{u}.
\end{align}
Thus, for all $u\in V(G(x))\setminus\{s,t\}$, $\sum_{(u,v,\edgeL )\in\overrightarrow{E}(G)}\theta(u,v,\edgeL )=0$, and $\sum_{(s,v,\edgeL )\in\overrightarrow{E}(G)}\theta(s,v,\edgeL)=\sum_{(v,t,\edgeL)\in\overrightarrow{E}(G)}\theta(v,t,\edgeL)=1$. Thus, $\theta$ is a unit $st$-flow on $G$.
\end{proof}

\noindent The next claim shows a direct correspondence between negative witnesses, and $s't'$-flows. 
\begin{claim}\label{claim:neg-wit}
For a planar graph $G$, fix a span program $P_{G,\Ohm}$ as in \eqref{eq:P}. Call a linear function $\omega:V(G)\rightarrow\mathbb{R}$ a negative witness if $\omega\tau=1$. Then $\theta((u,v,\edgeL )^\dagger)=\omega(u)-\omega(v)$ is a unit $s't'$-flow on $G'$, and furthermore, for every $s't'$-flow $\theta$ on $G'$ there is a negative witness $\omega$ such that $\theta((u,v,\edgeL )^\dagger)=\omega(u)-\omega(v)$ for all $(u,v,\edgeL )\in\overrightarrow{E}(G)$. 
\end{claim}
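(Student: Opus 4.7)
The plan is to verify the two directions of the correspondence separately, with the key observation being that planar duality converts the discrete coboundary operator on primal vertices into a flow conservation law on dual vertices. First I would fix a consistent orientation convention between primal and dual edges, say the convention that rotating the primal arrow $90^\circ$ counterclockwise yields the dual arrow (equivalently, if the primal edge is $u\to v$, the dual edge goes from the face on the right of $u\to v$ to the face on the left). All subsequent calculations are performed under this convention.

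For the forward direction, suppose $\omega:V(G)\to\mathbb{R}$ satisfies $\omega\tau=\omega(s)-\omega(t)=1$, and set $\theta((u,v,\lambda)^\dagger)=\omega(u)-\omega(v)$. Antisymmetry is immediate. For flow conservation at an interior vertex $v_f\in V(G')\setminus\{s',t'\}$ (corresponding to a face $f$ of $\overline{G}$ not bordering the extra edge $e_\emptyset=(\{s,t\},\emptyset)$), the edges of $G'$ incident to $v_f$ are exactly the duals of the primal edges on $\partial f$. Traversing $\partial f$ as $u_1\to u_2\to\cdots\to u_k\to u_1$ with $f$ always on the right, the sum $\sum_i(\omega(u_i)-\omega(u_{i+1}))$ telescopes to zero. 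For $v_f=s'$, the analogous telescoping around $\partial f_{s'}$ still gives zero, but one must subtract the contribution of $e_\emptyset^\dagger$ (which is not an edge of $G'$); that missing contribution is $\omega(s)-\omega(t)=1$, giving unit flow out of $s'$. The argument for $t'$ is identical up to sign, so $\theta$ is indeed a unit $s't'$-flow.

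For the converse, I would extend a unit $s't'$-flow $\theta$ on $G'$ to a circulation $\tilde\theta$ on $\overline{G}^\dagger$ by setting $\tilde\theta((s',t',\emptyset))=-1$, chosen to absorb the unit flow out of $s'$. By Claim~\ref{claim:flow-decomposition}, since $\overline{G}^\dagger$ is planar, $\tilde\theta$ decomposes as a linear combination of cycles, each going around a single face of $\overline{G}^\dagger$. Double planar duality identifies the faces of $\overline{G}^\dagger$ with the vertices of $\overline{G}=G$, and the facial cycle associated to vertex $u\in V(G)$ is precisely the closed walk through the dual edges of all primal edges incident to $u$. Let $\omega(u)$ be the coefficient of that cycle in the decomposition. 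Because each dual edge $(u,v,\lambda)^\dagger$ is shared between exactly the facial cycles at $u$ and $v$, with opposite orientations dictated by the $90^\circ$ convention, one reads off $\tilde\theta((u,v,\lambda)^\dagger)=\omega(u)-\omega(v)$ for every primal edge. Applying this to $e_\emptyset^\dagger$ yields $\omega(s)-\omega(t)=1$, so $\omega\tau=1$ and $\omega$ is a valid negative witness reproducing $\theta$ on $G'$.

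The main obstacle in carrying out this plan is being consistent with orientation conventions throughout; the $90^\circ$ rotation rule ensures that the telescoping in the forward direction and the facial-cycle decomposition in the converse produce matching signs. A secondary subtlety is that $G'$ is not the dual of $G$ but rather the dual of $\overline{G}$ with the single edge $e_\emptyset^\dagger$ deleted, which is why the proof passes through $\overline{G}^\dagger$ and uses $e_\emptyset^\dagger$ as the bookkeeping edge that turns the $s't'$-flow in $G'$ into a circulation in $\overline{G}^\dagger$ amenable to the facial-cycle decomposition.
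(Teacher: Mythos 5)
Your proposal follows essentially the same route as the paper's proof: for the forward direction, extend to $\overline{G}^\dagger$ (implicitly or explicitly) and use the telescoping sum around each face of $\overline{G}$ to verify flow conservation, peeling off the $e_\emptyset^\dagger$ contribution at $s'$ and $t'$; for the converse, convert $\theta$ to a circulation on $\overline{G}^\dagger$ by adding a compensating flow on $e_\emptyset^\dagger$, decompose that circulation into facial cycles via Claim~\ref{claim:flow-decomposition}, and read off $\omega$ from the coefficients using the face--vertex bijection given by double duality. Both directions are the same as in the paper.

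One point to tighten: in the converse you write ``Let $\omega(u)$ be the coefficient of that cycle in the decomposition,'' but the facial cycles of $\overline{G}^\dagger$ (one per vertex of $\overline{G}$) over-span the cycle space by one dimension, so the $\alpha_f$'s are not unique --- they are determined only up to a common additive constant. That ambiguity is harmless (it just shifts $\omega$ by a constant, and only $\omega(u)-\omega(v)$ matters), but you should either fix a gauge or phrase it as ``choose coefficients $\alpha_f$ of any such decomposition.'' The paper handles this by allowing coefficients on both clockwise and counterclockwise cycles and defining $\omega(v_f)=\frac{1}{2}(\alpha_f-\alpha_f')$, then verifying directly that this yields $\omega(u)-\omega(v)=\theta'((u,v,\lambda)^\dagger)$ independent of the decomposition chosen; your version is cleaner but should at least acknowledge the non-uniqueness.
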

\begin{proof}
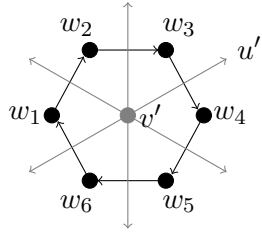
\begin{figure}[ht]
\centering
\begin{tikzpicture}[scale=1]
\filldraw[gray] (0,0) circle (.1);

\draw[->] (-1,0)--(-.58,.78);
\draw[->] (-.5,.866)--(.4,.866);
\draw[->] (.5,.866)--(.92,.08);
\draw[->] (1,0)--(.58,-.78);
\draw[->] (.5,-.866)--(-.4,-.866);
\draw[->] (-.5,-.866)--(-.92,-.08);

\filldraw (-1,0) circle (.1);
\filldraw (-.5,.866) circle (.1);
\filldraw (.5,.866) circle (.1);
\filldraw (1,0) circle (.1);
\filldraw (.5,-.866) circle (.1);
\filldraw (-.5,-.866) circle (.1);

\draw[gray,->] (0,0) -- (-1.299,.75);
\draw[gray,->] (0,0) -- (0,1.5);
\draw[gray,->] (0,0) -- (1.299,.75);
\draw[gray,->] (0,0) -- (1.299,-.75);
\draw[gray,->] (0,0) -- (0,-1.5);
\draw[gray,->] (0,0) -- (-1.299,-.75);

\node at (.3,0) {$v'$};
\node at (1.6,.9) {$u'$};

\node at (-1.35,0) {$w_1$};
\node at (-.675,1.1691) {$w_2$};
\node at (.675,1.1691) {$w_3$};
\node at (1.35,0) {$w_4$};
\node at (.675,-1.1691) {$w_5$};
\node at (-.675,-1.1691) {$w_6$};

\end{tikzpicture}
\caption{The duality between a cycle and a star.}\label{fig:face}
\end{figure}
When we consider the edges of $G$ as directed edges, we assign edge directions to the dual by orienting each dual edge $\pi/2$ radians counter-clockwise from the primal edge. 

Note that without loss of generality, if $\omega$ is a negative witness, we can assume $\omega(s)=1$ and $\omega(t)=0$. This is because $\norm{\omega A}$ and $\norm{\omega A\Pi_{H(x)}}$ are invariant under affine transformations of~$\omega$. 

We first show that if $\omega$ is a negative witness in $P_{G,\Ohm}$, then
$\theta:\overrightarrow{E}(G')\rightarrow\mathbb{R}$ defined
$\theta((u,v,\edgeL )^\dagger)=\omega(u)-\omega(v)$ is a unit $s't'$-flow on
$G'$. To begin with, we will define $\theta'((u,v,\edgeL
)^\dagger)=\omega(u)-\omega(v)$ on $\overrightarrow{E}(\overline{G}^\dagger)$,
so $\theta'$ agrees with $\theta$ everywhere $\theta$ is defined, and in
addition, $\theta'(s',t',\emptyset)=\theta'((s,t,\emptyset)^\dagger)=\omega(s)-\omega(t)=1$, and $\theta'(t',s',\emptyset)=-1$. Then clearly we have
$\theta'(u',v',\edgeL)=-\theta'(v',u',\edgeL)$ for all $(\{u',v'\},\edgeL)\in
E(\overline{G}^\dagger)$.

Next, every $v'\in V(\overline{G}^\dagger)$ corresponds to a face $f_{v'}$ of $\overline{G}$, and the edges coming out of $v'$ are dual to edges going clockwise around the face $f_{v'}$ (see Figure \ref{fig:face}). If $(w_1,w_2,\edgeL_1),\dots,(w_k,w_{k+1},\edgeL_k)$, for $w_{k+1}=w_1$, are the directed edges going clockwise around $f_{v'}$, then we have:
\begin{align}
0=\sum_{i=1}^k(\omega(w_i)-\omega(w_{i+1}))=\sum_{i=1}^{k}\theta'((w_i,w_{i+1},\edgeL_i)^
\dagger)=\!\!\!\!\sum_{\substack{u',\edgeL:\\ (\{v',u'\},\edgeL)\in E(\overline{G}')}}\theta'(v',u',\edgeL).
\end{align}
Thus, $\theta'$ is a circulation. Then, since $\theta'(s',t',\emptyset)=1$, if we remove the flow on this edge, which recovers $\theta$, we get a unit $s't'$-flow on $G'$. 

Next we will show that if $\theta$ is a unit $s't'$-flow on $G'$, then there exists a negative witness $\omega$ in $P_{G,\Ohm}$ such that for all $(u,v,\edgeL )\in \overrightarrow{E}(G)$, $\theta((u,v,\edgeL )^\dagger)=\omega(u)-\omega(v)$.

Define $\theta'$ to be the circulation on $\overline{G}^\dagger$ obtained from defining $\theta'(u',v',\edgeL)=\theta(u',v',\edgeL)$ for all $(u',v',\edgeL)\in \overrightarrow{E}(G')$, and $\theta'(s',t',\emptyset)=-\theta'(t',s',\emptyset)=1$. Then if we define $\ket{\theta'}=\sum_{(u,v,\edgeL )\in\overrightarrow{E}(\overline{G}^\dagger)}\theta'(u,v,\edgeL )\ket{u,v,\edgeL }$, we can express $\ket{\theta'}$ as a linear combination of cycles 
around the faces of $\overline{G}^\dagger$, $\ket{\theta'}=\sum_{f\in F(\overline{G}^\dagger)}\alpha_f\ket{\overrightarrow{C}_f}+\sum_{f\in F(\overline{G}^\dagger)}\alpha_f'\ket{\overleftarrow{C}_f}$, where if $w_{k+1}=w_1$ and $(w_1,w_2,\edgeL_1),\dots,(w_k,w_{k+1},\edgeL_k)$ is a clockwise cycle around $f$,
$\ket{\overrightarrow{C}_f}=\sum_{i=1}^{k}\ket{w_i,w_{i+1},\edgeL_i}$ is the clockwise cycle around the face $f$, and $\ket{\overleftarrow{C}_f}=\sum_{i=1}^k\ket{w_{i+1},w_i,\edgeL_i}$ is the counter-clockwise cycle around $f$. There is a one-to-one correspondance between vertices in $V(\overline{G})=V(G)$ and faces in $F(\overline{G}^\dagger)$, so we can define $\omega:V({G})\rightarrow\mathbb{R}$ by $\omega(v_f)=\frac{1}{2}(\alpha_f-\alpha_f')$. 

We claim that for all $(u,v,\edgeL )\in \overrightarrow{E}(\overline{G})$, $\omega(u)-\omega(v)=\theta'((u,v,\edgeL )^\dagger)$. 
Let $(u',v',\edgeL)$ be any edge in $\overrightarrow{E}(\overline{G}^\dagger)$. This edge is part of a clockwise cycle around one face in $\overline{G}^\dagger$, call it $f$, and a counter clockwise cycle around one face in $\overline{G}^\dagger$, call it $g$. Since these are the only two faces containing the edge $(u',v',\edgeL)$, we must have $\theta'(u',v',\edgeL)=\braket{u',v',\edgeL}{\theta'}=\alpha_f+\alpha_g'$. Since $\theta'(u',v',\edgeL)=-\theta'(v',u',\edgeL)$, we have $\alpha_f+\alpha_g'=-\alpha_f'-\alpha_g$. Thus:
\begin{align}
\omega(v_f)-\omega(v_g)=\frac{1}{2}\left(\alpha_f-\alpha_f'-\alpha_g+\alpha_g'\right)=\frac{1}{2}\left(\theta'(u',v',\edgeL)-\theta'(v',u',\edgeL)\right)=\theta'((v_f,v_g,\depth)^\dagger).
\end{align}
In particular, this means that $\omega(s)-\omega(t)=\theta'((s,t,\emptyset)^\dagger)=\theta'(s',t',\emptyset)=1$, so $\omega$ is a negative witness, and for all $(u,v,\edgeL )\in \overrightarrow{E}(G)$, $\omega(u)-\omega(v)=\theta((u,v,\edgeL)^\dagger )$. 
\end{proof}

Now we can prove the main result of this section, Lemma \ref{lemma:both_witnesses}:
\poswit*
\begin{proof}
If $x$ is a 1-instance, $s$ and $t$ are connected in $G(x)$, so there exists a unit $st$-flow on $G(x)$, which is a unit $st$-flow on $G$ that is supported only on $\overrightarrow{E}(G(x))$. Let $\theta$ be the flow on $G(x)$ such that $R_{s,t}(G(x),c)=\sum_{(\{u,v\},\edgeL)\in E(G(x))}\frac{\theta(u,v,\edgeL )^2}{\Ohm(\{u,v\},\edgeL)}$. By Claim \ref{claim:pos-wit}, $\ket{w}=\frac{1}{2}\sum_{(u,v,\edgeL )}\frac{\theta(u,v,\edgeL )}{\sqrt{\Ohm(\{u,v\},\lambda)}}\ket{u,v,\edgeL }$ is a positive witness in $P_{G,\Ohm}$, and since $\theta$ is supported on $\overrightarrow{E}(G(x))$, $\ket{w}\in H(x)$, and so $\ket{w}$ is a positive witness for $x$ in $P_{G,\Ohm}$. Thus
\begin{align}
w_+(x,P_{G,\Ohm})\leq \norm{\ket{w}}^2=\frac{1}{4}\sum_{(u,v,\edgeL )\in\overrightarrow{E}(G(x))}\Ohm(\{u,v\},\lambda)\theta(u,v,\edgeL )^2=\frac{1}{2}R_{s,t}(G(x), \Ohm).
\end{align}

On the other hand, let $\ket{w}$ be an optimal positive witness for $x$. By Claim \ref{claim:pos-wit}, $\theta(u,v,\edgeL )=\sqrt{\Ohm(\{u,v\},\lambda)}(\braket{u,v,\edgeL }{w}-\braket{v,u,\edgeL}{w})$ is a unit $st$-flow on $G$, and since $\ket{w}\in H(x)$, $\theta(u,v,\edgeL )$ is only non-zero on $\overrightarrow{E}(G(x))$, so $\theta$ is a unit $st$-flow on $G(x)$. Thus,
\begin{eqnarray}
R_{s,t}(G(x), \Ohm) &\leq& \sum_{(\{u,v\},\edgeL)\in E(G(x))}\frac{\theta(u,v,\edgeL )^2}{\Ohm(\{u,v\},\lambda)}=\frac{1}{2}\sum_{(u,v,\edgeL )\in\overrightarrow{E}(G(x))}\left(\braket{u,v,\edgeL }{w}-\braket{v,u,\edgeL}{w}\right)^2\nonumber\\
&=& \!\!\sum_{(u,v,\edgeL )\in\overrightarrow{E}(G(x))}\!\!\braket{u,v,\edgeL }{w}^2 - 
\!\!\!\!\sum_{(u,v,\edgeL )\in\overrightarrow{E}(G(x))}\!\!\!\!\braket{u,v,\edgeL }{w}\braket{v,u,\edgeL}{w}\leq 2\norm{\ket{w}}^2\label{eq:sumTheta}
\end{eqnarray}
where the last inequality is by Cauchy-Schwarz. Thus, $w_+(x,P_{G,\Ohm})=\frac{1}{2}R_{s,t}(G(x))$.

Now we prove that $w_-(x,P_{G,c})=2R_{s',t'}(G'(x),c')$.
Let $x\in\{0,1\}^{E(G)}$ be such that $s$ and $t$ are not connected in $G(x)$.
Fix an optimal negative witness $\omega$ for $x$. By Claim \ref{claim:neg-wit}
the linear function $\theta:\overrightarrow{E}(G')\rightarrow\mathbb{R}$
defined by $\theta((u,v,\edgeL )^\dagger)=\omega(u)-\omega(v)$ is a unit $s't'$
-flow on $G'$. Since $\omega$ is a negative witness for $x$, we also have:

\begin{align}
0=\norm{\omega A\Pi_{H(x)}}^2&=\sum_{{(u,v,\edgeL ) \in \overrightarrow{E}(G(x))}}\Ohm(\{u,v\},\lambda)(\omega(u)-\omega(v))^2\nonumber\\
&=\sum_{{(u,v,\edgeL ) \in \overrightarrow{E}(G(x))}}\Ohm(\{u,v\},\lambda)\theta((u,v,\edgeL )^\dagger)^2\nonumber\\
&=\sum_{{(u',v',\edgeL) \in \overrightarrow{E}(G')\setminus\overrightarrow{E}(G'(x))}}\frac{\theta(u',v',\edgeL)^2}{\Ohm'(\{u',v'\},\lambda)},
\end{align}
since $(u,v,\edgeL )\in \overrightarrow{E}(G(x))$ exactly when $(u,v,\edgeL )^\dagger\not\in \overrightarrow{E}(G'(x))$.
So $\theta$ is only supported on $\overrightarrow{E}(G'(x))$, and so it is a unit $s't'$-flow on $G'(x)$. Thus
\begin{align}
w_-(x,P_{G,\Ohm}) = \norm{\omega A}^2&=\sum_{{(u,v,\edgeL ) \in \overrightarrow{E}(G)}}\Ohm(\{u,v\},\lambda)(\omega(u)-\omega(v))^2\nonumber\\
&=\sum_{{(u',v',\edgeL) \in \overrightarrow{E}(G'(x))}}\frac{\theta(u',v',\edgeL)^2}{\Ohm'(\{u',v'\},\lambda)}
\geq 2R_{s',t'}(G'(x), \Ohm').
\end{align}

For the other direction, let $\theta$ be an $s't'$-flow in $G'(x)$ with minimal energy. By Claim \ref{claim:neg-wit}, there is a negative witness $\omega$ such that $\theta((u,v,\edgeL )^\dagger)=\omega(u)-\omega(v)$. Since $\theta$ is supported on edges $(u',v',\edgeL)\in \overrightarrow{E}(G'(x))$, which are exactly those edges such that $(u',v',\edgeL)^\dagger\not\in \overrightarrow{E}(G(x))$, we have 
\begin{align}0&=\sum_{\substack{(u,v,\lambda )\\\in\overrightarrow{E}(G(x))}}\Ohm(\{u,v\},\lambda)\theta((u,v,\lambda )^\dagger)^2
=\sum_{\substack{(u,v,\lambda)\\\in \overrightarrow{E}(G(x))}}\Ohm(\{u,v\},\lambda)(\omega(u)-\omega(v))^2=\norm{\omega A\Pi_{H(x)}}^2,
\end{align}
so $\omega$ is a negative witness for $x$ in $P_{G,\Ohm}$. Thus:
\begin{align}
w_-(x,P_{G,\Ohm})\leq \norm{\omega A}^2&=\sum_{\substack{(u,v,\edgeL ) \in \overrightarrow{E}(G)}}\Ohm(\{u,v\},\lambda)(\omega(u)-\omega(v))^2\nonumber\\
&=\sum_{\substack{(u',v',\edgeL) \in\overrightarrow{E}(G'(x))}}\frac{\theta(u',v',\edgeL)^2}{\Ohm'(\{u',v'\},\lambda)}
=2R_{s',t'}(G'(x),\Ohm'),
\end{align}
completing the proof.
\end{proof}


\subsection{Time and Space Analysis of the Span Program Algorithm for {\it st}-Connectivity}\label{app:time}

In this section, we will give an upper bound on the time complexity of $st$-\textsc{conn}$_G$ in terms of the time complexity of implementing a step of a discrete-time quantum walk on $G$. 
 At the end of this section, we discuss the space complexity of the algorithm.

We first describe the algorithm that can be derived from a span program, following the conventions of \cite{IJ15}. Throughout this section, we will let $\Pi_S$ denote the orthogonal projector onto an inner product space $S$.  For a span program $P=(H,U,A,\tau)$, the corresponding algorithm performs phase estimation on the unitary $(2\Pi_{H(x)}-I)(2\Pi_{\ker A}-I)$ applied to initial state $\ket{w_0}=A^+\tau$, where $\Pi_{H(x)}$ denotes the orthogonal projector onto $H(x)$, and $\Pi_{\ker A}$ denotes the orthogonal projector onto the kernel of $A$, and $A^+$ denotes the pseudo-inverse of $A$. To decide a function $f$ on domain $D$, it is sufficient to perform phase estimation to precision $O\left(\sqrt{\max_{x\in D:f(x)=1}w_+(x)\times \max_{x\in D:f(x)=0}w_-(x)}\right)$.

In case of the $st$-connectivity span program $P_{G,\Ohm}$ in \eqref{eq:P}, it is a simple exercise to see that $2\Pi_{H(x)}-I$ can be implemented in $O(1)$ quantum operations, including 2 queries to $x$. The reflection $2\Pi_{\ker A}-I$ is independent of $x$, and so requires 0 queries to implement, however, it could still require a number of gates that grows quickly with the size of $G$. We will show that implementing $2\Pi_{\ker A}-I$ can be reduced to implementing a discrete-time quantum walk on $G$, a task which could be quite easy, depending on the structure of $G$ (for example, in the case that $G$ is a complete graph on $n$ vertices, this can be done in $O(\log n)$ gates \cite{BR12}).

For a multigraph $G$ and weight function $\Ohm$, we define a \emph{quantum walk step on $G$} to be a unitary $U_{G,\Ohm}$ that acts as follows for any $u\in V(G)$:
\begin{align}
U_{G,\Ohm}:\ket{u}\ket{0}\mapsto \frac{1}{\sqrt{\sum_{v,\lambda:(u,v,\lambda)\in \overrightarrow{E}(G)}\Ohm(\{u,v\},\lambda)}}\sum_{v,\lambda:(u,v,\lambda)\in \overrightarrow{E}(G)}\sqrt{\Ohm(\{u,v\},\lambda)}\ket{u}\ket{u,v,\lambda}.
\end{align}

\timeComp*
\noindent This theorem follows from Lemma \ref{lem:time}, stated below, and Lemma \ref{lem:initial-state}, which deals with the construction of the algorithm's initial state.

\setcounter{theorem}{31}
\begin{lemma}\label{lem:time}
Let $A$ be defined as in \eqref{eq:P}. Let $S_{G,\Ohm}$ be an upper bound on the time complexity of implementing $U_{G,\Ohm}$. Then $2\Pi_{\ker A}-I$ can be implemented to any constant precision in time complexity $O(S_{G,\Ohm}/\sqrt{\delta})$, where $\delta$ is the spectral gap of the symmetric normalized Laplacian of $(G,\Ohm)$.
\end{lemma}  
\begin{proof}
This analysis follows \cite{BR12} (see also \cite{IJ15}). Let 
\begin{align}
d(u)=\sum_{v,\lambda:(u,v,\lambda)\in \overrightarrow{E}(G)}\Ohm(\{u,v\},\lambda).
\end{align}

Define spaces $Z$ and $Y$ as follows. 
\begin{equation}
Z=\mathrm{span}\left\{ \ket{z_u}:=\sum_{v,\lambda:(u,v,\lambda)\in\overrightarrow{E}(G)}\frac{\sqrt{\Ohm(\{u,v\},\lambda)}}{\sqrt{2d(u)}}\left(\ket{0,u,u,v,\lambda}+\ket{1,u,v,u,\lambda}\right): u\in V(G) \right\}
\end{equation}

\begin{equation}
Y=\mathrm{span}\left\{
\ket{y_{u,v,\lambda}}:=\frac{\ket{0,u,u,v,\lambda}-\ket{1,v,u,v,\lambda}}{\sqrt{2}}:(u,v,\lambda)\in \overrightarrow{E}(G)
\right\}
\end{equation}

\noindent Define isometries whose column-spaces are $Z$ and $Y$ respectively:
\begin{align}
M_Z=\sum_{u\in V(G)}\ket{z_u}\bra{u}\quad\mbox{and}\quad M_Y=\sum_{(u,v,\lambda)\in\overrightarrow{E}(G)}\ket{y_{u,v,\lambda}}\bra{u,v,\lambda}.
\end{align}

\noindent Now we note that for any $(\{u,v\},\lambda)\in E(G)$, we have the following:
\begin{equation}
\braket{z_u}{y_{u,v,\lambda}}=\frac{\sqrt{\Ohm(\{u,v\},\lambda)}}{2\sqrt{d(u)}},\quad\mbox{and}\quad \braket{z_v}{y_{u,v,\lambda}}=-\frac{\sqrt{\Ohm(\{u,v\},\lambda)}}{2\sqrt{d(v)}}.
\end{equation}
Thus, we can calculate:
\begin{eqnarray}
M_Z^\dagger M_Y &=& \sum_{(u,v,\lambda)\in\overrightarrow{E}(G)}\left(\frac{\ket{u}}{2\sqrt{d(u)}}-\frac{\ket{v}}{2\sqrt{d(v)}}\right)\sqrt{\Ohm(\{u,v\},\lambda)}\bra{u,v,\lambda}\nonumber\\
&=& \sum_{u'\in V(G)}\frac{\ketbra{u'}{u'}}{2\sqrt{d(u')}}\sum_{(u,v,\lambda)\in\overrightarrow{E}(G)}\sqrt{\Ohm(\{u,v\},\lambda)}(\ket{u}-\ket{v})\bra{u,v,\lambda}\nonumber\\
&=&\sum_{u'\in V(G)}\frac{\ketbra{u'}{u'}}{2\sqrt{d(u')}} A =: \widetilde{A}.
\end{eqnarray}
While $AA^\dagger=2L$ is twice the Laplacian of $G$, $\widetilde{A}\widetilde{A}^\dagger$ is half the \emph{symmetric normalized Laplacian} of $G$, $L^{\mathrm{sym}}$:
\begin{eqnarray}
\widetilde{A}\widetilde{A}^\dagger 
&=& \frac{1}{2}\sum_{u\in V(G)}\frac{\ket{u}\bra{u}}{\sqrt{d(u)}} L \sum_{u\in V(G)}\frac{\ket{u}\bra{u}}{\sqrt{d(u)}}=\frac{1}{2}L^{\mathrm{sym}}.
\end{eqnarray}
Define $W=(2\Pi_Z-I)(2\Pi_Y-I)$. We can define the discriminant of $W$ by $D(W)=M_Z^\dagger M_Y=\widetilde{A}$, and let $\{\cos\theta_j\}_j$ for $\theta_j\in [0,\pi/2]$ enumerate its singular values. 
By \cite{sze04}, the $(-1)$-eigenspace of $W$ is exactly $(Y\cap Z^\bot)\oplus (Y^\bot\cap Z)$, the $(+1)$-eigenspace is exactly $(Y\cap Z)\oplus (Y^\bot\cap Z^\bot)$, and the remaining eigenvalues of $W$ are $\{e^{\pm 2i\theta_j}\}_j$. Thus if $\delta$ is the smallest nonzero eigenvalue of $L^\mathrm{sym}$, then $\sqrt{\delta/2}$ is the smallest nonzero singular value of $\widetilde{A}$. Let $\tau\in [0,\pi/2]$ be such that $\cos\tau=\sqrt{\delta/2}$. Then the smallest non-zero phase of $-W$ is $\pm|\pi-2\tau|$. Using the identity $\cos \tau \leq \frac{\pi}{2}-\tau$ for any $\tau\in [0,\pi/2]$, we have phase gap at least 
\begin{equation}
|\pi-2\tau|\geq 2\cos\tau=\sqrt{2\delta}.
\end{equation}
Thus, using phase estimation to precision $\Theta(\sqrt{\delta})$, we can implement a reflection around the $(-1)$-eigenspace of $W$ to constant precision. Let $R_W$ denote this reflection. We now argue that $V=M_Y^\dagger R_W M_Y = 2\Pi_{\ker A}-I$.\footnote{An earlier version of this work erroneously claimed that $M_Y^\dagger W M_Y$ would implement this reflection. We thank Arjan Cornelissen and Alvaro Piedrafita for finding this error.} 
Note that the rows of $\widetilde{A}=M_Z^\dagger M_Y$ are non-zero multiples of the rows of $A$, so $\mathrm{row}(\widetilde{A})=\mathrm{row}(A)$, and thus $\ker(M_Z^\dagger M_Y)=\ker A$.
First, notice that on the image of $M_Y$, $Y$, $R_W$ acts as the reflection around $Y\cap Z^\bot$. 
Suppose $\ket{\psi}\in\ker A$. Then $\ket{\psi}\in\ker (M_Z^\dagger M_Y)=\ker(\widetilde{A})$, so $M_Y\ket{\psi}\in \ker M_Z^\dagger = Z^\bot$. On the other hand, if $M_Y\ket{\psi}\in Y\cap Z^\bot$, then since $M_Y\ket{\psi}\in Z^\bot$, we have $M_Z^\dagger M_Y\ket{\psi}=0$, so $\ket{\psi}\in \ker \widetilde{A}=\ker A$. Thus $M_Y$ maps $\ker A$ to the $(-1)$-eigenspace of $W$, and $(\ker A)^\bot$ to its orthogonal complement.

It only remains to argue that each of $M_Y$, $2\Pi_Z-I$ and $2\Pi_Y-I$ can be implemented in time complexity at most $O(S_{G,\Ohm})$. 

We first show that we can implement the isometry $M_Y$, or rather a unitary $U_Y$ that acts as $\ket{0}\ket{0}\ket{u,v,\lambda}\mapsto M_Y\ket{u,v,\lambda}=\ket{y_{u,v,\lambda}}$. 
First, use $HX$ on the first qubit to perform the map:
\begin{align}
\ket{0}\ket{0}\ket{u,v,\lambda}\mapsto\ket{-}\ket{0}\ket{u,v,\lambda}.
\end{align}
Conditioned on the value of the first register, copy either $u$ or $v$ into the second register to get:
\begin{align}
\frac{1}{\sqrt{2}}(\ket{0,u,u,v,\lambda}-\ket{1,v,u,v,\lambda})=\ket{y_{u,v,\lambda}}.
\end{align}
Thus, we can implement $U_Y$ in the time it takes to write down a vertex of $G$, $O(\log|V(G)|)$, which is at most $O(S_{G,\Ohm})$. Using the ability to implement $U_Y$, we can implement $2\Pi_Y-I$ as $U_YR_YU_Y^\dagger$, where $R_Y$ is the reflection that acts as the identity on computational basis states of the form $\ket{0}\ket{0}\ket{u,v,\lambda}$, and reflects computational basis states without this form. 

Next, we implement a unitary $U_Z$ that acts as $\ket{0}\ket{u}\ket{0}\mapsto M_Z\ket{u}=\ket{z_u}$. First, use the quantum walk step $U_{G,\Ohm}$, which can be implemented in time $S_{G,\Ohm}$, to perform:
\begin{align}
\ket{+}\ket{u}\ket{0}\mapsto\frac{1}{2\sqrt{d(u)}}\sum_{v,\lambda:(u,v,\lambda)\in\overrightarrow{E}(G)}\sqrt{\Ohm(\{u,v\},\lambda)}(\ket{0}+\ket{1})\ket{u}\ket{u,v,\lambda}.
\end{align}
Conditioned on the bit in the first register, swap the third and fourth registers, to get:
\begin{align}
\frac{1}{2\sqrt{d(u)}}\sum_{v,\lambda:(u,v,\lambda)\in\overrightarrow{E}(G)}\sqrt{\Ohm(\{u,v\},\lambda)}(\ket{0}\ket{u}\ket{u,v,\lambda}+\ket{1}\ket{u}\ket{v,u,\lambda})=\ket{z_u}.
\end{align}
The total cost of implementing $U_Z$ is $O(S_{G,\Ohm}+\log |V(G)|)=O(S_{G,\Ohm})$. Thus, we can implement $2\Pi_Z-I$ in $O(S_{G,\Ohm})$ quantum gates. 
\end{proof}

\begin{lemma}\label{lem:initial-state}
Let $A$ and $\tau$ be defined as in \eqref{eq:P}. Let $S_{G,\Ohm}$ be an upper bound on the complexity of implementing $U_{G,\Ohm}$. Then the initial state of the algorithm, $\frac{\ket{w_0}}{\norm{\ket{w_0}}}$ where $\ket{w_0}=A^+\tau$, can be approximated in time $O(S_{G,\Ohm})$. 
\end{lemma}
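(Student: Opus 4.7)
The plan is to prepare an easy-to-construct state that plays the role of the initial state $\ket{w_0}/\norm{\ket{w_0}}$ in the phase-estimation algorithm by working in the lifted space $Z\oplus Y$ where the walk operator $W=(2\Pi_Z-I)(2\Pi_Y-I)$ from the proof of Lemma~\ref{lem:time} acts.

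First, I would prepare the state $D\tau/\norm{D\tau}\propto \sqrt{d(t)}\,\ket{s}-\sqrt{d(s)}\,\ket{t}$ in the vertex register $U$, where $D=\sum_{u\in V(G)}\ketbra{u}{u}/(2\sqrt{d(u)})$ is the diagonal operator identified in the proof of Lemma~\ref{lem:time}. Since this state is supported on only two computational basis vectors, its preparation takes $O(\log|V(G)|)$ time. Next, I would apply the isometry $M_Z$ to obtain $\ket{\phi_0}=M_ZD\tau/\norm{D\tau}\in Z$. By the construction of $M_Z$ given in the proof of Lemma~\ref{lem:time}, this requires one call to $U_{G,\Ohm}$ together with a conditional swap and $O(\log|V(G)|)$ auxiliary gates, so the total cost is $O(S_{G,\Ohm})$.

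The key identity $M_Z^\dagger M_Y=DA$ from Lemma~\ref{lem:time}, combined with $A\ket{w_0}=\tau$ (assuming $\tau\in\mathrm{col}(A)$, which holds when $s$ and $t$ are in the same connected component of $G$), yields $\Pi_Z M_Y\ket{w_0}=M_ZDA\ket{w_0}=M_ZD\tau$. Hence $\ket{\phi_0}$ is precisely the normalized $Z$-component of the lifted witness $M_Y\ket{w_0}/\norm{\ket{w_0}}$, and the two vectors lie in the same two-dimensional $W$-invariant subspace spanned by their common $Z$-component and the orthogonal $Y\cap Z^\perp$-component of $M_Y\ket{w_0}$. By the standard span-program analysis (see, e.g., \cite{BR12,IJ15}), starting the phase estimation from $\ket{\phi_0}$ in place of $M_Y\ket{w_0}/\norm{\ket{w_0}}$ preserves the decision output and only rescales the precision requirement by a constant factor, an overhead that is already absorbed in the $O(\sqrt{W_+W_-})$ factor of Theorem~\ref{thm:timeComp}.

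The main obstacle will be justifying rigorously that $\ket{\phi_0}$ may substitute for the true normalized positive witness without inflating the required phase-estimation precision. This reduces to the familiar two-dimensional invariant-subspace analysis of Szegedy-type walks combined with the span-program frameworks of \cite{RS12,Rei09}; once that invariance is established, the $O(S_{G,\Ohm})$ bound on initial-state preparation follows immediately from the two-step construction above.
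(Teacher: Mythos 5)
Your approach is genuinely different from the paper's, and unfortunately it has a gap that I don't think is easy to patch.

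The paper's proof uses a different trick: it augments $G$ with an auxiliary $st$-edge $(\{s,t\},\emptyset)$, assigned weight $1/r$, so that the trivially preparable computational basis state $\ket{s,t,\emptyset}\in H$ satisfies $\Pi_{(\ker A)^\perp}\ket{s,t,\emptyset}=r^{-1/2}\ket{w_0}$. Constant-precision phase estimation on $2\Pi_{\ker A}-I$ then isolates the $(\ker A)^\perp$ component, and the crucial point is that $r$ can be \emph{tuned} (set to $R=R_{s,t}(G\setminus\{(\{s,t\},\emptyset)\},\Ohm)$) so that the overlap $r^{-1/2}\norm{\ket{w_0}}$ is $\Theta(1)$, making amplitude amplification cost $O(1)$ calls.

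Your proposal skips this tuning step, and this is where it breaks. You prepare $\ket{\phi_0}=M_ZD\tau/\norm{D\tau}$ and observe (correctly) that $\Pi_Z M_Y\ket{w_0}=M_Z D\tau$. But the overlap of your $\ket{\phi_0}$ with the normalized lifted witness $M_Y\ket{w_0}/\norm{\ket{w_0}}$ is exactly
\[
\frac{\bra{D\tau}M_Z^\dagger M_Y\ket{w_0}}{\norm{D\tau}\,\norm{\ket{w_0}}}
=\frac{\bra{D\tau}DA\ket{w_0}}{\norm{D\tau}\,\norm{\ket{w_0}}}
=\frac{\norm{D\tau}}{\norm{\ket{w_0}}},
\]
where $\norm{\ket{w_0}}^2 = R_{s,t}(G,\Ohm)$ and $\norm{D\tau}^2=\tfrac{1}{4d(s)}+\tfrac{1}{4d(t)}$. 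These two quantities are not comparable in general --- the effective resistance can be much larger than $1/d(s)+1/d(t)$ --- so the overlap can be arbitrarily small, and the amplitude-amplification or precision overhead you wave off as "a constant factor already absorbed in the $O(\sqrt{W_+W_-})$" is not a constant. Ensuring this overlap is $\Omega(1)$ is exactly what the paper's weight-tuned auxiliary edge accomplishes, and it is the essential idea missing from your argument.

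A secondary problem: your assertion that $\ket{\phi_0}$ and $M_Y\ket{w_0}/\norm{\ket{w_0}}$ "lie in the same two-dimensional $W$-invariant subspace" is false as stated. Jordan's lemma decomposes $Z+Y$ into $2$-dimensional subspaces invariant under both $\Pi_Z$ and $\Pi_Y$, indexed by singular values of $M_Z^\dagger M_Y=DA$, but $\ket{w_0}=A^+\tau$ is generally not a singular vector of $DA$, so $M_Y\ket{w_0}$ has components across many of these subspaces, as does $M_ZD\tau$. The two-dimensional span you exhibit is not preserved by $\Pi_Y$. This is not automatically fatal --- what ultimately matters is overlap, not a strict two-dimensional reduction --- but it means the "standard span-program analysis" you invoke does not apply in the form you cite, and you would need a genuinely new argument to control the cost of starting from $\ket{\phi_0}$. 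The paper sidesteps all of this by preparing a state directly in $H$ whose $(\ker A)^\perp$ projection is exactly $\ket{w_0}$, with the norm handled by the free parameter $r$.
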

\begin{proof}
Without loss of generality, we can assume that $G$ includes the edge $(\{s,t\},\emptyset)$ (we can simply not include it in any subgraph). Furthermore, we  set $\Ohm(\{s,t\},\emptyset)=1/r$, for some positive $r$ to be specified later, so that $A\ket{s,t,\emptyset}=r^{-1/2}\tau$. This has no effect on other edges in $G$. 
Note that 
\begin{align}
\Pi_{(\ker A)^\bot}\ket{s,t,\emptyset}=A^+A\ket{s,t,\emptyset}=r^{-1/2}A^+\tau=r^{-1/2}\ket{w_0},
\end{align}
so 
\begin{align}
\ket{s,t,\emptyset}=r^{-1/2}\ket{w_0}+\ket{w_0^\bot}
\end{align}
for some $\ket{w_0^\bot}\in \ker A$. 
Thus, constant precision phase estimation on $2\Pi_{\ker A}-I$ maps $\ket{s,t,\emptyset}$ to 
\begin{align}
r^{-1/2}\ket{0}\ket{w_0}+\ket{1}\ket{w_0^\bot}.
\end{align}
Using quantum amplitude amplification \cite{brassard2002quantum}, we can amplify the amplitude on the $\ket{0}\ket{w_0}$ part of this arbitrarily close to 1 using a number of calls to $2\Pi_{\ker A}-I$ proportional to $\norm{r^{-1/2}\ket{w_0}}^{-1}$. 

In fact, it is straightforward to show that for any $\ket{\mu}\in \textrm{row}A$, the vector $\ket{\nu}$ with smallest norm that satisfies $A\ket{\nu}=\ket{\mu}$, is 
$A^+\ket{\mu}$ \cite{IJ15}. Using this fact along with Claim \ref{claim:pos-wit} and Definition \ref{def:posNegWit}, we have $\norm{\ket{w_0}}^2=R_{s,t}(G,\Ohm)$.

Let $R=R_{s,t}(G\setminus\{(\{s,t\},\emptyset)\},\Ohm)$ be the effective resistance of $G$ without the edge $(s,t,\emptyset)$. Now we can think of $(\{s,t\},\emptyset)$ and $G\setminus\{(\{s,t\},\emptyset)\}$ as two graphs in parallel, so using Claim \ref{claim:parallel_series}, we have
\begin{align}
\|\ket{w_0}\|^2=\frac{1}{1/R+1/r}.
\end{align}
Setting $r=R$, we have $\|\ket{w_0}\|^2=R/2$ and $\|r^{-1/2}\ket{w_0}\|^{-1}=O(1)$.
Thus, using $O(1)$ calls to $2\Pi_{\ker A}-I$, we can approximate the initial state $\ket{w_0}$. 
\end{proof}

Finally, we note that the space required by the algorithm, in addition to whatever auxiliary space we need to implement $U_{G,\Ohm}$, is $O(\max\{\log|E(G)|,\log|V(G)|\})$. $U_Y$ and $U_{G,\Ohm}$ each act on a Hilbert space of dimension less than $4|V(G)|^2|E(G)|$, so can in principle be implemented on $O(\max\{\log|E(G)|,\log|V(G)|\}+\log(1/\delta))$ qubits, where the $\log(1/\delta)$ term accounts for the phase register in the $O(\sqrt{\delta})$-precision phase estimation. However, a time-efficient implementation of $U_{G,c}$ may also make use of some number $S_{G,\Ohm}'$ of auxiliary qubits. We use these unitaries to perform phase estimation on $(2\Pi_{H(x)}-I)(2\Pi_{\ker A}-I)$ to precision 
\begin{align}
O\left(\min_{\Ohm}\sqrt{\max_{x\in D:\phi(x)=1} R_{s,t}(G_\phi(x),\Ohm) \times \max_{x\in D:\phi(x)=0}R_{s',t'}(G_\phi'(x),\Ohm')}\right)=O(|E(G)|).
\end{align} 
Thus we need $O(\log(|E(G)|)$ qubits to store the output of the phase estimation. Putting everything together gives the claimed space complexity.


\section{Formula Evaluation and {\it st}-Connectivity}\label{app:formula}

In this section, we prove the correspondence between evaluating the formula $\phi$, and solving $st$-connectivity on the graph $G_\phi$. We first give a formal definition of $G_\phi$.
\begin{definition}[$G_\phi$]\label{def:Gphi}
If $\phi=x_i$ is a single-variable formula, then  $V(G_\phi)=\{s,t\}$ and $E(G_\phi)=\{(\{s,t\},x_i)\}$. 

If $\phi=\phi_1\wedge\dots\wedge \phi_l$, then define $V(G_\phi)=\{(i,v):i\in [\subf], v\in V(G_{\phi_i})\setminus\{s,t\}\}\cup\{s,s_2,\dots,s_{\subf},t\}$ and, letting $s_1=s$ and $s_{\subf+1}=t$, define:
\begin{align}
E(G_\phi)=&\left\{\left(\{(i,u),(i,v)\},x_j\right):i\in [\subf], u,v\in V(G_{\phi_i})\setminus\{s,t\}, (\{u,v\},x_j)\in E(G_{\phi_i})\right\}\nonumber\\
&\cup \{(\{(i,u),s_i\},x_j):i\in[\subf], u\in V(G_{\phi_i}),(\{s,u\},x_j)\in E(G_{\phi_i})\}\nonumber\\
&\cup \{(\{(i,u),s_{i+1}\},x_j):i\in [\subf], u\in V(G_{\phi_i}), (\{t,u\},x_j)\in E(G_{\phi_i})\}.
\end{align}

If $\phi=\phi_1\vee \dots\vee\phi_\subf$ define $V(G_\phi)=\{(i,v):i\in [\subf], v\in V(G_{\phi_i})\setminus \{s,t\}\}\cup \{s,t\}$ and 
\begin{eqnarray}
E(G_{\phi})=&\{(\{(i,u),(i,v)\},x_j):i\in [\subf], u,v\in V(G_{\phi_i})\setminus \{s,t\}, (\{u,v\},x_j)\in E(G_{\phi_i})\}\nonumber\\
&\cup \{(\{(i,u),s\},x_j):i\in [\subf], u\in V(G_{\phi_i}),(\{u,s\},x_j)\in E(G_{\phi_i})\}\nonumber\\
&\cup \{(\{(i,u),t\},x_j):i\in [\subf], u\in V(G_{\phi_i}), (\{u,t\},x_j)\in E(G_{\phi_i})\}.
\end{eqnarray}
\end{definition}

\noindent In order to prove Lemma \ref{lem:formula-st}, we will first prove Lemma \ref{claim:primeForm}:
\begin{restatable}{lemma}{primeForm}\label{claim:primeForm}
For an \textsc{and}-\textsc{or} formula $\phi$ on $\{0,1\}^N$, define $\phi'$ to be the formula obtained by replacing $\vee$-nodes with $\wedge$-nodes and $\wedge$-nodes with $\vee$-nodes in $\phi$. Then for all $x\in \{0,1\}^N$, if $\bar{x}$ denotes the bitwise complement of $x$, then $\phi'(x)=\neg\phi(\bar{x})$. Furthermore up to an isomorphism that maps $s$ to $s'$, $t$ to $t'$, and an edge labeled by any label $\lambda$ to an edge labeled by $\lambda$, we have $G_\phi'=G_{\phi'}$ and $G_\phi'(x)=G_{\phi'}(\bar{x})$.
\end{restatable}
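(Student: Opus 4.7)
The plan is to prove all three assertions by simultaneous structural induction on $\phi$.

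The base case $\phi = x_i$ is immediate: since $\phi' = x_i$, we have $\phi'(x) = x_i = \neg \bar{x}_i = \neg\phi(\bar{x})$, and unfolding the definitions of $\overline{G_\phi}$, its planar dual, and the removal of $\emptyset^\dagger$ shows that $G_\phi'$ is a single edge labeled $x_i$ between $s'$ and $t'$, isomorphic to $G_{\phi'} = G_{x_i}$ under $s\mapsto s'$, $t\mapsto t'$. The input identity $G_\phi'(x)=G_{\phi'}(\bar x)$ follows from a one-line check ($e^\dagger \in G_\phi'(x)$ iff $x_i=0$ iff $\bar x_i=1$ iff its image lies in $G_{\phi'}(\bar x)$).

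For the inductive step I would treat $\phi=\phi_1\vee\cdots\vee\phi_l$ in detail; the $\wedge$-case is entirely symmetric, with the roles of series and parallel exchanged. The formula identity reduces immediately, via the inductive hypothesis and De Morgan, to
\begin{equation*}
\phi'(x)=\bigwedge_i\phi_i'(x)=\bigwedge_i\neg\phi_i(\bar x)=\neg\bigvee_i\phi_i(\bar x)=\neg\phi(\bar x).
\end{equation*}
The heart of the proof, and what I regard as the main obstacle, is the following structural claim: \emph{if $H$ is the planar parallel composition of graphs $H_1,\ldots,H_l$ whose terminals all lie on a common face, then $H'$ is isomorphic, as a labeled graph with terminals, to the series composition of $H_1',\ldots,H_l'$, with $t_i'$ identified with $s_{i+1}'$, $s'=s_1'$, and $t'=t_l'$.} Applied to $H_i=G_{\phi_i}$ and combined with the inductive hypothesis $G_{\phi_i}'\cong G_{\phi_i'}$, this yields $G_\phi'\cong G_{\phi_1'}\cdot\ldots\cdot G_{\phi_l'}=G_{\phi'}$. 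The input identity $G_\phi'(x) = G_{\phi'}(\bar x)$ then follows because the isomorphism preserves edge labels: an edge $e^\dagger$ of $G_\phi'$ with label $x_i$ lies in $G_\phi'(x)$ iff $x_i=0$ iff $\bar x_i=1$ iff its image under the isomorphism lies in $G_{\phi'}(\bar x)$.

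To establish the structural claim, I would fix the natural planar embedding of $\overline H = H\cup\{(\{s,t\},\emptyset)\}$ in which $H_1,\ldots,H_l$ are arranged left-to-right between $s$ and $t$ and $\emptyset$ is drawn as the rightmost parallel edge. The faces of $\overline H$ then partition into (i) the ``internal'' faces of each $\overline{H_i}$ (those not touching its own distinguished edge $\emptyset_i$) and (ii) a sequence $F_0,F_1,\ldots,F_l$ of ``interstitial'' faces, where $F_i$ for $1\le i\le l-1$ separates $H_i$ from $H_{i+1}$, $F_l$ separates $H_l$ from $\emptyset$, and $F_0$ is the outer face, wrapping from the left of $H_1$ around to the right of $\emptyset$. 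The key observation is that each interstitial face $F_i$, viewed as a dual vertex, plays simultaneously the role of $t_i'$ in $\overline{H_i}^\dagger$ and of $s_{i+1}'$ in $\overline{H_{i+1}}^\dagger$, enforcing the required series identifications; meanwhile $\emptyset^\dagger$ is precisely the edge joining $F_0$ to $F_l$, so its removal yields the desired series composition with $s'=F_0$ and $t'=F_l$. The $\wedge$-case is handled by the same argument after exchanging ``series'' with ``parallel.''
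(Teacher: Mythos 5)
Your proof is correct and follows essentially the same strategy as the paper's: induction on the formula, with the formula identity proved by de Morgan, and the graph isomorphism $G_\phi' \cong G_{\phi'}$ established by analyzing the faces of $\overline{G}_\phi$ and identifying dual vertices (your interstitial faces $F_0,\ldots,F_l$ play exactly the role of the paper's bijection $\zeta$ and its special faces $f^{s'},f^{t'}$). The only cosmetic difference is that you spell out the $\vee$-case (parallel $\to$ series) while the paper spells out the $\wedge$-case (series $\to$ parallel); both correctly defer the symmetric case.
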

\begin{proof}
The first part of the proof is by induction. Suppose $\phi$ has depth $0$, so $\phi=x_i$ for some variable $x_i$. Then $\phi'(x)=\phi(x)=\neg(\phi(\bar{x}))$. So suppose $\phi=\phi_1\wedge\dots\wedge\phi_\subf$. Then $\phi'=\phi_1'\vee\dots\vee\phi_\subf'$. Then by the induction hypothesis, 
\begin{align}
\phi'(x)=\phi'_1(x)\vee\dots\vee\phi_\subf'(x)=(\neg\phi_1(\bar{x}))\vee\dots\vee (\neg\phi_\subf(\bar{x}))=\neg(\phi_1(\bar{x})\wedge\dots\wedge\phi_\subf(\bar{x}))=\neg\phi(\bar{x})
\end{align}
where the second to last equality is de Morgan's law. The case $\phi=\phi_1\vee\dots\vee\phi_\subf$ is similar.

\begin{figure}[ht]
\centering
\begin{tikzpicture}

\draw [very thick] (0,0)--(0,1);
\draw [very thick] (0,1.5)--(0,3.25);
\node at (0,1.325) {$\vdots$};
\filldraw (0,3.25) circle (.075);
\filldraw (0,2.5) circle (.075);
\filldraw (0,1.75) circle (.075);
\filldraw (0,.75) circle (.075);
\filldraw (0,0) circle (.075);

\filldraw[gray] (-2,1.625) circle (.075);
\filldraw[gray] (2,1.625) circle (.075);

\draw[gray, very thick] plot [smooth, tension=1] coordinates {(-2,1.625) (0,2.875) (2,1.625)};
\draw[gray, very thick] plot [smooth, tension=1] coordinates {(-2,1.625) (0,2.125) (2,1.625)};
\draw[gray, very thick] plot [smooth, tension=1] coordinates {(-2,1.625) (0,0.375) (2,1.625)};

\node at (0,-.3) {$t$};
\node at (0,3.5) {$s$};
\node at (-2.25,1.625) {$s'$};
\node at (2.25,1.625) {$t'$};

\node at (.35,3) {$G_{\phi_1}$};
\node at (.35,2.25) {$G_{\phi_2}$};
\node at (.35,.15) {$G_{\phi_\subf}$};

\node at (-1.15,2.9) {\color{gray}$G'_{\phi_1}$};
\node at (-1.15,1.65) {\color{gray}$G'_{\phi_2}$};
\node at (-1.15,.45) {\color{gray}$G'_{\phi_\subf}$};

\draw[dashed] plot [smooth] coordinates {(0,0) (-2,0) (-3,1.625) (-2,3.1) (0,3.25)};
\draw[dashed,gray] plot [smooth] coordinates {(-2,1.625) (-1.65,3) (0,3.75) (1.65,3) (2,1.625)};

\node at (1.5,1.6) {\color{gray}$\vdots$};

\end{tikzpicture}
\caption{$\overline{G}_{\phi}$ shown in black, and its dual, $\overline{G}_{\phi'}$, shown in grey. The thick lines represent graphs. Edges in $G_{\phi_i}$ are dual to edges in $G'_{\phi_i}$, and the dotted edge $(\{s,t\},\emptyset)$ is dual to $(\{s',t'\},\emptyset)$.}\label{fig:dual-proof}
\end{figure}
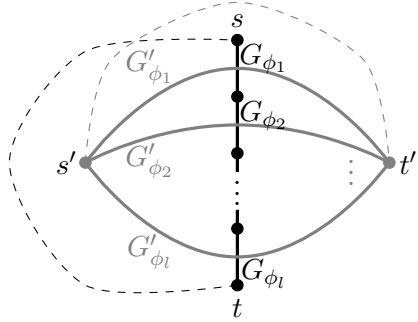

We will now prove that $\overline{G}_\phi^\dagger=\overline{G}_{\phi'}$, and furthermore, dual edges have the same label, by induction on the depth of $\phi$, from which the result follows immediately. 

If $\phi=x_i$ is a depth-0 formula, then $\phi'=x_i$. In that case, $G_\phi$ is just an edge from $s$ to $t$, labeled by $x_i$, and $G_{\phi}'$ is just an edge from $s'$ to $t'$ labeled $x_i$, so $G'_\phi=G_{\phi'}$.

For the inductive step, to show that $\overline{G}_{\phi}$ and $\overline{G}_{\phi'}$ are dual, and therefore $G_\phi' = G_{\phi'}.$ It suffices to exhibit a bijection $\zeta:V(\overline{G}_{\phi'})\rightarrow F(\overline{G}_{\phi})$ such that $(\{u,v\},x_j)\in E(\overline{G}_{\phi'})$ if and only if the faces $\zeta(u)$ and $\zeta(v)$ are separated by an edge in $E(\overline{G}_{\phi})$ with the label $x_j$. We first consider the case that $\phi=\phi_1\wedge\dots\wedge\phi_\subf$, so $\phi'=\phi_1'\vee\dots\vee\phi_\subf'$.  Then, $\overline{G}_\phi$ consists of the graphs $G_{\phi_1},\dots,G_{\phi_\subf}$, chained together in series as in Figure \ref{fig:dual-proof}, with an additional edge from $s$ to $t$, so the faces of $\overline{G}_\phi$ are exactly all the interior faces of each $G_{\phi_i}$, as well as the two faces on either side of the $st$-edge $(\{s,t\},\emptyset)$, which we will denote by $f^{s'}$ and $f^{t'}$. That is, adding an $i$ to the label of each internal face of $G_{\phi_i}$:
\begin{align}
F(\overline{G}_{\phi})=\{(i,f):i\in [\subf],f\in F(\overline{G}_{\phi_i})\setminus\{f^{s'},f^{t'}\}\}\cup\{f^{s'},f^{t'}\},
\end{align}
since $F(\overline{G}_{\phi_i})\setminus \{f^{s'},f^{t'}\}=F(G_{\phi_i})\setminus \{f^E\}$, where $f^E$ is the external face. Since $\phi'=\phi_1'\vee\dots\vee \phi_\subf'$ we also have
\begin{align}
V(\overline{G}_{\phi'})=V(G_{\phi'})=\{(i,v):i\in [\subf], v\in V(G_{\phi_i'})\setminus\{s,t\}\}\cup\{s',t'\},
\end{align}
where we will use the labels $s'$ and $t'$ in anticipation of the isometry between $G_\phi'$ and $G_{\phi'}.$

By the induction hypothesis, for each $i\in [\subf]$, there exists a bijection
$\zeta_i:V(\overline{G}_{\phi_i'})\rightarrow F(\overline{G}_{\phi_i})$ such
that for all $u,v\in V(\overline{G}_{\phi_i'})=V(G_{\phi_i'})$,
$(\{u,v\},x_j)\in E(\overline{G}_{\phi_i'})$ if and only if $\zeta_i(u)$ and
$\zeta_i(v)$ are faces separated by an edge with the label $x_j$. We define
$\zeta$ by $\zeta(i,v)=(i,\zeta_i(v))$ for all $i\in [\subf]$ and $v\in
V(G_{\phi_i'})\setminus\{s,t\}$, $\zeta(s')=f^{s'}$, and $\zeta(t')=f^{t'}$. By
the induction hypothesis, we can see that for any edge $(\{u,v\},x_j)\in
E(\overline{G}_{\phi'})\setminus(\{s',t'\},\emptyset)$, $\zeta(u)$ and $\zeta(v)$ are separated by an edge
labeled $x_j$. This is because this edge is in one of the $G_{\phi_i'}$, and so
it has a dual edge in $G_{\phi_i}$, by the induction hypothesis (see Figure
\ref{fig:dual-proof}). The only other edge in $\overline{G}_{\phi'}$ is the
edge $(\{s',t'\},\emptyset)$, and $\zeta(s')$ and $\zeta(t')$ are exactly those
faces on either side of $(\{s,t\},\emptyset)$ in $\overline{G}_{\phi}$,
completing the proof that $\overline{G}_\phi^\dagger=\overline{G}_{\phi'}$.

If $\phi=\phi_1\vee\dots\vee \phi_\subf$, then $\phi'=\phi_1'\wedge\dots\wedge\phi_\subf'$, and a nearly identical proof shows that $\overline{G}_\phi^\dagger=\overline{G}_{\phi'}$. 

Now that we have shown an isomorphism between $G'_\phi$ and $G_{\phi'}$, note that $G'_\phi(x)$ is the subgraph of $G'_\phi$ that includes all those edges where $x_e=0$. On the other hand $G_{\phi'}(x)$ is the graph that includes all those edges where $x_e=1.$ Taking the bitwise negation of $x$, we find that 
$G'_\phi(x)=G_{\phi'}(\bar{x})$.
\end{proof}

\noindent Lemma \ref{claim:primeForm} allows us to prove Claim \ref{claim:compose}:
\compose*

\begin{proof}
If $\phi = \phi_1\vee\cdots\vee\phi_l$, then 
$\phi' = \phi_1'\wedge\cdots\wedge\phi_l'$. From Lemma \ref{claim:primeForm}, $G'_\phi(x) = G_{\phi'}(\bar{x})$, which using Definition \ref{def:Gphi} is composed of $\{G_{\phi_i'}(\bar{x})\}_{i=1}^{\subf}$ in series. But using the isomorphism of Lemma \ref{claim:primeForm} again, this is just $\{G'_{\phi_i}(x)\}_{i=1}^\subf$ composed in series. 
The proof for $\phi = \phi_1\wedge\cdots\wedge\phi_l$ is similar.
\end{proof}

Now we can prove Lemma \ref{lem:formula-st}, which relates the existence of a path in $G_\phi(x)$ or $G'_\phi(x)$ to the value of the function $\phi(x):$
\formulast*
\begin{proof}
We will prove the statement by induction on the depth of $\phi$. If $\phi=x_j$ has depth 0, then $G_{\phi}$ is just an edge $(\{s,t\},x_j)$, and $G_{\phi}'$ is just an edge $(\{s',t'\},x_j)$. Thus $s$ and $t$ are connected in $G_{\phi}(x)$ if and only if $x_j=1$, in which case $\phi$ evaluates to 1, and $s'$ and $t'$ are connected in $G_{\phi'}$ if and only if $x_j=0$, in which case $\phi$ evaluates to $0$. 

If $\phi=\phi_1\wedge\dots\wedge \phi_\subf$, then $G_{\phi}$ consists of $G_{\phi_1},\dots,G_{\phi_\subf}$ connected in series from $s$ to $t$, and moreover, $G_{\phi}(x)$ consists of $G_{\phi_1}(x),\dots,G_{\phi_\subf}(x)$ connected in series from $s$ to $t$. Thus an $st$-path in $G_{\phi}(x)$ consists of an $st$-path in $G_{\phi_1}(x)$, followed by an $st$-path in $G_{\phi_2}(x)$, etc., up to an $st$-path in $G_{\phi_\subf}(x)$. Thus, $s$ and $t$ are connected in $G_{\phi}(x)$ if and only if $s$ and $t$ are connected in each $G_{\phi_1}(x),\dots,G_{\phi_\subf}(x)$, which happens if and only if $\phi_1(x)\wedge\dots\wedge \phi_\subf(x)=1$. 

On the other hand, by Claim \ref{claim:compose}, $G_{\phi}'$ consists of $G_{\phi_1}',\dots,G_{\phi_\subf}'$ connected in parallel between $s'$ and $t'$. So any $s't'$-path in $G_{\phi}'(x)$ is an $s't'$-path in one of the $G_{\phi_i}'(x)$, which is equivalent to an $st$-path in one of $G_{\phi_i'}(\bar{x})$. 
Thus, by Lemma \ref{claim:primeForm} $s'$ and $t'$ are connected in $G_{\phi}'(x)$ if and only if $\phi_1'(\bar{x})\vee \dots\vee\phi_\subf'(\bar{x})=
\neg\phi_1(x)\vee \dots\vee\neg\phi_\subf(x)=1$. By de Morgan's law is true if and only if $\phi(x)=\phi_1(x)\wedge \dots\wedge\phi_\subf(x)=0$. 

\noindent The case when $\phi=\phi_1\vee\dots\vee\phi_\subf$ is similar. 
\end{proof}


\section{Classical Lower Bound on Class of Promise Boolean Formulas}\label{app:classical-lb}

In this section, we consider the query complexity of classical algorithms for \textsc{and}-\textsc{or} formulas, proving Theorem \ref{thm:class}. To do this, we use a recent tool from Ben-David and Kothari
\cite{BK16}. They show that the bounded-error classical randomized query complexity of a
function $f$, denoted $R(f)$, satisfies $R(f)=\Omega (RS(f))$, where $RS(f)$ is
the randomized sabotage complexity,
defined presently. Furthermore, they prove that for a composed function $f\circ g$, $RS(f\circ
g)\geq(RS(f)RS(g))$.

If $f:D\rightarrow\{0,1\}$, with $D\subseteq\{0,1\}^N$, let $f_{\textrm{sab}}:D_\textrm{sab}\rightarrow\{0,1\}$, where 
\begin{align}
D_\textrm{sab}=\{x\in \{0,1,*\}^N\cup\{0,1,\dagger\}^N:x \textrm{ is consistent with } y, y'\in D, 
\textrm{ s.t. } f(y)\neq f(y')\}.
\end{align}
We say $x\in\{0,1,*,\dagger\}^N$ is consistent with $y\in\{0,1\}^N$ if $x_i=y_i$ for all $i\in [N]$ such that $x_i\in\{0,1\}$.  Then, $f_\textrm{sab}(x)=1$
if $x\in\{0,1,*\}^N$, and $f_\textrm{sab}(x)=0$ if $x\in\{0,1,\dagger\}^N$.
Finally, the randomized sabotage complexity is given by $RS(f)=R_0(f_\textrm{sab}),$ where $R_0(f)$ is the zero-error randomized query complexity of $f$. (For further classical query complexity definitions, see \cite{BK16}.)

We first bound the sabotage complexity of $\textsc{and}|_{D_{N,h}}$ and $\textsc{or}|_{D_{N,h}'}$:
\begin{lemma}\label{lemma:sabCompAndOr}
$RS\left(\textsc{or}|_{D_{N,h}'}\right)=RS\left(\textsc{and}|_{D_{N,h}}\right)=\Omega(N/h)$.
\end{lemma}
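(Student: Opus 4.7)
The plan is to characterize the sabotage domain explicitly, and then apply Yao's minimax principle to a natural ``planted needle'' distribution.

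First I would unpack $(\textsc{and}|_{D_{N,h}})_{\text{sab}}$. A string $x \in \{0,1,*\}^N \cup \{0,1,\dagger\}^N$ is a valid sabotage instance iff it is consistent with both a $1$-instance and a $0$-instance. The unique $1$-instance is the all-$1$ string, which forces $x_i \neq 0$ everywhere; consistency with a $0$-instance of Hamming weight at most $N-h$ then forces at least $h$ coordinates of $x$ to carry the mystery symbol ($*$ or $\dagger$). By the definition of $f_{\text{sab}}$, the sabotage problem therefore reduces to: given $x \in \{1,*\}^N \cup \{1,\dagger\}^N$ with at least $h$ non-$1$ coordinates, decide whether the non-$1$ symbol is $*$ (answer $1$) or $\dagger$ (answer $0$).

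Next I would apply Yao's minimax principle to the distribution $\mu$ that, with probability $1/2$ each, picks a uniformly random $h$-subset $S \subseteq [N]$ and sets $x_i = *$ (respectively $x_i = \dagger$) for $i \in S$, and $x_i = 1$ otherwise. Any zero-error algorithm must terminate with the correct answer on every input. On inputs drawn from $\mu$, the only way to distinguish the two cases is to query some coordinate in $S$: before such an event, every query returns $1$, and the transcript is consistent with both an all-$*$ and an all-$\dagger$ extension, so the algorithm cannot safely halt. Thus the expected query cost of any deterministic zero-error algorithm on $\mu$ is at least $\mathbb{E}[\tau]$, where $\tau$ is the index of the first query to hit $S$.

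Finally, since all observations before the first hit of $S$ are identical (equal to $1$), the sequence of queries up to $\tau$ depends only on the algorithm and not on the input; this lets me treat that prefix as a fixed permutation of distinct coordinates probing a uniformly random $h$-subset $S$, for which $\mathbb{E}[\tau] = (N+1)/(h+1) = \Omega(N/h)$. Hence $RS(\textsc{and}|_{D_{N,h}}) = R_0((\textsc{and}|_{D_{N,h}})_{\text{sab}}) = \Omega(N/h)$. The bound for $\textsc{or}|_{D_{N,h}'}$ follows by the identical argument with the roles of $0$ and $1$ swapped (equivalently, by de Morgan duality, since the sabotage problem is symmetric under simultaneous bit-flip of the domain).

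The only delicate step is the ``no-information-until-first-hit'' reduction that justifies treating the pre-$\tau$ queries as a fixed order; a brief coupling/adversary argument handles it cleanly, and everything else is standard Yao.
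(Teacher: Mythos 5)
Your proof is correct, and it is a genuinely self-contained variant of what the paper does. The characterization of the sabotage domain is the same in both: every valid sabotage instance must place the mystery symbol on at least $h$ coordinates, so the problem reduces to locating (at least) one of $\geq h$ hidden coordinates. Where you diverge is in how the $\Omega(N/h)$ lower bound on $R_0$ is established. The paper's proof cites the standard $\Omega(N/h)$ bounded-error lower bound for unstructured search with $\geq h$ marked items and then passes through the inequality $R_0(g) = \Omega(R(g))$ from \cite{BK16}. You instead bound $R_0$ directly by Yao's minimax principle against the planted-$h$-subset distribution: a zero-error algorithm cannot halt until it has observed a non-$1$ symbol, the query sequence before the first hit is deterministic (since every answer is $1$), and so the expected cost is at least the expected first-hit time $(N+1)/(h+1)$ of a fixed permutation against a uniform $h$-subset. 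This is slightly more work to write out, but it avoids appealing to a black-box search lower bound and gives the constant explicitly; the paper's version is shorter because it reuses a known bound. Your closing remark that the pre-$\tau$ query sequence is fixed for a deterministic algorithm is exactly the right justification and is not a real gap. The de Morgan reduction of the $\textsc{or}$ case to the $\textsc{and}$ case matches the paper's ``the proof for $\textsc{and}$ is similar'' in the other direction.
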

\begin{proof}
For $x\in [D'_{N,h}]_\textrm{sab}$ to be consistent with $y,y'\in
D'_{N,h}$ such that $\textsc{or}(y)\neq\textsc{or}(y'),$
we must have that $x\in \{0,*\}^N\cup\{0,\dagger\}^N$. Furthermore, the number
of $*$'s or $\dagger$'s in $x$ must be at least $h$. Thus the sabotaged problem
reduces to finding at least one marked item out of $n$, promised there are at
least $h$ marked items. The randomized bounded-error query complexity of this task is $\Omega(N/h)$, and so by Theorem 3 in \cite{BK16},
\begin{align}
RS\left(\textsc{or}|_{D'_{N,h}}\right) =R_0\left((\textsc{or}|_{D'_{N,h}})_{\text{sab}}\right)=\Omega\left(R\left((\textsc{or}|_{D'_{N,h}})_{\text{sab}}\right)\right)=\Omega(N/h).
\end{align}
The proof for $\textsc{and}$ is similar.
\end{proof}

\noindent The next corollary follows immediately from Lemma \ref{lemma:sabCompAndOr} and the composition property of sabotage complexity:
\begin{corollary}\label{corr:classicalBound}
Let $\phi=\phi_1\circ \phi_2\circ\cdots\circ \phi_l$, where for
each $i\in[l]$, $\phi_i=\textsc{or}|_{D_{N_i,h_i}'}$ or $\phi_i=\textsc{and}|_{D_{N_i,h_i}}$. Then
$R(\phi)=\Omega\left(\prod_{i=1}^lN_i/h_i\right).$
\end{corollary}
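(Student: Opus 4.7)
The plan is to chain together the two facts already collected just above the corollary statement: the single-gate sabotage bound from Lemma \ref{lemma:sabCompAndOr}, and the multiplicativity of randomized sabotage complexity under function composition from Ben-David and Kothari \cite{BK16}. Concretely, for any two (possibly promise) Boolean functions $f$ and $g$, \cite{BK16} shows both $R(h) = \Omega(RS(h))$ and $RS(f\circ g) = \Omega(RS(f)\cdot RS(g))$. So the entire argument is a one-line telescoping application of these inequalities.

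First I would write $R(\phi) = \Omega(RS(\phi))$ by the first inequality of \cite{BK16}. Then I would iterate the composition inequality $l-1$ times, peeling off one outer gate $\phi_i$ at a time, to obtain
\begin{equation*}
RS(\phi) = RS(\phi_1\circ\phi_2\circ\cdots\circ\phi_l) = \Omega\!\left(\prod_{i=1}^{l} RS(\phi_i)\right).
\end{equation*}
Here one has to be mildly careful about the hidden constants: each application of the composition inequality loses only a universal multiplicative constant, so iterating $l$ times costs a factor $C^l$ for some absolute $C>0$; this can be absorbed into the $\Omega(\cdot)$ since $l$ appears in the exponent of the bound itself only through the product, and the bound we are proving is multiplicative. (If a cleaner statement is desired, one can cite the direct composition theorem for $RS$ applied to $l$-fold compositions in \cite{BK16} instead of iterating.)

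Next I would invoke Lemma \ref{lemma:sabCompAndOr} to replace each factor $RS(\phi_i)$ by $\Omega(N_i/h_i)$, whether $\phi_i$ is of the form $\textsc{or}|_{D'_{N_i,h_i}}$ or $\textsc{and}|_{D_{N_i,h_i}}$. Combining gives
\begin{equation*}
R(\phi) \;=\; \Omega(RS(\phi)) \;=\; \Omega\!\left(\prod_{i=1}^{l} RS(\phi_i)\right) \;=\; \Omega\!\left(\prod_{i=1}^{l} \frac{N_i}{h_i}\right),
\end{equation*}
which is the desired bound.

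There is no real obstacle; the only subtlety is verifying that the composition inequality of \cite{BK16} applies in the promise setting used here, since $\phi_i$ is defined on the restricted domain $D_{N_i,h_i}$ or $D'_{N_i,h_i}$ and the composed formula $\phi$ carries the induced promise described in the paragraph after Theorem \ref{thm:class}. This is straightforward because the sabotage complexity is defined relative to the domain of the function, and a sabotaged input to $\phi$ that is consistent with two total inputs of differing value gives rise to sabotaged inputs to each $\phi_i$ consistent with the induced promises. Once this point is noted, the proof is complete.
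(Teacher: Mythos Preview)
Your proposal is correct and is exactly the argument the paper has in mind: the paper states that the corollary ``follows immediately from Lemma \ref{lemma:sabCompAndOr} and the composition property of sabotage complexity,'' which is precisely the chaining of $R(\phi)=\Omega(RS(\phi))$, $RS(f\circ g)\geq RS(f)\,RS(g)$, and $RS(\phi_i)=\Omega(N_i/h_i)$ that you spell out. Your caution about accumulating constants is harmless but unnecessary, since the Ben-David--Kothari composition inequality for $RS$ holds without a multiplicative constant loss.
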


Now that we understand the query complexity of symmetric composed \textsc{and}-\textsc{or}
formulas, we can look at how this compares to the quantum query complexity of
evaluating such functions. We now prove the following lemma.
\begin{lemma}\label{lemm:quantumBound}
Let $\phi=\phi_1\circ \phi_2\circ\cdots\circ \phi_l$, where for
each $i\in[l]$, $\phi_i=\textsc{or}|_{D_{N_i,h_i}'}$ or $\phi_i=\textsc{and}|_{D_{N_i,h_i}}$. Let $D$ be the domain of $\phi.$ Then
\begin{align}
\frac{\prod_{i=1}^lN_i}{\prod_{i=1}^lh_i}=\left(\max_{x\in D: \phi(x)=1}R_{s,t}(G_\phi(x))\right) \left(\max_{x\in D: \phi(x)=0}R_{s,t}(G'_\phi(x))\right).
\end{align}
\end{lemma}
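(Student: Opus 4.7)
The plan is to prove the identity by induction on $l$, computing the two factors separately and tracking how each behaves under series/parallel composition. Write $\psi_j = \phi_j \circ \phi_{j+1} \circ \cdots \circ \phi_l$, so $\phi = \psi_1$, and let
\begin{align*}
P(\psi) &= \max_{x:\psi(x)=1} R_{s,t}(G_\psi(x)), &
Q(\psi) &= \max_{x:\psi(x)=0} R_{s',t'}(G'_\psi(x)).
\end{align*}
The goal is $P(\psi_1)\cdot Q(\psi_1) = \prod_{i=1}^l N_i/h_i$, and I would prove the stronger inductive claim $P(\psi_j)\cdot Q(\psi_j) = \prod_{i=j}^l N_i/h_i$ by downward induction on $j$.

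For the base case $j = l$, the formula $\psi_l = \phi_l$ is a single promised \textsc{or} or \textsc{and} gate. If $\phi_l = \textsc{or}|_{D'_{N_l,h_l}}$, then $G_{\phi_l}$ is $N_l$ single-edge copies in parallel and $G'_{\phi_l}$ is $N_l$ single-edge copies in series. A worst-case 1-instance has exactly $h_l$ active edges, giving $P(\phi_l) = 1/h_l$; the unique 0-instance is $0^{N_l}$, giving $Q(\phi_l) = N_l$. The \textsc{and} case is symmetric, giving $P = N_l$ and $Q = 1/h_l$. Either way the product is $N_l/h_l$.

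For the inductive step, write $\psi_j = \phi_j \circ \psi_{j+1}$, abbreviating $N=N_j$, $h=h_j$, and let $W_+ = P(\psi_{j+1})$, $W_- = Q(\psi_{j+1})$. By Definition~\ref{def:Gphi}, $G_{\psi_j}(x)$ is built from $N$ copies of $G_{\psi_{j+1}}(x^i)$ glued in parallel (if $\phi_j = \textsc{or}$) or series (if $\phi_j = \textsc{and}$), and by Claim~\ref{claim:compose}, $G'_{\psi_j}(x)$ is the opposite composition of the $G'_{\psi_{j+1}}(x^i)$. Lemma~\ref{lem:formula-st} tells us exactly which sub-copies contribute to each resistance (connected copies for $R_{s,t}$, connected dual copies for $R_{s',t'}$). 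For $\phi_j = \textsc{or}|_{D'_{N,h}}$, the promise forces a 1-instance to have at least $h$ connected parallel copies of $G_{\psi_{j+1}}(x^i)$ and a 0-instance to have all $N$ copies disconnected (so all $N$ dual copies in series are connected). Applying Claim~\ref{claim:parallel_series} and optimizing gives $P(\psi_j) = W_+/h$ and $Q(\psi_j) = N\cdot W_-$, so $P(\psi_j)Q(\psi_j) = (N/h)W_+W_-$. The \textsc{and} case is dual, giving $P(\psi_j) = N W_+$, $Q(\psi_j) = W_-/h$, and the same product $(N/h)W_+W_-$. Combined with the inductive hypothesis $W_+W_- = \prod_{i=j+1}^l N_i/h_i$, this closes the induction.

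The only real obstacle is verifying that the worst-case instances achieving $P$ and $Q$ are simultaneously attainable: when picking $h$ "good" sub-inputs in the \textsc{or}-1 case, one must actually set each of those sub-inputs to a maximizer of $P(\psi_{j+1})$ while keeping the remaining $N - h$ sub-inputs as valid 0-instances of $\psi_{j+1}$ (so that the overall string lies in the composed domain). Both are possible because each $\psi_{j+1}$, being a composition of promised \textsc{and}/\textsc{or} gates, has both valid 0- and 1-instances by construction. A symmetric observation handles the \textsc{and} case. Everything else is a mechanical application of $R_{\text{series}} = \sum R_i$ and $R_{\text{parallel}}^{-1} = \sum R_i^{-1}$ from Claim~\ref{claim:parallel_series}.
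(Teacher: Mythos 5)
Your proof is correct and follows essentially the same route as the paper's: induction on the number of compositions, using the series/parallel resistance rules (Claim~\ref{claim:parallel_series}) together with Claim~\ref{claim:compose} to track how the two maxima transform under each additional promised gate, with the base case computed directly from the parallel/series structure of a single gate. The paper indexes the induction forward as $\phi = \phi_1 \circ \xi$ rather than your downward $\psi_j = \phi_j \circ \psi_{j+1}$, but this is the same argument. One small point in your favor: you explicitly verify that the worst-case instances for the two maxima are simultaneously attainable within the composed domain (each $\psi_{j+1}$ has both valid $0$- and $1$-instances), which the paper leaves implicit.
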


\begin{proof}
The proof follows by induction on the number of compositions. First suppose that $\phi =\textsc{or}|_{D_{N,h}'}$. Then $G_\phi$ consists of $N$ edges connected in parallel between $s$ and $t$, and $G_\phi'$ consists of $N$ edges connected in series. The only input $x$ such that $\phi(x)=0$ is the all zeros input. Therefore $\max_{x\in D:\phi(x)=0}R_{s,t}(G_\phi'(x))=N.$ Now notice (using Claim \ref{claim:parallel_series}) that $R_{s,t}(G_\phi(x))=1/|x|.$ However because of the domain of $\textsc{or}_{N_i,h_i}$, inputs $x$ have $|x|\geq h$, so $\max_{x\in D:\phi(x)=1}R_{s,t}(G_\phi(x))=1/h.$ Thus 
\begin{align}
 N/h=\left(\max_{x\in D: \phi(x)=1}R_{s,t}(G_\phi(x))\right) \left(\max_{x\in D: \phi(x)=0}R_{s,t}(G'_\phi(x))\right).
\end{align} 
A similar analysis holds for the base case $\phi =\textsc{and}|_{D_{N,h}}$.

Now for the inductive step, let $\phi=\phi_1\circ \xi $ for $\xi= \phi_2\circ\dots\circ\phi_{\subf}$,  where for each $i$, $\phi_i$ is either $\textsc{or}|_{D_{N_i,h_i}'}$ or $\textsc{and}|_{D_{N_i,h_i}}$. Let $D_\xi$ be the domain of $\xi$ and  let $x^j\in D_\xi$ denote the bits of $x$ that are input to the $j^{\textrm{th}}$ copy of $\xi$. Suppose first that $\phi_1=\textsc{or}|_{D_{N_1,h_1}'}$.  $G_\phi'$ is formed by taking the $N_1$ graphs $G_{\xi}'$ and connecting them in series. 
The only way $\phi(x)=0$ is if the input $x^j\in D_\xi$ to each of the $\xi$ functions satisfies $\xi(x^j)=0$, so by Claim \ref{claim:parallel_series}
\begin{align}
\max_{x\in D: \phi(x)=0}R_{s,t}(G_\phi'(x))
=N_1\max_{y\in D_\xi:\xi(y)=0}R_{s,t}(G_{\xi}'(y)).
\end{align}

On the other hand, $G_\phi$ is formed by taking $N_1$ graphs $G_{\xi}$ and connecting them in parallel. Using Claim \ref{claim:parallel_series}, if $x^j\in D_\xi$ is the input to $j^{\textrm{th}}$ function $\xi$, we have
\begin{align}
R_{s,t}(G_\phi(x))=\left(\sum_{j=1}^N\frac{1}{R_{s,t}(G_{\xi}(x^j))}\right)^{-1}.
\end{align}
Thus larger values for $R_{s,t}(G_\phi(x))$ come from cases where
$R_{s,t}(G_{\xi}(x^j))$ are large. Now
\begin{align}
\max_{x\in D_\xi}R_{s,t}(G_{\xi}(x))=\infty,
\end{align}
which occurs when $\xi(y)=0$. 
Because of the promise on the domain of $\phi_1$, there must be at least $h_1$ of the $N_1$ subformulas $\xi$ that evaluate to $1$. On each of those subformulas, we want to
have an input $x^j\in D_\xi$ that maximizes the effective resistance of that
subformula. Therefore, we have
\begin{align}
\max_{x\in D:\phi(x)=1}R_{s,t}(G_\phi(x))=&\left(\frac{h_1}{\max_{y\in D_\xi:\xi(y)=1}R_{s,t}(G_{\xi_j}(y)}\right)^{-1}
=\frac{\max_{y\in D_\xi:\xi(y)=1}R_{s,t}(G_{\xi_j}(y))}{h_1}.
\end{align}
Therefore, using the inductive assumption,
\begin{align}
&\left(\max_{x\in D:\phi(x)=1}R_{s,t}(G_\phi(x))\right)\left(\max_{x\in D:\phi(x)=0}R_{s,t}(G_\phi'(x))\right)\nonumber\\
=&\frac{N_1}{h_1}\max_{y\in D_\xi:\xi(y)=1}R_{s,t}(G_{\xi_j}(y^j))\max_{y\in D_\xi:\xi(y)=0}R_{s,t}(G_{\xi}'(y))
=\frac{\prod_{i=1}^lN_i}{\prod_{i=1}^lh_i}.
\end{align}
The inductive step for $\phi_1 = \textsc{and}|_{D_{N,h}}$ is similar.
\end{proof}
 
\noindent Corollary \ref{corr:classicalBound} and Lemma \ref{lemm:quantumBound} give Theorem \ref{thm:class}.




\section{NAND-tree Proofs}\label{app:proofs}

\subsection{Relationship Between Faults and Effective Resistance}
\noindent In this section, we prove 
 Lemma \ref{thm:kfault_conn}:
\CCeER*

\begin{proof}
We will give a proof for ${\cal F}_A(x)$; the case of ${\cal F}_B(x)$ is similar. 

First, $R_{s,t}(G_{\textsc{nand}_d}(x))=\infty$ if and only if $s$ and $t$ are
not connected in $G_{\textsc{nand}_d}(x)$, which, by Lemma \ref{lem:formula-st}, occurs if and only if $x$ is a 0-instance. This means exactly that $x$ is not $A$-winnable, which, by Eq.\ (\ref{eq:FaultDef}), holds
if and only if ${\cal F}_A(x)=\infty$. Thus, suppose this is not the case, so
${\cal F}_A(x)<\infty$.

The rest of the proof is by induction. We need to look at both odd and even cases. For the case of $d=0$, the only $A$-winnable input in $\{0,1\}^{2^0}$ is $x=1$. In that case, using Eq.\ (\ref{eq:FaultDef}), ${\cal F}_A(x)=1$, since there are no decision nodes for Player $A$, and since $G_{\textsc{nand}_0}(x)$ is just a single edge
from $s$ to $t$, $R_{s,t}(G_{\textsc{nand}_0}(x))=1$.

Let $x\in\{0,1\}^{2^{d}}$ be any $A$-winnable input with $d>1$. We let $x^0$ be the first $2^{d-1}$ bits of $x$ and $x^1$ be the last $2^{d-1}$ bits of $x$, so $x=(x^0,x^1)$.

We first consider odd $d>1$. Using the definition of $G_\phi$ from Section \ref{sec:nandgraphs}, and the fact that for $d$ odd, the root node is an $\wedge$-node, we see that $G_{\textsc{nand}_d}(x)$ consists of $G_{\textsc{nand}_{d-1}}(x^0)$ and $G_{\textsc{nand}_{d-1}}(x^1)$ connected in series, so by Claim \ref{claim:parallel_series} 
\begin{align}\label{eq:oddInd1}
R_{s,t}(G_{\textsc{nand}_d}(x))=R_{s,t}(G_{\textsc{nand}_{d-1}}(x^{0}))+
R_{s,t}(G_{\textsc{nand}_{d-1}}(x^{1})).
\end{align} 
Now the root can not be a fault, because it is a decision node for Player $B$, but we know the tree is $A$-winnable, so no choice Player $B$ makes would allow her to win the game. Therefore,  both subtrees connected to the root node must be $A$-winnable. Using Eq.\ (\ref{eq:FaultDef}) we have
\begin{align}\label{eq:oddInd2}
{\cal{F}}_A(x^0)+{\cal{F}}_A(x^1)\leq \max_{b\in\{0,1\}}{2\cal{F}}_A(x^b)=2{\cal{F}}_A(x).
\end{align}
Combining Eqs. (\ref{eq:oddInd1}) and (\ref{eq:oddInd2}) and the inductive assumption for even depth trees, we have for odd $d$, 
\begin{align}
R_{s,t}(G_{\textsc{nand}_d}(x))\leq 2{\cal{F}}_A(x).
\end{align}

Now we consider the case that $d$ is even, so the root is a decision node for Player $A$. Consequently, the root node is a $\vee$-node, so by Claim \ref{claim:parallel_series}
\begin{align}
R_{s,t}(G_{\textsc{nand}_d}(x))=\left(\frac{1}{R_{s,t}(G_{\textsc{nand}_{d-1}}(x^{0}))}+\frac{1}{R_{s,t}(G_{\textsc{nand}_{d-1}}(x^{1}))}\right)^{-1}.
\label{eq:step5}
\end{align}

Suppose the root is a fault.
Without loss of generality, let's assume the subtree with input $x^0$ is not
$A$-winnable. Then $R_{s,t}(G_{\textsc{nand}_{d-1}}(x^{0}))=\infty$
so Eq.\ (\ref{eq:step5}) becomes
\begin{align}
R_{s,t}(G_{\textsc{nand}_d}(x))=R_{s,t}(G_{\textsc{nand}_{d-1}}(x^{1})).
\end{align}
Using the inductive assumption for odd depth trees, Eq.\ (\ref{eq:FaultDef}), and the fact that the root is a fault, we have
\begin{align}
R_{s,t}(G_{\textsc{nand}_d}(x))\leq 2{\cal {F}}(x^1)={\cal {F}}(x).
\end{align}

If the root is not a fault, then both $R_{s,t}(G_{\textsc{nand}_{d-1}}(x^{0}))$ and $R_{s,t}(G_{\textsc{nand}_{d-1}}(x^{1}))$ are finite, so from \eqref{eq:step5}, and using the inductive assumption, we have 
\begin{align}
R_{s,t}(G_{\textsc{nand}_d}(x)) & \leq \frac{1}{2}\max\{R_{s,t}(G_{\textsc{nand}_{d-1}}(x^{0})),R_{s,t}(G_{\textsc{nand}_{d-1}}(x^{1}))\}\nonumber\\
& \leq \max\{{\cal{F}}(x^0),{\cal{F}}(x^1)\}
={\cal{F}}(x).
\end{align}

\noindent A similar analysis for ${\cal F}_B(x)$ completes the proof. 
\end{proof}

\subsection{Estimating Effective Resistances}\label{app:approx-analysis}
In this section, we will prove Lemma \ref{lem:est}, which bounds the query complexity of estimating the effective resistance of a graph corresponding to a Boolean formula.
 In \cite{IJ15}, Ito and Jeffery  describe a quantum query algorithm to estimate the positive or negative
witness size of a span program given access to $\mathcal{O}_x$. We will describe how to use this algorithm to estimate the effective resistance of graphs $G_\phi(x)$ or $G'_\phi(x)$.

Ref.~\cite{IJ15} define the
\emph{approximate positive and negative witness sizes}, $\tilde{w}_+(x,P)$ and $\tilde{w}_-(x,P)$. These are similar to the positive and negative witness sizes, but with the conditions $\ket{w}\in H(x)$ and $\omega A\Pi_{H(x)}=0$ relaxed. 

\begin{definition}[Approximate Positive Witness]
For any span program $P$ on $\{0,1\}^N$ and $x\in\{0,1\}^N$, we define the
\emph{positive error of $x$ in $P$} as: 
\begin{equation}
e_+(x)=e_+(x,P):=\min\left\{
\norm{\Pi_{H(x)^\bot}\ket{w}}^2:A\ket{w}=\tau\right\}.
\end{equation}
We say $\ket{w}$ is an \emph{approximate positive witness} for
$x$ in $P$ if $\norm{\Pi_{H(x)^\bot}\ket{w}}^2=e_+(x)$ and $A\ket{w}=\tau.$
 We define the \emph{approximate positive witness size} as
\begin{equation}
\tilde{w}_+(x)=\tilde{w}_+(x,P):=\min\left\{\norm{\ket{w}}^2:A\ket{w}=\tau,
\norm{\Pi_{H(x)^\bot}\ket{w}}^2=e_+(x)\right\}.
\end{equation}
\end{definition}

\noindent  If $x\in P_1$, then $e_+(x)=0$. In that case, an
approximate positive witness for $x$ is a positive witness, and
$\tilde w_+(x)=w_+(x)$. For negative inputs,
the positive error is larger than 0. 

\noindent We can define a similar notion of approximate negative witnesses: 

\begin{definition}[Approximate Negative Witness]
For any span program $P$ on $\{0,1\}^N$ and $x\in\{0,1\}^N$,
we define the \emph{negative error of $x$ in $P$} as:
\begin{equation}
e_-(x)=e_-(x,P):=\min\left\{\norm{\omega A\Pi_{H(x)}}^2: 
\omega\in \mathcal{L}(U,\mathbb R), \omega\tau=1\right\}.
\end{equation}
Any $\omega$ such that $\norm{\omega A\Pi_{H(x)}}^2=e_-(x,P)$ is called an 
\emph{approximate negative witness} for $x$ in $P$. We define 
the \emph{approximate negative witness size} as
\begin{equation}
\tilde{w}_-(x)=\tilde{w}_-(x,P):=\min\left\{\norm{\omega A}^2:
\omega\in \mathcal{L}(U,\mathbb R), 
\omega\tau=1,\norm{\omega A\Pi_{H(x)}}^2=e_-(x,P)\right\}.
\end{equation}
\end{definition}

\noindent  If $x\in P_0$, then $e_-(x)=0$. In that case, an
approximate negative witness for $x$ is a negative witness, and
$\tilde w_-(x)=w_-(x)$. For positive inputs,
the negative error is larger than 0. 

Then Ito and Jeffery give the following result:

\begin{theorem}[\cite{IJ15}]\label{theorem:span-est}
Fix $X\subseteq\{0,1\}^N$ and $f:X\rightarrow \mathbb{R}_{\geq 0}$. Let $P$ be a
span program such that for all $x\in X$, $f(x)=w_+(x,P)$ and define
$\widetilde{W}_-=\widetilde{W}_-(P,f)=\max_{x\in X}\tilde{w}_-(x,P)$.
There exists a quantum algorithm that estimates $f$ to relative error $\eps$
and that uses $\tO\left(\frac{1}{\eps^{3/2}}\sqrt{w_+(x)\widetilde{W}_-}\right)$ queries. 
Similarly, let $P$ be a span program such that for all $x\in X$,
$f(x)=w_-(x,P)$ and define
$\widetilde{W}_+=\widetilde{W}_+(P,f)=\max_{x\in X}\tilde{w}_+(x,P)${}.
Then there exists a quantum algorithm that estimates $f$ to accuracy
$\eps$ and that uses $\tO\left(\frac{1}{\eps^{3/2}}\sqrt{w_-(x)\widetilde{W}_+}\right)$
queries.
\end{theorem}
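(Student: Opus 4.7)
The plan is to apply Theorem \ref{theorem:span-est} to the span program $P_{G_\phi,1}$ with uniform edge weights, using Lemma \ref{lemma:both_witnesses} to identify $R_{s,t}(G_\phi(x)) = 2\,w_+(x,P_{G_\phi,1})$. Estimating $w_+(x)$ to relative accuracy $\eps$ then estimates the effective resistance to the same accuracy, and Theorem \ref{theorem:span-est} yields query complexity $\tO(\eps^{-3/2}\sqrt{w_+(x)\,\widetilde W_-})$. The whole task therefore reduces to bounding $\widetilde W_-(P_{G_\phi,1}) = \max_x \tilde w_-(x,P_{G_\phi,1}) = O(\subf^{d_\vee})$.

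For this I would induct over the parse tree of $\phi$ and prove two claims simultaneously: (a) $R_{s',t'}(G'_\phi(x)) \leq \subf^{d_\vee}$ for every negative input $x$ of $\phi$, and (b) $R_{s',t'}(G'_\phi) \leq \subf^{d_\vee}$. Both inductions are driven by Claim \ref{claim:compose}: at a single-edge leaf both quantities equal $1$; at an $\vee$-node $\phi=\phi_1\vee\cdots\vee\phi_\subf$, $G'_\phi$ is a series composition of the $G'_{\phi_i}$, so resistances add to at most $\subf$ times the maximum child, matching the increment $d_\vee(\phi)=1+\max_i d_\vee(\phi_i)$; at an $\wedge$-node, $G'_\phi$ is a parallel composition, so the resistance is bounded by the minimum child while $d_\vee(\phi)=\max_i d_\vee(\phi_i)$ is unchanged. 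For claim (a) at an $\wedge$-node, any $0$-input of $\phi$ forces some $\phi_i(x^i)=0$, so at least one $G'_{\phi_i}(x^i)$ is connected and its inductively bounded resistance dominates the parallel combination.

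For negative inputs, claim (a) and Lemma \ref{lemma:both_witnesses} give $\tilde w_-(x) = w_-(x) = 2R_{s',t'}(G'_\phi(x)) = O(\subf^{d_\vee})$. For positive inputs $w_-(x)=\infty$, but $\tilde w_-(x)$ is still finite; here I would use claim (b) together with Claim \ref{claim:neg-wit} to fix $\omega$ corresponding to the minimum-energy unit $s't'$-flow on the full dual $G'_\phi$, obtaining $\|\omega A\|^2 = 2R_{s',t'}(G'_\phi) = O(\subf^{d_\vee})$, and then invoke the relaxed-witness machinery from \cite{IJ15} (any $\omega$ with $\omega\tau=1$ bounds $\tilde w_-(x)$ up to constants by $\|\omega A\|^2$) to conclude. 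Combining the two cases gives $\widetilde W_- = O(\subf^{d_\vee})$, and the claimed query complexity follows. The companion statement for $R_{s,t}(G'_\phi(x))$ is immediate by applying the same argument to the primed formula $\phi'$ of Lemma \ref{claim:primeForm}, using $G_{\phi'}(\bar x)=G'_\phi(x)$ and $d_\vee(\phi')=d_\wedge(\phi)$. The main obstacle is the positive-input bound on $\tilde w_-(x)$, which is not a purely geometric quantity on $G'_\phi$ and relies on the error–witness trade-off of \cite{IJ15}; the rest is a clean two-claim series-parallel induction.
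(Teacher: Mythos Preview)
Your proposal does not address the stated theorem. Theorem~\ref{theorem:span-est} is a general statement about estimating witness sizes of an \emph{arbitrary} span program $P$, quoted from \cite{IJ15}; the paper does not prove it but cites it as a black box. Your write-up instead \emph{applies} Theorem~\ref{theorem:span-est} to the particular span program $P_{G_\phi,1}$ in order to bound the cost of estimating $R_{s,t}(G_\phi(x))$---that is, you are sketching a proof of Lemma~\ref{lem:est}, not of Theorem~\ref{theorem:span-est}.

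Viewed as a proof of Lemma~\ref{lem:est}, your route is genuinely different from the paper's, but it has a gap. The paper bounds $\widetilde W_-$ via Lemma~\ref{lemma:approx_neg_wit}: it takes an \emph{optimal} approximate negative witness $\omega$ for a given $x$, observes via Claim~\ref{claim:neg-wit} that the associated $\theta$ is a unit $s't'$-flow on the full dual $G'_\phi$, decomposes this flow into self-avoiding paths (Claim~\ref{claim:flow-decomposition}), strips cycles without increasing error or norm, and bounds $\|\omega A\|^2$ by twice the longest self-avoiding $s't'$-path in $G_{\phi'}$, which is $\subf^{d_\vee}$ by Claim~\ref{claim:depths}. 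This argument works uniformly for all $x$ because it starts from the actual minimizer. Your approach instead exhibits a \emph{particular} $\omega$ (coming from the min-energy flow on the full $G'_\phi$) and asserts that ``any $\omega$ with $\omega\tau=1$ bounds $\tilde w_-(x)$ up to constants by $\|\omega A\|^2$.'' That assertion is not supported by the definition: $\tilde w_-(x)$ is the minimum of $\|\omega A\|^2$ over those $\omega$ that \emph{achieve the minimum error} $e_-(x)$, so an $\omega$ that does not minimize $\|\omega A\Pi_{H(x)}\|^2$ is not a feasible candidate and does not upper-bound $\tilde w_-(x)$. Your inductive bounds (a) and (b) on $R_{s',t'}$ are correct and pleasant, and they do handle the negative-input case cleanly since there $\tilde w_-(x)=w_-(x)$; but for positive inputs you still need the structural argument the paper uses (optimal-witness $\Rightarrow$ flow $\Rightarrow$ path decomposition $\Rightarrow$ longest-path bound), or some other way to control the \emph{min-error} witness rather than an arbitrary one.
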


We will apply Theorem \ref{theorem:span-est} to the span program $P_{G,c}$ defined in Eq.\ \eqref{eq:P}, with $G=G_\phi$.
Throughout this section, we will always set the weight function
$\Ohm$ to take value one on all edges of the graph $G$. In this, case, to simplify notation, we will denote the span program $P_{G,\Ohm}$ as $P_G.$
To apply Theorem \ref{theorem:span-est}, we need bounds on $\widetilde{W}_+(P_{G_\phi})$ and $\widetilde{W}_-(P_{G_\phi})$. We will prove:
\begin{restatable}{lemma}{appnegwit}\label{lemma:approx_neg_wit}
 For any formula $\phi$, its $\wedge$-depth is the largest number of
 $\wedge$-labeled nodes on any path from the root to a leaf.  Let $\phi$ be any
 \textsc{and}-\textsc{or} formula with maximum fan-in $\subf$, $\wedge$-depth
 $\depth_\wedge$, and $\vee$-depth $\depth_\vee$. Then
 $\widetilde{W}_+(P_{G_{\phi}})\leq \frac{1}{2}\subf^{\depth_\wedge}$ and
 $\widetilde{W}_-(P_{G_{\phi}})\leq 2\subf^{\depth_\vee}$.
\end{restatable}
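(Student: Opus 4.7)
The proof is by structural induction on $\phi$, tracking how the approximate witnesses behave under series and parallel composition of $G_\phi$. By Claims~\ref{claim:pos-wit} and \ref{claim:neg-wit}, every vector $\ket{w}$ with $A\ket{w} = \tau$ (resp.\ every linear map $\omega$ with $\omega\tau = 1$) corresponds to a unit $st$-flow on $G_\phi$ (resp.\ a unit $s't'$-flow on $G_\phi'$), and by Claim~\ref{claim:compose} series and parallel composition are interchanged upon passing to the dual. I describe the argument for the bound on $\widetilde{W}_+$; the argument for $\widetilde{W}_-$ is entirely dual, with $G'$ playing the role of $G$, the factors of $2$ and $\tfrac{1}{2}$ swapping, and $\depth_\vee$ swapping with $\depth_\wedge$.

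The base case $\phi=x_i$ is a direct computation on the one-edge span program, giving $\tilde{w}_+(x)=1/2$ for both $x\in\{0,1\}$. For the series case $\phi = \phi_1 \wedge \cdots \wedge \phi_\subf$, flow conservation at the intermediate vertices forces any unit $st$-flow on $G_\phi$ to decompose uniquely as a concatenation of unit flows on the edge-disjoint subgraphs $G_{\phi_i}$, so both $\|\ket{w}\|^2$ and $\|\Pi_{H(x)^\perp}\ket{w}\|^2$ split as sums over $i$. First minimizing the leakage and then the norm yields $\tilde{w}_+(x,P_{G_\phi}) = \sum_i \tilde{w}_+(x^i,P_{G_{\phi_i}})$, and since $\depth_\wedge(\phi) = 1 + \max_i \depth_\wedge(\phi_i)$ the induction closes with a factor of $\subf$ to spare.

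The harder case is parallel composition $\phi = \phi_1 \vee \cdots \vee \phi_\subf$, where any unit $st$-flow on $G_\phi$ decomposes as $\theta = \sum_i \alpha_i \theta^i$ with $\sum_i \alpha_i = 1$ and $\theta^i$ a unit flow on $G_{\phi_i}$. When $x$ is positive in $P_{G_\phi}$ (some $x^i$ positive for $\phi_i$), concentrating all flow on one such branch gives $\tilde{w}_+(x) = w_+(x) \leq \min_{i:\phi_i(x^i)=1} w_+(x^i,P_{G_{\phi_i}})$, and since $\depth_\wedge(\phi) = \max_i \depth_\wedge(\phi_i)$ the inductive hypothesis closes without any extra factor. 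When $x$ is negative (all $x^i$ negative for $\phi_i$), the minimum leakage equals $e_+(x,P_{G_\phi}) = (\sum_i 1/e_+(x^i,P_{G_{\phi_i}}))^{-1}$, attained at $\alpha_i^{\ast} = e_+(x)/e_+(x^i)$; choosing each $\ket{w^i}$ to be an approximate positive witness for $x^i$ of norm $\tilde{w}_+(x^i,P_{G_{\phi_i}})$ produces a valid approximate positive witness for $x$ of norm at most
\[
e_+(x)^2 \sum_i \tilde{w}_+(x^i,P_{G_{\phi_i}})/e_+(x^i)^2.
\]
Bounding each $\tilde{w}_+(x^i,P_{G_{\phi_i}})$ by the inductive estimate $\frac{1}{2}\subf^{\depth_\wedge(\phi)}$ and using the elementary inequality $(\sum_i a_i)^2 \geq \sum_i a_i^2$ with $a_i = 1/e_+(x^i) > 0$ to obtain $e_+(x)^2\sum_i 1/e_+(x^i)^2 \leq 1$ closes the induction.

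The main obstacle is precisely this negative-$x$ case for parallel composition: a single-subgraph witness fails to achieve the minimum leakage, so one must exhibit an explicit mixture with the Lagrange-optimal weights $\alpha_i^{\ast}$, and the cleanness of the resulting bound rests on the inequality $(\sum a_i)^2 \geq \sum a_i^2$, which collapses the ratio $e_+(x)^2 \sum_i 1/e_+(x^i)^2$ to at most $1$ without introducing any dependence on the number of parallel branches. Once this case is handled, the dual argument for $\widetilde{W}_-$ runs verbatim on $G_\phi'$.
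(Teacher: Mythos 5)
Your proof is correct, but it takes a genuinely different route from the paper's. The paper argues directly rather than by induction on $\phi$: starting from an optimal approximate positive witness $\ket{\tilde w}$, it uses a Cauchy--Schwarz argument to replace $\ket{\tilde w}$ with the antisymmetric vector $\frac{1}{2}\ket{\theta}$ built from the induced flow, then invokes Claim~\ref{claim:flow-decomposition} to decompose $\ket{\theta}$ into a convex combination of self-avoiding $st$-paths (the cycle part is eliminated by optimality), and finally bounds $\|\ket{\theta}\|^2$ by the squared norm of the longest such path, which Claim~\ref{claim:depths} bounds in terms of $\subf^{\depth_\wedge}$. The negative-witness bound is handled by the same argument on $G_\phi'=G_{\phi'}$. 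Your argument instead does structural induction on $\phi$, tracking exactly how $e_+$ and $\tilde w_+$ compose under series (both additive) and parallel (harmonic on $e_+$, with the explicit Lagrange-optimal weights $\alpha_i^{\ast}=e_+(x)/e_+(x^i)$ controlling $\tilde w_+$). This lets you work directly with the orthogonal decomposition $H=\bigoplus_i H_i$ and sidestep both the Cauchy--Schwarz antisymmetrization step and the path--cycle decomposition, at the cost of needing the Lagrangian characterization of $e_+$ under parallel composition and the inequality $\left(\sum_i a_i\right)^2\geq\sum_i a_i^2$ to absorb the $e_+$-ratio in the negative branch. The paper's route generalizes to any $st$-connectivity span program $P_{G,c}$ on a graph with a good bound on the longest self-avoiding $st$-path (and dually on $G'$), whereas yours is tied to the series-parallel recursion of $G_\phi$, but in exchange yields, essentially for free, exact compositional formulas for $e_\pm$ and $\tilde w_\pm$. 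One small phrasing nit: in the series case the induction \emph{uses} rather than \emph{spares} the factor of $\subf$ --- that one power of $\subf$ is spent bounding the number of summands $i$.
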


Then, applying Lemma \ref{lemma:approx_neg_wit} and Theorem 
\ref{theorem:span-est}, we have the main result of this section, which was first stated in Section \ref{sec:nandWin}:

\appWit*
\begin{proof}[Proof of Lemma \ref{lem:est}]
By Theorem \ref{theorem:span-est}, since $R_{s,t}(G_\phi(x))=\frac{1}{2}w_+(x, P_{G_\phi})$ (Lemma \ref{lemma:both_witnesses}), 
we can estimate this quantity using a number of queries that depends on $\widetilde{W}_-(P_{G_{\phi}})$. By Lemma \ref{lemma:approx_neg_wit}, we have that 
$\widetilde{W}_-(P_{G_{\phi}})\leq 2\subf^{\depth_\vee},$ 
so we can estimate $w_+(x)=R_{s,t}(G_\phi(x))$ in 
$\textstyle\widetilde{O}\left(\frac{1}{\eps^{2/3}}\sqrt{w_+(x)}\widetilde{W}_-^{1/2}\right)=\widetilde{O}\left(\frac{1}{\eps^{2/3}}\sqrt{R_{s,t}(G_\phi(x))\subf^{\depth_\vee}}\right) $
queries. Similarly, $R_{s,t}(G'_\phi(x))={2}w_-(x,P_{G_\phi})$ for all 0-instances, and $\widetilde{W}_+\leq \frac{1}{2}\subf^{\depth_\wedge}$, so  we can estimate $R_{s,t}(G'_\phi(x))$ in $\widetilde{O}\left(\frac{1}{\eps^{2/3}}\sqrt{R_{s,t}(G_\phi'(x))\subf^{\depth_\wedge}}\right)$ queries.
\end{proof}

To prove Lemma \ref{lemma:approx_neg_wit}, we will use
the following observation, which gives an upper bound on the length of the longest self-avoiding $st$-path in $G_\phi$, in terms of the $\wedge$-depth of $\phi$. This bound is not tight in general. 

\begin{claim}\label{claim:depths}
Let $\phi$ be an \textsc{and}-\textsc{or} formula with constant fan-in $\subf$. If $\phi$ has $\wedge$-depth $\depth_\wedge$, then the longest self-avoiding path connecting $s$ and $t$ in $G_\phi$ has length at most $\subf^{\depth_\wedge}$. 
\end{claim}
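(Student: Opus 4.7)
The plan is to induct on the recursive structure of $\phi$ as defined in Definition~\ref{def:Gphi}, tracking how the longest self-avoiding $st$-path in $G_\phi$ behaves under series and parallel composition.

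For the base case, if $\phi = x_i$ has $\wedge$-depth $0$, then $G_\phi$ is a single edge from $s$ to $t$, so the longest self-avoiding $st$-path has length $1 = \subf^0$. For the inductive step, I would split into the two composition cases. If $\phi = \phi_1 \vee \cdots \vee \phi_l$ with $l \le \subf$, then $G_\phi$ is the parallel composition of the $G_{\phi_i}$, which share only the vertices $s$ and $t$; hence any self-avoiding $st$-path in $G_\phi$ lies entirely in a single $G_{\phi_i}$. Since $\depth_\wedge(\phi) = \max_i \depth_\wedge(\phi_i)$, the induction hypothesis gives a length bound of $\max_i \subf^{\depth_\wedge(\phi_i)} \le \subf^{\depth_\wedge(\phi)}$.

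If instead $\phi = \phi_1 \wedge \cdots \wedge \phi_l$ with $l \le \subf$, then $G_\phi$ is the series composition of the $G_{\phi_i}$, where the copies are joined only at the glue vertices $s=s_1,s_2,\dots,s_l,s_{l+1}=t$. Consequently any self-avoiding $st$-path in $G_\phi$ decomposes uniquely into a concatenation of self-avoiding $s_i t_i$-paths inside each $G_{\phi_i}$. By the induction hypothesis, each piece has length at most $\subf^{\depth_\wedge(\phi_i)} \le \subf^{\depth_\wedge(\phi)-1}$, since $\depth_\wedge(\phi) = 1 + \max_i \depth_\wedge(\phi_i)$. Summing over the $l \le \subf$ pieces gives a total length of at most $\subf \cdot \subf^{\depth_\wedge(\phi)-1} = \subf^{\depth_\wedge(\phi)}$, as required.

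No step looks substantively hard; the only thing to be careful about is to use the right structural fact about $G_\phi$ at each composition step (parallel composition only shares $\{s,t\}$, series composition only shares one glue vertex between neighbors), so that a self-avoiding path in the whole graph really does restrict to self-avoiding paths in the pieces. The mild obstacle is that the fan-in of internal nodes may be strictly less than $\subf$, but this only helps — the worst case is fan-in exactly $\subf$, which is what the bound $l \le \subf$ above accommodates.
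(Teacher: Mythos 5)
Your proof is correct and follows essentially the same argument as the paper's: parallel composition forces a self-avoiding $st$-path to stay inside a single component, and series composition lets you decompose the path at the glue vertices, bound each piece by $\subf^{\depth_\wedge-1}$ via the induction hypothesis, and sum over at most $\subf$ pieces. The only organizational difference is that you take the base case to be a single leaf variable, while the paper takes it to be any formula of $\wedge$-depth $0$ (whose graph has only the vertices $s,t$); both are valid and your $\vee$-case correctly handles the extra depth-$0$ formulas that the paper disposes of in its base case.
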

\begin{proof}
We will prove the statement by induction. If $\phi$ has $\wedge$-depth $\depth_\wedge=0$, then it has no $\wedge$-nodes. Thus, it is easy to see that $G_\phi$ has only two vertices, $s$ and $t$, with some number of edges connecting them, so every $st$-path has length 1. 

Suppose $\phi$ has $\wedge$-depth $\depth_\wedge>0$. First, suppose $\phi=\phi_1\wedge\dots\wedge\phi_\subf$. Then since $G_\phi$ consists of $G_{\phi_1},\dots,G_{\phi_\subf}$ connected in series, any $st$-path in $G_\phi$ consists of an $st$-path in $G_{\phi_1}$, followed by an $st$-path in $G_{\phi_2}$, etc. up to an $st$-path in $G_{\phi_\subf}$, so if $\depth_\wedge(\phi_i)$ is the $\wedge$-depth of $\phi_i$, then the longest $st$-path in $G_\phi$ has length at most:
\begin{align}
\subf^{\depth_\wedge(\phi_1)}+\dots+\subf^{\depth_\wedge(\phi_\subf)}\leq \subf\subf^{\depth_\wedge-1}=\subf^{\depth_\wedge}.
\end{align}

If $\phi=\phi_1\vee\dots\vee\phi_\subf$, then $\max_i \depth_\wedge(\phi_i)=\depth_\wedge(\phi)=\depth_\wedge$, 
and $G_{\phi}$ consists of $G_{\phi_1},\dots,G_{\phi_\subf}$, connected in parallel. Any self-avoiding $st$-path must include exactly one edge adjacent to $s$ and one edge adjacent to $t$. However, any path that includes an edge from $G_{\phi_i}$ and $G_{\phi_j}$ for $i\neq j$ must go through $s$ or $t$, so it must have more than one edge adjacent to $s$, or more than one edge adjacent to $t$, so such a path can never be a self-avoiding $st$-path. Thus, any self-avoiding $st$-path must be contained completely in one of the $G_{\phi_i}$. The longest such path is thus the longest self-avoiding $st$-path in any of the $G_{\phi_i}$, which, by induction, is $\max_i \subf^{\depth_\wedge({\phi_i})}= \subf^{\depth_\wedge}$. 
\end{proof}

\noindent Now we can prove Lemma \ref{lemma:approx_neg_wit}:

\begin{proof}[Proof of Lemma \ref{lemma:approx_neg_wit}]
To begin, we will prove the upper bound on $\widetilde{W}_+$. Suppose $\ket{\tilde w}$ is an optimal approximate positive witness for $x$. By Claim \ref{claim:pos-wit}, if $\ket{\tilde w}$ is an approximate positive witness, then since $A\ket{\tilde w}=\tau$, and $\Ohm$ has unit value on all edges of $G$, $\theta(u,v,\edgeL )=\braket{u,v,\edgeL }{\tilde w}-\braket{v,u,\edgeL}{\tilde w}$ is a unit flow on $G$. Since $\ket{\tilde w}$ is an approximate positive witness for $x$, it has minimal error for $x$, so it minimizes $\norm{\Pi_{H(x)^\bot}\ket{\tilde{w}}}^2$, and since it is optimal, it minimizes $\norm{\ket{\tilde{w}}}^2$ over all approximate positive witnesses. Define $\ket{\theta}=\sum_{(u,v,\edgeL )\in\overrightarrow{E}(G)}\theta(u,v,\edgeL )\ket{u,v,\edgeL }$, so we know that $\frac{1}{2}\ket{\theta}$ also maps to $\tau$ under $A$, so is also a positive witness in $P_{G_\phi}$.Then we have 
\begin{equation}
\norm{\Pi_{H(x)^\bot}\ket{\theta}}^2=2\!\!\!\!\sum_{\substack{(u,v,\lambda)\in\\ \overrightarrow{E}(G)\setminus \overrightarrow{E}(G(x))}}\!\!\!\!\braket{u,v,\lambda}{\tilde{w}}^2-2\!\!\!\!\sum_{\substack{(u,v,\lambda)\in\\ \overrightarrow{E}(G)\setminus \overrightarrow{E}(G(x))}}\!\!\!\!\braket{u,v,\lambda}{\tilde{w}}\braket{v,u,\lambda}{\tilde{w}}\leq \norm{2\Pi_{H(x)^\bot}\ket{\tilde w}}^2,
\end{equation}
where the last inequality uses Cauchy-Schwarz, so $\frac{1}{2}\ket{\theta}$ is also an approximate positive witness for $x$. Similarly,
\begin{equation}
\|\ket{\theta}\|^2\leq \| 2\ket{\tilde w}\|^2,
\end{equation}
so $\frac{1}{2}\ket{\theta}$ is optimal.

By Claim \ref{claim:flow-decomposition}, we can consider a decomposition of $\ket{\theta}$ into self-avoiding paths $p_i$ and cycles $c_i$ such that all cycles are disjoint from all paths,
$\ket{\theta}=\sum_{i=1}^r\alpha_i\ket{p_i}+\sum_{i=1}^{r'}\beta_i\ket{c_i}$,
where for each $i$,
\begin{align}
\ket{p_i}=\sum_{j=1}^{L_i}\ket{u^{(i)}_j,u^{(i)}_{j+1},\lambda_{i,j}}-\sum_{j=1}^{L_i}\ket{u^{(i)}_{j+1},u^{(i)}_{j},\lambda_{i,j}},\\
\ket{c_i}=\sum_{j=1}^{L_i'}\ket{v^{(i)}_j,v^{(i)}_{j+1},\lambda_{i,j}'}-\sum_{j=1}^{L_i'}\ket{v^{(i)}_{j+1},v^{(i)}_{j},\lambda_{i,j}'}
\end{align}
where $v^{(i)}_{L_j'+1}=v^{(i)}_1$ and $\{\lambda_{i,j}\}_{i,j}\cap \{\lambda_{i,j}'\}_{i,j}=\emptyset$. 
It's easy to see (in the case of unit edge weights) that $A\ket{c_i}=0$ for all $i$, so 
\begin{align}
A\frac{1}{2}\sum_{i=1}^r\alpha_i\ket{p_i}=A\frac{1}{2}\ket{\theta}=\tau.
\end{align}
Let $\ket{\theta'}=\sum_{i=1}^r\alpha_i\ket{p_i}$. Then since $c_i$ and $p_j$ have no common edges, we have $\braket{c_i}{p_j}=0$, and also $\bra{c_i}(I-\Pi_{H(x)})\ket{p_j}=0$, so the error of $\frac{1}{2}\ket{\theta'}$ is $\frac{1}{4}\norm{\Pi_{H(x)^\bot}\ket{\theta'}}^2\leq \frac{1}{4}\norm{\Pi_{H(x)^\bot}\ket{\theta}}^2$, so $\frac{1}{2}\ket{\theta'}$ also has minimal error. Furthermore, $\norm{\ket{\theta'}}^2\leq \norm{\ket{\theta}}^2$, with equality if and only if there are no cycles in the decomposition. By the optimality of $\frac{1}{2}\ket{\theta}$ as an approximate positive witness for $x$, we can conclude that $\ket{\theta}=\sum_{i=1}^r\alpha_i\ket{p_i}$, and since $A\ket{p_i}=2\tau$ for all $i$, and $A\ket{\theta}=2\tau$, we have $\sum_{i=1}^r{\alpha_i}=1$. Then
\begin{align}
\norm{\ket{\theta}}^2\leq \max_i\norm{\ket{p_i}}^2=\max_i 2L_i.
\end{align}
Since the longest self-avoiding $st$-path in $G_{\phi}$ has length at most $\subf^{\depth_\wedge}$, and each $L_i$ is the length of a self-avoiding path in $G_{\phi}$, we have $\tilde{w}_+(x,P_{G_\phi})\leq \frac{1}{4}2 \subf^{\depth_\wedge}=\frac{1}{2}\subf^{\depth_\wedge}$. Thus $\widetilde{W}_+(P_{G_\phi})\leq \frac{1}{2}\subf^{\depth_\wedge}$.

Next we prove the bound on $\widetilde{W}_-$. A min-error approximate negative witness for $x$ in $P_{G_\phi}$ is a function $\omega:V(G_\phi)\rightarrow\mathbb{R}$ such that $\omega\tau=\omega(s)-\omega(t)=1$, and $\norm{\omega A\Pi_{H(x)}}^2=\sum_{(u,v,\edgeL )\in\overrightarrow{E}(G_{\phi}(x))}(\omega(u)-\omega(v))^2$ is minimized. By Claim \ref{claim:neg-wit}, since $\omega\tau=1$, the function $\theta:\overrightarrow{E}(G'_\phi)\rightarrow\mathbb{R}$ defined by $\theta((u,v,\edgeL )^\dagger)=\omega(u)-\omega(v)$ is a unit $s't'$-flow on $G'_\phi=G_{\phi'}$, and the witness complexity is 
\begin{align}
\norm{\omega A}^2=\sum_{(u,v,\edgeL )\in \overrightarrow{E}(G_\phi)}(\omega(u)-\omega(v))^2=\sum_{(u',v',\edgeL )\in \overrightarrow{E}(G'_\phi)}\theta(u',v',\edgeL )^2=\|\ket{\theta}\|^2
\end{align}
where we create $\ket{\theta}$ from $\theta$ in the usual way. 
By an argument similar to the previous argument, if $\omega$ is an optimal approximate negative witness for $x$, then $\|\ket{\theta}\|^2$ is upper bounded by twice the length of the longest self-avoiding $s't'$-path in $G_\phi'=G_{\phi'}$. By Lemma \ref{claim:primeForm} and Claim \ref{claim:depths}, this is upper bounded by $2\subf^{\depth_\wedge(\phi')}=2\subf^{\depth_\vee(\phi)}$, where $\depth_\wedge(\phi')$ is the $\wedge$-depth of $\phi'$, and $\depth_\vee=\depth_\vee(\phi)$ is the $\vee$-depth of $\phi$. Thus 
$\tilde{w}_-(x,P_{G_\phi}) \leq 2 \subf^{\depth_\vee}$, and so $\widetilde{W}_-\leq 2\subf^{\depth_\vee}$.
\end{proof}

\subsection{Winning the NAND-tree}\label{app:nandWin}

\noindent We now analyze the algorithm for winning the game associated with a $\textsc{nand}$-tree, proving Lemma~\ref{lemma:select} and Theorem \ref{thm:winning}. 
\select*

\begin{proof}
Since at least one of $x^0$ and $x^1$ is a 1-instance, using the description of $\texttt{Select}$ in Section \ref{sec:nandWin}, at least one of the programs will terminate. Suppose without loss of generality that $\texttt{Est}(x^0)$ is the first to terminate, outputting $w_0$. Then there are two possibilities: $\texttt{Est}(x^1)$ does not terminate after $p(d)\sqrt{w_0}N^{1/4}$ steps, in which case, $R_{s,t}(G_{\textsc{nand}_d}(x^0))\leq 2R_{s,t}(G_{\textsc{nand}_d}(x^1))$, and \texttt{Select} outputs 0; or $\texttt{Est}(x^1)$ outputs $w_1$ before $p(d)\sqrt{w_0}N^{1/4}$ steps have passed and \texttt{Select} outputs $b$ such that $w_b\leq w_{\bar{b}}$. 

We will prove the first case by contradiction. Suppose
\begin{align}
2R_{s,t}(G_{\textsc{nand}_d}(x^1))<R_{s,t}(G_{\textsc{nand}_d}(x^0)).
\end{align}
Then $\texttt{Est}(x^1)$ must terminate after 
\begin{align}\label{eq:St1}
 p(d)\sqrt{R_{s,t}(G_{\textsc{nand}_d}(x^1))}N^{1/4}\leq \frac{1}{\sqrt{2}}p(d)\sqrt{R_{s,t}(G_{\textsc{nand}_d}(x^0))}N^{1/4}
 \end{align} steps. 
In $\mathtt{Select}$, we run $\mathtt{Est}$ to relative accuracy $\eps=1/3$, so we have 
\begin{align}
|w_0-R_{s,t}(G_{\textsc{nand}_d}(x^0))|\leq \frac{1}{3} R_{s,t}(G_{\textsc{nand}_d}(x^0)),
\end{align} 
and so 
\begin{align}\label{eq:St2}
 w_0\geq \frac{2}{3}R_{s,t}(G_{\textsc{nand}_d}(x^0)).
 \end{align} 
Plugging Eq.\ (\ref{eq:St2}) into Eq.\ (\ref{eq:St1}), we have $\texttt{Est}(x^1)$ must terminate after $\frac{1}{\sqrt{2}}p(d)\sqrt{\frac{3}{2}w_0}N^{1/4}<p(d)\sqrt{w_0}N^{1/4}$ steps, which is a contradiction. 

Thus, $R_{s,t}(G_{\textsc{nand}_d}(x^0))\leq 2R_{s,t}(G_{\textsc{nand}_d}(x^1))$, so outputting 0 is correct. Furthermore, since we terminate after $p(d)\sqrt{w_0}N^{1/4}=\widetilde{O}(\sqrt{R_{s,t}(G_{\textsc{nand}_d}(x^0))}N^{1/4})$ steps, and since $R_{s,t}(G_{\textsc{nand}_d}(x^0))=O(R_{s,t}(G_{\textsc{nand}_d}(x^1)))$, the running time is at most $\widetilde{O}\left(N^{1/4}\sqrt{w_{\min}}\right)$.

We now consider the second case, in which both programs output estimates $w_0$ and $w_1$, such that $|w_b-R_{s,t}(G_{\textsc{nand}_d}(x^b))|\leq \eps R_{s,t}(G_{\textsc{nand}_d}(x^b))$ for $b=0,1$. Suppose $w_b\leq w_{\bar b}$. We then have
\begin{equation}
\frac{R_{s,t}(G_{\textsc{nand}_d}(x^b))}{R_{s,t}(G_{\textsc{nand}_d}(x^{\bar b}))}\leq \frac{R_{s,t}(G_{\textsc{nand}_d}(x^b))}{w_b}\frac{w_{\bar b}}{R_{s,t}(G_{\textsc{nand}_d}(x^{\bar{b}}))}\leq \frac{1+\eps}{1-\eps} = \frac{4/3}{2/3}=2. 
\end{equation}
Thus $R_{s,t}(G_{\textsc{nand}_d}(x^b))\leq 2R_{s,t}(G_{\textsc{nand}_d}(x^{\bar b}))$, as required. Furthermore, the running time of the algorithm is bounded by the running time of $\texttt{Est}(x^1)$, the second to terminate. We know that $\texttt{Est}(x^1)$ has running time at most $\widetilde{O}\left(\sqrt{R_{s,t}(G_{\textsc{nand}_d}(x^1))}N^{1/4}\right)$ steps, and by assumption, $\texttt{Est}(x^1)$ terminated after less than $p(d)\sqrt{w_0}N^{1/4}=\widetilde{O}\left(\sqrt{R_{s,t}(G_{\textsc{nand}_d}(x^0))}N^{1/4}\right)$ steps, so the total running time is at most $\widetilde{O}\left(N^{1/4}\sqrt{w_{\min}}\right)$.
\end{proof}

\win*

\begin{proof}
First note that Player $A$ must make $O(d)$ choices
over the course of the game. We amplify Player $A$'s probability
of success by repeating $\mathtt{Select}$ at each decision node $O(\log d)$
times and taking the majority. Then the probability that Player
$A$ chooses the wrong direction at any node is $O(1/d)$, and we ensure that her 
probability of choosing the wrong direction over the course of the algorithm
is $<1/3$. From here on, we analyze the error free case.

Let $p(d)$ be a non-decreasing polynomial function in $d$ such that $\mathtt{Select}$, on inputs $x^0,x^1\in\{0,1\}^{2^d}$, terminates in at most $p(d)2^{d/4}\sqrt{\min\{R_{s,t}(G_{\textsc{nand}_d}(x^0)),R_{s,t}(G_{\textsc{nand}_d}(x^1))\}}$ queries. Then we will prove that for trees of odd depth $d$, the expected number of queries by Player $A$ over the course of the game is at most $p(d)2^{d/4+5}\sqrt{R_{s,t}(G_{\textsc{nand}_{d}}(x))}$,
while for even depth trees, it is at most 
$p(d)2^{d/4+11/2}\sqrt{R_{s,t}(G_{\textsc{nand}_{d}}(x))}$, thus proving the main result.

We prove the result by induction on the depth of the tree. For depth zero trees, there are no decisions, $N=R_{s,t}G_{\textsc{nand}_0}(x)=1$, so the result holds.

For the inductive case, we treat odd and even depth cases separately. First consider an instance of $\textsc{nand}_d$ with $d>0$, $d$ odd. Thus $\textsc{nand}_d(x)=\textsc{nand}_{d-1}(x^0)\wedge\textsc{nand}_{d-1}(x^1)$, where $x=(x^0,x^1)$. Because the root is at odd distance from the leaves, the root is a decision node for Player $B$. Because we are in an $A$-winnable tree, no matter which choice Player $B$ makes, we will end up at an $A$-winnable subtree of depth $d-1$, so the inductive assumption holds for those trees. That is, the expected number of queries for Player $A$ must make to win the subtree with input $x^b$ (for $b\in\{0,1\}$) averaged over Player $B$'s choices is at most 
\begin{align}
p(d-1)2^{(d-1)/4+11/2}\sqrt{R_{s,t}(G_{\textsc{nand}_{d-1}}(x^b))}.
\end{align}

We are assuming that Player $B$ chooses left and right with equal probability.  Thus, the expected number of queries that Player $A$ must make over Player $B$'s choices throughout the game is at most
\begin{eqnarray}
&& \frac{1}{2}\Big(p(d-1)2^{(d-1)/4+11/2}\sqrt{R_{s,t}(G_{\textsc{nand}_{d-1}}(x^0))}\notag\\
&&\qquad\qquad+p(d-1)2^{(d-1)/4+11/2}\sqrt{R_{s,t}(G_{\textsc{nand}_{d-1}}(x^1))}\Big)\notag\\
&\leq & p(d-1)2^{(d-1)/4+11/2}\sqrt{\frac{1}{2}\left(R_{s,t}(G_{\textsc{nand}_{d-1}}(x^0))+R_{s,t}(G_{\textsc{nand}_{d-1}}(x^1))\right)}\quad \mbox{by Jensen's ineq.,}\notag\\
&= & p({d-1})2^{(d-1)/4+11/2}\sqrt{\frac{1}{2}R_{s,t}(G_{\textsc{nand}_{d}}(x))}\quad\mbox{by Claim \ref{claim:parallel_series}},\notag\\
&\leq & p\left(d\right)2^{d/4-1/4+11/2-1/2}\sqrt{R_{s,t}(G_{\textsc{nand}_{d}}(x))}\notag\\
&\leq & p\left(d\right)2^{d/4+5}\sqrt{R_{s,t}(G_{\textsc{nand}_{d}}(x))},
\end{eqnarray}
proving the case for odd $d$.

Now consider an instance of $\textsc{nand}_d$ with $d>0$, $d$ even. Thus $\textsc{nand}_d(x)=\textsc{nand}_{d-1}(x^0)\vee\textsc{nand}_{d-1}(x^1)$, where $x=(x^0,x^1)$. Because the root is at even distance from the leaves, the root is a decision node for Player $A$. Player $A$ runs
 $\mathtt{Select}(x^{0},x^{1})$, which returns $b\in\{0,1\}$ such that (by Lemma~\ref{lemma:select})
\begin{align}\label{eq:select_cond}
R_{s,t}(G_{\textsc{nand}^{d-1}}(x^{ b}))\leq 2
R_{s,t}(G_{\textsc{nand}^{d-1}}(x^{ \bar b})),
\end{align}
which requires at most
\begin{align}\label{eq:stepPlay}
\min_{b^*\in\{0,1\}}p\left({d-1}\right)2^{(d-1)/4}\sqrt{R_{s,t}(G_{\textsc{nand}_{d-1}}(x^{b^*}))}
\end{align}
queries.

After making the choice to move to the subtree with input $x^b$, by the inductive assumption, the expected number of queries that Player $A$ need to make throughout the rest of the game (averaged over Player $B$'s choices) is \begin{align}\label{eq:indStep}
p\left({d-1}\right)2^{d/4+5}\sqrt{R_{s,t}(G_{\textsc{nand}_{d-1}}(x^b))}.
\end{align}

There are two cases to consider. If $R_{s,t}(G_{\textsc{nand}_{d-1}}(x^b))\leq R_{s,t}(G_{\textsc{nand}_{d-1}}(x^{\bar{b}}))$, then combining Eq.~\eqref{eq:stepPlay} and Eq.\ \eqref{eq:indStep}, we have that the total number of queries averaged over Player $B$'s choices~is 
\begin{align}
&p\left({d-1}\right)2^{(d-1)/4}\sqrt{R_{s,t}(G_{\textsc{nand}_{d-1}}(x^b))}+p\left({d-1}\right)2^{(d-1)/4+5}\sqrt{R_{s,t}(G_{\textsc{nand}_{d-1}}(x^b))}\nonumber\\
\leq &
p\left({d-1}\right)2^{(d-1)/4}\sqrt{R_{s,t}(G_{\textsc{nand}_{d-1}}(x^b))}(1+2^{5})\nonumber\\
\leq &
p\left({d-1}\right)2^{(d-1)/4+5+1/16}\sqrt{R_{s,t}(G_{\textsc{nand}_{d-1}}(x^b))}
\nonumber\\
\leq & p\left({d-1}\right)2^{(d-1)/4+5+1/16+1/2}\sqrt{R_{s,t}(G_{\textsc{nand}_{d}}(x))}
\nonumber\\
\leq & p\left(d\right)2^{d/4+11/2}\sqrt{R_{s,t}(G_{\textsc{nand}_{d}}(x))}
\end{align}
where we've used 
$R_{s,t}(G_{\textsc{nand}_d}(x))=\left({R_{s,t}(G_{\textsc{nand}_{d-1}}(x^0))}^{-1}+{R_{s,t}(G_{\textsc{nand}_{d-1}}(x^1))}^{-1}\right)^{-1}$
from Claim \ref{claim:parallel_series} and the fact that $R_{s,t}(G_{\textsc{nand}_{d-1}}(x^b))\leq R_{s,t}(G_{\textsc{nand}_{d-1}}(x^{\bar{b}}))$ to bound the value $R_{s,t}(G_{\textsc{nand}_{d-1}}(x^b))$ by $2R_{s,t}(G_{\textsc{nand}_{d}}(x))$. This proves the even induction step for this case.

The other case is if $R_{s,t}(G_{\textsc{nand}_{d-1}}(x^b))> R_{s,t}(G_{\textsc{nand}_{d-1}}(x^{\bar{b}}))$. In that case, 
again using the fact that $R_{s,t}(G_{\textsc{nand}_d}(x))=\left({R_{s,t}(G_{\textsc{nand}_{d-1}}(x^0))}^{-1}+{R_{s,t}(G_{\textsc{nand}_{d-1}}(x^1))}^{-1}\right)^{-1}$, we have
\begin{equation}
R_{s,t}(G_{\textsc{nand}_{d-1}}(x^{\bar{b}}))=R_{s,t}(G_{\textsc{nand}_d}(x))\left(1+\frac{R_{s,t}(G_{\textsc{nand}_{d-1}}(x^{\bar{b}}))}{R_{s,t}(G_{\textsc{nand}_{d-1}}(x^{{b}}))}\right)^{-1}\leq \frac{2}{3}R_{s,t}(G_{\textsc{nand}_d}(x)),
\end{equation}
where the inequality follows from Eq.\ \eqref{eq:select_cond}. Thus, the average total number of queries is
\begin{align}
&p\left({d-1}\right)2^{(d-1)/4}\sqrt{R_{s,t}(G_{\textsc{nand}_{d-1}}(x^{\bar{b}}))}+p\left({d-1}\right)2^{(d-1)/4+5}\sqrt{R_{s,t}(G_{\textsc{nand}_{d-1}}(x^b))}\nonumber\\
\leq & p(d-1)2^{(d-1)/4}\left(\sqrt{R_{s,t}(G_{\textsc{nand}_{d-1}}(x^{\bar{b}}))}+2^5\sqrt{2R_{s,t}(G_{\textsc{nand}_{d-1}}(x^{\bar{b}}))}\right)\nonumber\\
\leq & p(d-1)2^{(d-1)/4}(1+2^{5+1/2})\sqrt{\frac{2}{3}R_{s,t}(G_{\textsc{nand}_{d}}(x))}\nonumber\\
\leq & p(d)2^{d/4 -1/4+5}\sqrt{R_{s,t}(G_{\textsc{nand}_{d}}(x))}\nonumber\\
\leq & p(d)2^{d/4+5}\sqrt{R_{s,t}(G_{\textsc{nand}_{d}}(x))}.
\end{align}
This proves the induction step for the other case.
\end{proof}

\end{document}